\newcommand{\e}{{e^{\hat{\xi}\theta}}}
\newcommand{\me}{{e^{-\hat{\xi}\theta}}}
\newcommand{\ewoi}{{e^{\hat{\xi}\theta_{wo_i}}}}
\newcommand{\ewoj}{{e^{\hat{\xi}\theta_{wo_j}}}}
\newcommand{\ewol}{{e^{\hat{\xi}\theta_{wo_{h}}}}}
\newcommand{\mewoi}{{e^{-\hat{\xi}\theta_{wo_i}}}}
\newcommand{\mewoj}{{e^{-\hat{\xi}\theta_{wo_j}}}}
\newcommand{\eioi}{{e^{\hat{\xi}\theta_{io_i}}}}
\newcommand{\ewoiota}{{e^{\hat{\xi}\theta_{w o_{\iota}}}}}
\newcommand{\mewoiota}{{e^{-\hat{\xi}\theta_{w o_{\iota}}}}}
\newcommand{\ebioj}{{e^{\hat{\bar{\xi}}\bar{\theta}_{io_j}}}}
\newcommand{\ebioi}{{e^{\hat{\bar{\xi}}\bar{\theta}_{io_i}}}}
\newcommand{\mebioi}{{e^{-\hat{\bar{\xi}}\bar{\theta}_{io_i}}}}
\newcommand{\mebwojs}{{e^{-\hat{\bar{\xi}}\bar{\theta}_{wo_{j^*}}}}}
\newcommand{\ebiotaoiota}{{e^{\hat{\bar{\xi}}\bar{\theta}_{\iota o_{\iota}}}}}
\newcommand{\ebwoiota}{{e^{\hat{\bar{\xi}}\bar{\theta}_{w o_{\iota}}}}}
\newcommand{\mebwoiota}{{e^{-\hat{\bar{\xi}}\bar{\theta}_{w o_{\iota}}}}}
\newcommand{\debioi}{{\dot{e}^{\hat{\bar{\xi}}\bar{\theta}_{io_i}}}}
\newcommand{\debwoi}{{\dot{e}^{\hat{\bar{\xi}}\bar{\theta}_{wo_i}}}}
\newcommand{\ebwoj}{{e^{\hat{\bar{\xi}}\bar{\theta}_{wo_j}}}}
\newcommand{\des}{{\dot{e}^{\hat{\xi}\theta^*}}}
\newcommand{\ebwoi}{{e^{\hat{\bar{\xi}}\bar{\theta}_{wo_i}}}}
\newcommand{\mebwoi}{{e^{-\hat{\bar{\xi}}\bar{\theta}_{wo_i}}}}
\newcommand{\mebwoj}{{e^{-\hat{\bar{\xi}}\bar{\theta}_{wo_j}}}}
\newcommand{\esi}{{e^{\hat{\xi}\theta_{i}^*}}}
\newcommand{\mesiota}{{e^{-\hat{\xi}\theta_{\iota}^*}}}
\newcommand{\mesi}{{e^{-\hat{\xi}\theta_{i}^*}}}
\newcommand{\es}{{e^{\hat{\xi}\theta^*}}}
\newcommand{\mes}{{e^{-\hat{\xi}\theta^*}}}
\newcommand{\tr}{{\rm tr}}
\newcommand{\proj}{{\rm Proj}}
\newcommand{\sk}{{\rm sk}}
\newcommand{\sym}{{\rm sym}}
\newcommand{\N}{\mathcal{N}}
\newcommand{\SO}{SO}
\newcommand{\V}{{\mathcal V}}
\newcommand{\E}{{\mathcal E}}
\newcommand{\R}{{\cal{R}}}
\newtheorem{prop}{Proposition}
\newtheorem{fact}{Fact}
\newtheorem{definition}{Definition}
\newtheorem{remark}{Remark}
\newtheorem{lemma}{Lemma}
\newtheorem{theorem}{Theorem}
\newtheorem{assumption}{Assumption}
\begin{document}
\title{Cooperative Estimation of 3D Target Motion
via Networked Visual Motion Observer}

\author{Takeshi Hatanaka,~\IEEEmembership{Member,~IEEE} and
Masayuki Fujita,~\IEEEmembership{Member,~IEEE} 
\thanks{Takeshi Hatanaka(corresponding author)
and Masayuki Fujita
are with the Department of Mechanical and Control Engineering,
Tokyo Institute of Technology, Tokyo 152-8550,
JAPAN, {\sf hatanaka@ctrl.titech.ac.jp}, {\sf fujita@ctrl.titech.ac.jp}}
}


\maketitle

\begin{abstract}
This paper investigates cooperative estimation 
of 3D target object motion
for visual sensor networks.
In particular, we consider the situation where multiple smart vision cameras 
see a group of target objects.
The objective here is to meet 
two requirements simultaneously:
averaging for static objects
and tracking to moving target objects.
For this purpose, we 
present a cooperative estimation mechanism 
called networked visual motion observer.
We then derive an upper bound of the ultimate error between
the actual average and the estimates produced by the present networked 
 estimation mechanism.
Moreover, we also analyze 
the tracking performance of the estimates to moving target objects.
Finally the effectiveness of the networked visual motion observer
is demonstrated through simulation.
\end{abstract}
\begin{IEEEkeywords}
\noindent 
Cooperative estimation,
Visual-based observer,
Averaging,
Passivity,
Visual sensor network
\end{IEEEkeywords}

\IEEEpeerreviewmaketitle

\section{Introduction}
\label{sec:1}

A visual sensor network \cite{SH09,ZM_CDC09_1} is 
a kind of wireless sensor network consisting of 
spatially distributed smart cameras with 
communication and computation capability.
Unlike other sensors measuring values such as 
temperature and pressure, vision sensors do not provide
explicit data but combining image processing techniques or
human operators gives rich information on situation awareness such as
what happens, what a target is, where it is and where it bears.
Due to their nature,
visual sensor networks are useful in environmental monitoring,
surveillance, target tracking and entertainment
and are expected as a component of sustainable infrastructures. 

A lot of research works have been devoted to
fusing control techniques with visual information
so-called visual feedback control or images in the loop
\cite{vision}--\cite{HF_MSC10}.
The motivating scenarios of the fusion currently spread
over the robotic systems into security and surveillance
systems, medical imaging procedures, human-in-the-loop
systems and even understanding biological perceptual
information processing. 
Driven by the technological innovations of the smart wearable
cameras, the aforementioned networked vision system also emerges
as a challenging new application field of the visual 
feedback control and estimation.

In this paper, we focus on estimation of
3D rigid body motion as in \cite{DFZD_ACC10}--\cite{HF_MSC10},
and reconsider the problem not for
a single camera system but for the networked vision systems.
In particular, we aim at an extension of \cite{TCST07}
from the single camera to visual sensor networks,
where the paper \cite{TCST07} presents a vision-based observer
called {\it visual motion observer} \cite{HF_MSC10}
estimating 3D target object motion from 2D vision data.
In visual sensor networks, it is expected that not only
an estimate is produced but also the vision cameras 
cooperate with each other in an efficient manner,
which brings us new theoretical challenges.
The advantages of cooperation are: 
(i) accurate estimation by integrating rich information, 
(ii) tolerance against
obstruction, misdetection in image processing
and sensor failures
and (iii) wide vision and elimination of blind areas
by fusing images of a scene from a variety of viewpoints. 
To tackle such distributed estimation problems,
cooperative control  as in
\cite{OFM_IEEE07}--\cite{TCST09}
provides useful methodologies.
In this paper, we especially focus on passivity-based
cooperative control schemes
investigated  in \cite{A_TAC06}--\cite{TCST09}.

Cooperative estimation for sensor networks has been addressed
in \cite{XBL}--\cite{KKJM_CDC10}. 
The main objective of these researches is averaging the local 
measurements or local estimates among sensors 
in a distributed fashion in order to improve estimation accuracy.
For this purpose, most of the
works utilize the consensus protocol \cite{OFM_IEEE07}
in the update of the local estimates.
While \cite{XBL,TVT_CDSC08} assume that 
parameters to be estimated are fixed,
\cite{OS_CDC05a}--\cite{KKJM_CDC10} address estimation
of dynamic parameters assuming that
the parameters follow some dynamical system.
Among them, \cite{OS_CDC05a}--\cite{KM_TSP08} execute a large number of
consensus iterations between each update of estimates,
which is hardly applicable to dynamic estimation problems
except for the case of slow dynamics.
Meanwhile, \cite{BFL_CDC10} and \cite{KKJM_CDC10} present estimation algorithms
without using such iterations.
Unfortunately, however, most of these algorithms are not applicable to our problem
since the object's pose takes values in a non-Euclidean space
and the consensus scheme on a vector space \cite{OFM_IEEE07} does 
not work there. 

Meanwhile, average computation in the group of rotations
is tackled by 
\cite{TVT_CDSC08}, \cite{M_SIAM02}, \cite{M_CARVC04}.
The paper \cite{M_SIAM02} defines two types average rotations, Euclidean and Riemannian means,
and derives their fundamental properties.
Reference \cite{M_CARVC04} presents a computational algorithm
of the Riemannian mean and analyzes its convergence.
The paper \cite{TVT_CDSC08} presents a distributed version of the
algorithm in \cite{M_CARVC04}
based on the consensus protocol \cite{OFM_IEEE07},
which is motivated by the visual sensor networks.
However, \cite{TVT_CDSC08} focuses on averaging by assuming 
that the target orientations are obtained {\it a priori}
and the scheme cannot be essentially extended 
to dynamic estimation problems.


In this paper, we present a novel cooperative estimation 
mechanism called {\it networked visual motion observer}.
We consider the situation where multiple smart vision cameras 
capture a group of target objects.
Under the situation, the objective of the present estimation mechanism is to meet 
two requirements simultaneously: {\it averaging for static objects}, which means 
gaining estimates close to an average of multiple target objects' poses, 
and {\it tracking to moving target objects}, which means that
the estimates track the moving average within a bounded error.
Namely, the present mechanism deals with both static and dynamic 
estimation problems.
For this purpose, we first 
present the {\it networked visual motion observer}, which
consists of the visual feedback 
and mutual feedback from neighboring vision cameras,
based on the 
passivity-based visual motion observer \cite{TCST07}
and the 
passivity-based pose synchronization law presented in \cite{TCST09}.

We next evaluate the averaging performance attained
by the networked visual motion observer.
For this purpose, we define a notion of approximate averaging
by using the ultimate error between
the actual average and the estimates produced by the present observer. 
Then, we derive an upper bound of the ultimate error, whose
partial solution is already given in \cite{ACC11,CDC11} and this paper provides its generalized version.
The result gives us an insight into the gain selection such that
average estimation becomes accurate if mutual feedback 
is much stronger than visual feedback.

We moreover evaluate the tracking performance
of the estimates to moving target objects.
Here, we view the body velocities of the target objects
as a disturbance of the total networked system
and evaluate the ultimate distance from the estimates
to the average.
We see from the result an insight
that choosing a large visual feedback gain
results in a good tracking performance.

Finally, we demonstrate the effectiveness of the present networked visual
motion observer
and validity of the theoretical results
through simulation.

The organization of this paper is as follows.
Section \ref{sec:2} explains the situation under consideration
in this paper and formulates the visual sensor networks 
together with the objective to be met.
In Section \ref{sec:3}, after introducing the visual motion observer
\cite{TCST07}, we present the networked visual motion observer.
Section \ref{sec:4}  clarifies accuracy of the average estimation
when the present estimation mechanism is applied to the network of 
vision cameras.
Section \ref{sec:6} clarifies the
tracking performance of the estimates when the target objects are moving.
Verifications through simulation are shown in Section \ref{sec:7}.
Finally, Section \ref{sec:8} draws conclusions.

We finally give some notations used in this paper,
where the readers are recommended to refer to
\cite{vision} for details on the terminologies.
Throughout this paper, we use the notation
$e^{\hat{\xi}_{ab}\theta_{ab}}\in\R^{3\times 3}$ to represent the rotation 
matrix of a frame $\Sigma_{b}$ relative to a frame $\Sigma_{a}$,
which is orthogonal with unit determinant and hence an element of
the Lie group $SO(3):= \{R\in \R^{3\times 3}|\ R^TR=I_3\mbox{ and }\det(R)=+1\}$.
The vector $\xi_{ab}\in \R^{3}$ specifies the rotation axis and
$\theta_{ab} \in \R$ is the rotation angle.
For simplicity we use ${\xi}\theta_{ab}$ to denote 
${\xi}_{ab}\theta_{ab}$. 
The configuration space of the rigid body motion is the product space 
$SE(3) := \R^3 \times SO(3)$.
We use the $4 \times 4$ matrix
$g_{ab}=\left[\begin{array}{cc}
e^{\hat{\xi}\theta_{ab}}&p_{ab}\cr
0&1
\end{array}
\right]$
as the homogeneous representation of
$g_{ab}=(p_{ab},e^{\hat{\xi}\theta_{ab}})\in SE(3)$
describing the configuration of $\Sigma_{b}$
relative to $\Sigma_{a}$.
The notation {\rm `$\wedge$'} is the operator 
such that $\hat{a}b=a\times b$ for the vector cross-product $\times$,
i.e. $\hat{a}$ is a $3\times3$ skew-symmetric matrix. 
The vector space of all $3\times 3$ skew-symmetric matrices is 
denoted by $so(3)$. 
The notation {\rm `$\vee$'} denotes the inverse operator to
{\rm `$\wedge$'}.
Similarly to the definition of $so(3)$, we define 
$se(3):=\{(v,\hat{\omega}):v\in\R^{3},\hat{\omega}\in so(3)\}$. 
In homogeneous representation, we write an element $V:=(v,\omega)$ 
as 
$\hat{V}=\left[\begin{array}{cc}
\hat{\omega}&v\cr
0&0
\end{array}\right]$.

\section{Preparation for Visual Sensor Networks}\label{sec:2}

Let us consider the situation where $n$ vision cameras
$\V := \{1, \cdots, n\}$ with
communication and computation capability
see a group of target objects $\{o_i\}_{i \in \V}$ 
(Fig.~\ref{fig:Ad_VIM}), 
where each vision camera $i\in \V$ captures object $o_i$ on its image plane.
Throughout this paper, we use the pinhole-type
vision cameras with perspective projection \cite{vision}
as in Fig.~\ref{fig:camera}.
Note however that 
all of the subsequent discussions are applicable to panoramic  cameras
through the modifications in \cite{ACC10}.

In this paper, we address estimation of average motion 
of the objects $\{o_i\}_{i \in \V}$.
The problem includes a scenario such that all the cameras
see a common single target object but the pose consistent with vision data
differs from camera to camera due to incomplete localization
and parametric uncertainties.
Under such a situation, averaging the contaminated poses is
a way to improve estimation accuracy \cite{OS_CDC07}.


\begin{figure}[t]
\centering
\begin{minipage}{6.7cm}
\includegraphics[width=6.7cm]{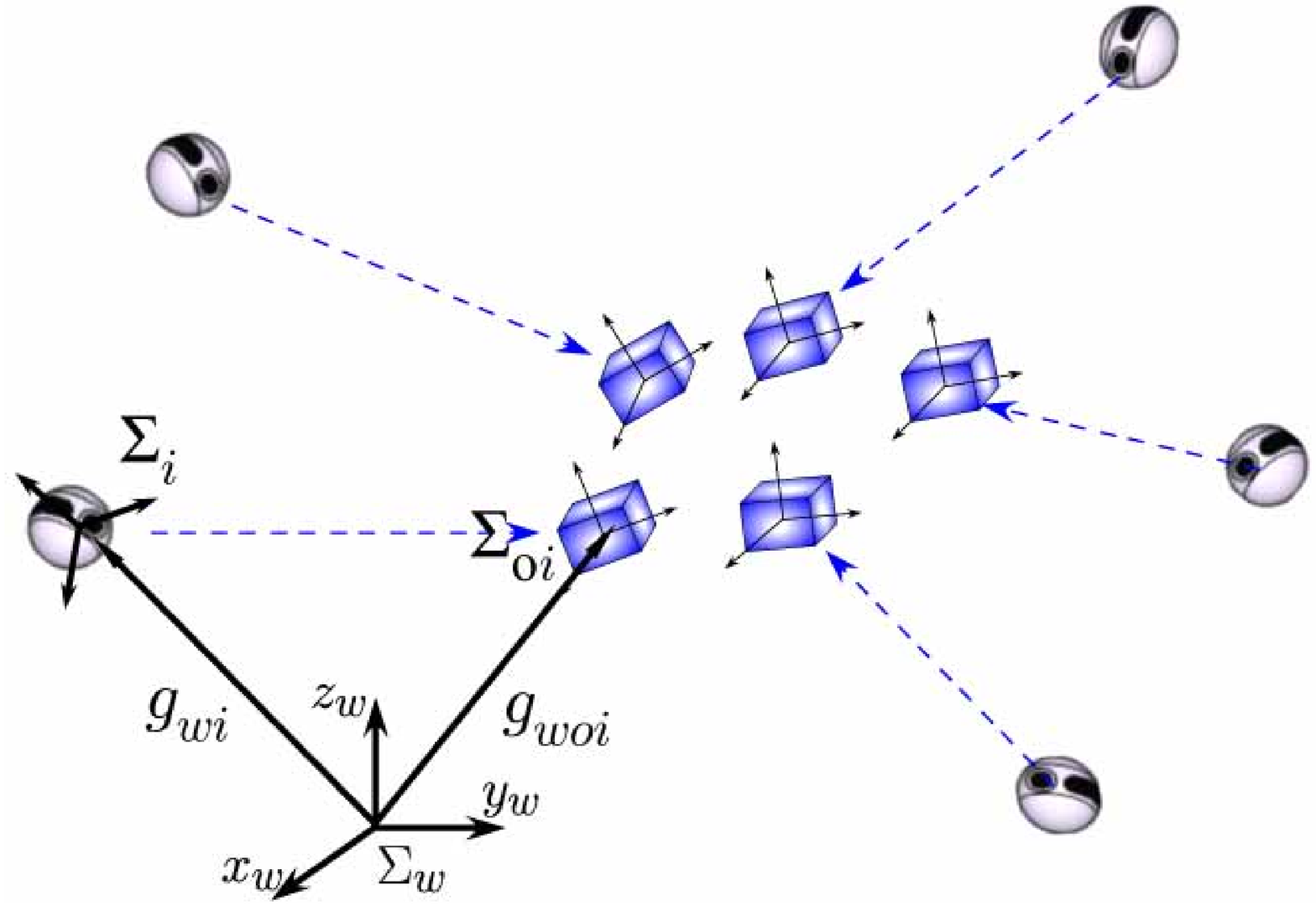}
\caption{Visual Sensor Networks}
\label{fig:Ad_VIM}
\end{minipage}
\hspace{1cm}
\begin{minipage}{5.5cm}
\centering
{\includegraphics[width=5.5cm]{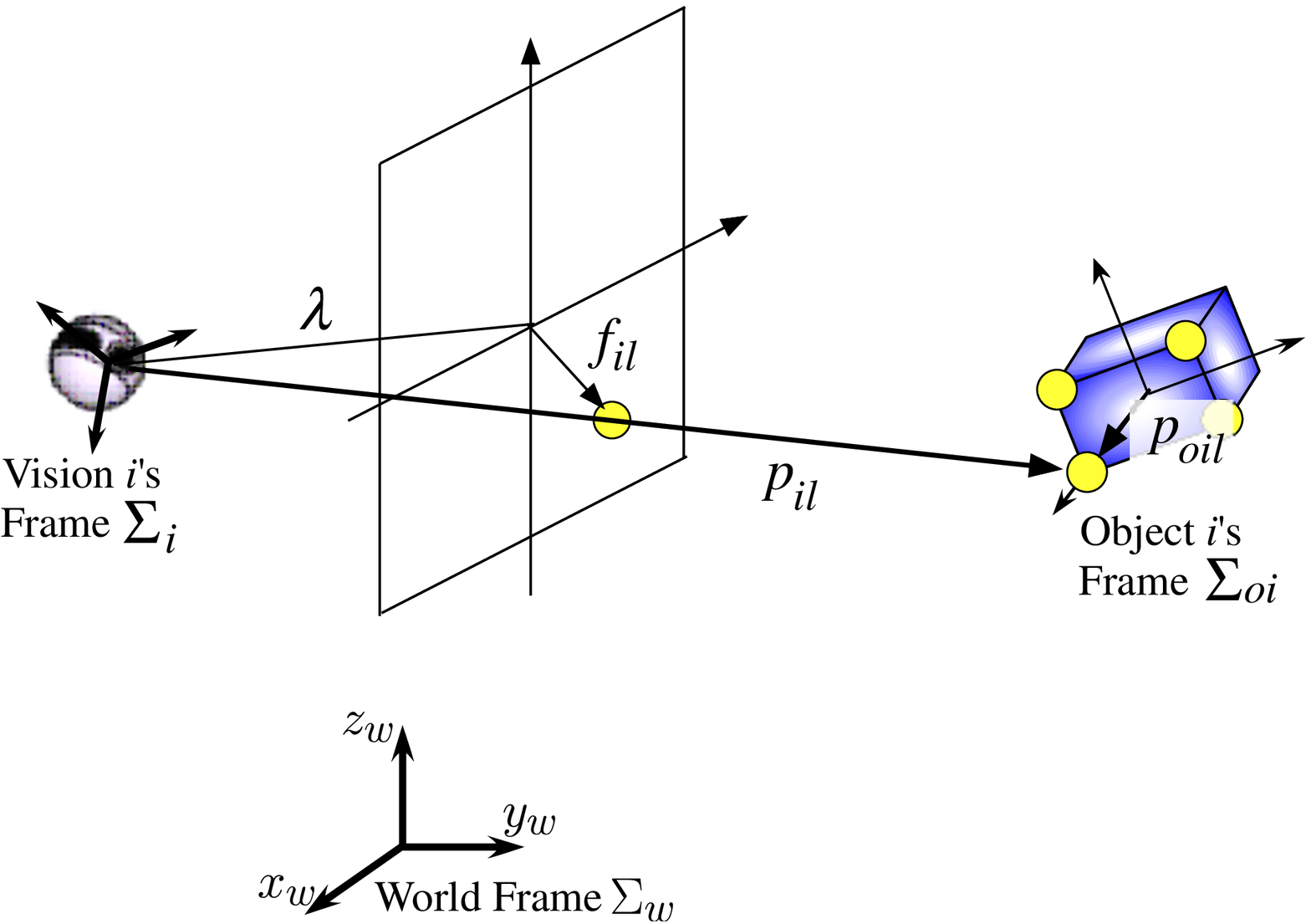}}
\caption{Vision Camera Model}
\label{fig:camera}
\end{minipage}
\end{figure}

\subsection{Rigid Body Motion}
\label{sec:2.1}

Let the coordinate frames $\Sigma_{w}$, $\Sigma_{i}$ and $\Sigma_{o_i}$
represent the world frame, the $i$-th vision camera frame, and
the frame of object $o_i$, respectively.
The pose of vision camera $\Sigma_i$ and object $\Sigma_{o_i}$
relative to the world frame $\Sigma_w$
are denoted by $g_{wi}=(p_{wi},e^{\hat{\xi}\theta_{wi}})\in SE(3)$
and $g_{wo_i}=(p_{wo_i},e^{\hat{\xi}\theta_{wo_i}})\in SE(3)$.
Then, the pose of $\Sigma_{o_i}$ relative to $\Sigma_{i}$, denoted
by $g_{io_i}=(p_{io_i},e^{\hat{\xi}\theta_{io_i}})\in SE(3)$,
can be represented as $g_{io_i} = g_{wi}^{-1}g_{wo_i}$.

We next define the body velocity of object ${o_i}$ relative to the world
frame $\Sigma_{w}$ as $V_{wo_i}^{b} = (v_{wo_i}, \omega_{wo_i})\in \R^6$,
where $v_{wo_i}$ and $\omega_{wo_i}$ respectively
represent the linear and angular velocities
of the origin of $\Sigma_{o_i}$
relative to $\Sigma_{w}$ \cite{vision}.
Similarly, vision camera $i$'s body velocity relative to $\Sigma_{w}$
will be denoted as $V_{wi}^{b} = (v_{wi}, \omega_{wi})\in \R^6$.

By using the body velocities $V_{wi}^{b}$ and $V_{wo_i}^{b}$,
the motion of the relative pose $g_{io_i}$ is written as
\begin{eqnarray}
\dot{g}_{io_i}=-\hat{V}_{wi}^{b}g_{io_i}+g_{io_i}\hat{V}_{wo_i}^{b}\label{eqn:RRBM}
\end{eqnarray}
\cite{vision}.
Equation (\ref{eqn:RRBM}) is called relative rigid body motion
whose
block diagram is depicted in Fig. \ref{fig:RRBM}.


\subsection{Visual Measurement}
\label{sec:2.2}

In this subsection, we define visual measurements
of each vision camera which is available for estimation.
We assume that each target object 
has $m$ feature points and each vision camera can extract them from the
vision data by using some techniques like \cite{SURF}.
The position vectors of the target object $i$'s $l$-th feature point
relative to $\Sigma_{o_i}$ and $\Sigma_{i}$ are
denoted by $p_{o_il}\in\R^{3}$ and $p_{il}\in\R^{3}$ respectively.
Using a transformation of the coordinates, we have
$p_{il}=g_{io_i}p_{o_il}$,
where $p_{o_il}$ and $p_{il}$ should be regarded with a slight abuse
of notation as $[p_{o_il}^{T}\ 1]^{T}$ and $[p_{il}^{T}\ 1]^{T}$.

Let the $m$ feature points of object $o_i$ on
the image plane coordinate 
be the measurement $f_i$ of camera $i$, which is
given by the perspective projection 
\cite{vision} with  a focal length
$\lambda_i$ as
\begin{eqnarray}
f_{i}:=[f_{i1}^{T}\ \cdots\ f_{im}^{T}]^{T}\in\R^{2m},\ \
f_{il}=\frac{\lambda_i}{z_{il}}
[{x_{il}}\ \
y_{il}]
,\
p_{il}=[x_{il}\ y_{il}\ z_{il}]^{T},
\label{eqn:f_i}
\end{eqnarray}
Under the assumption that each camera $i$
knows the location of feature points $p_{o_il}\in\R^{3}$, 
the visual measurement $f_{i}$ depends
only on the relative pose $g_{io_i}$ from (\ref{eqn:f_i}) and $p_{il}=g_{io_i}p_{o_il}$.
Fig.~\ref{fig:RRBMtoImage} shows the block diagram of 
the relative rigid body motion with the camera model.
\begin{figure}[t]
\centering
\begin{minipage}{7cm}
\centering
\includegraphics[width=6cm]{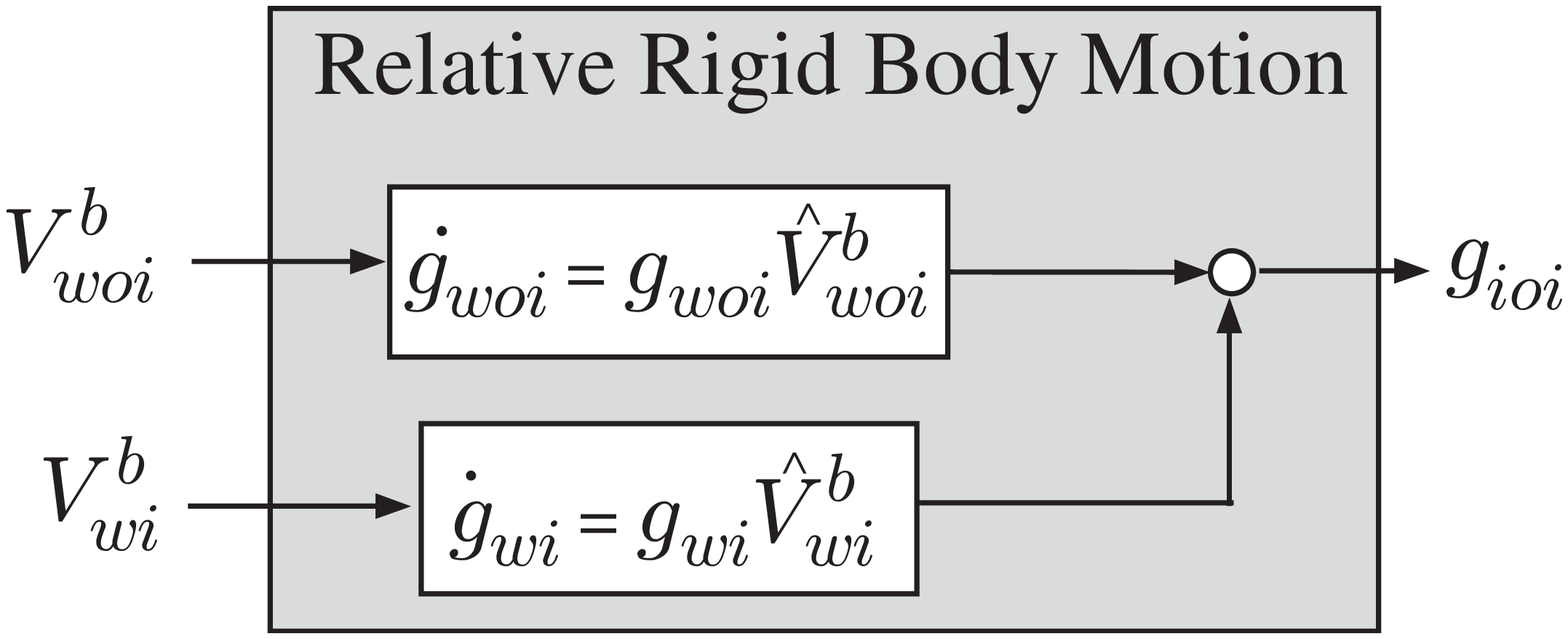}
\caption{Block Diagram of 
Relative Rigid Body Motion}
\label{fig:RRBM}
\end{minipage}
\hspace{.5cm}
\begin{minipage}{8.5cm}
\centering
\includegraphics[width=8.5cm]{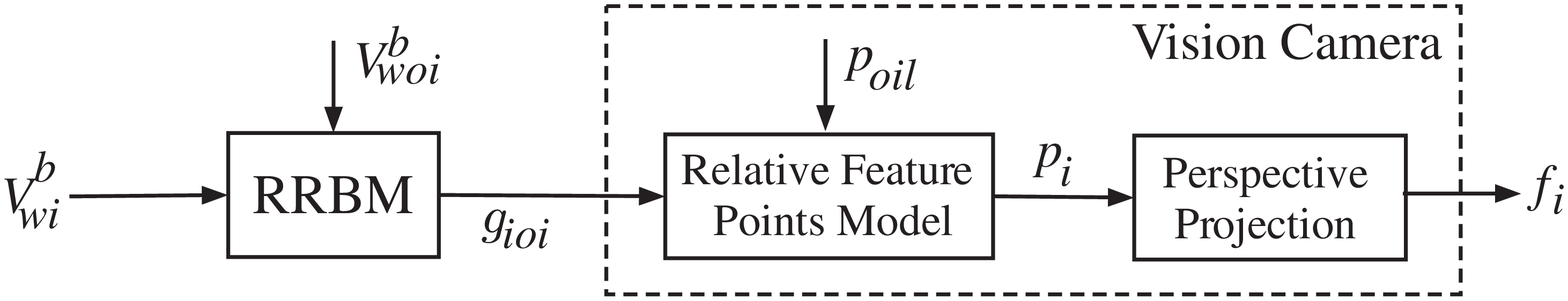}
\caption{Block Diagram of the RRBM with Vision Camera
(RRBM is an acronym for Relative Rigid Body Motion)}
\label{fig:RRBMtoImage}
\end{minipage}
\end{figure}

\subsection{Communication}
\label{sec:2.3}

The vision cameras have communication capability with the neighboring cameras
and form a network.
The communication is modeled by a digraph $G=(\V,\E)$, where
$\E \subset \V \times \V$ as in the left figure of Fig. \ref{fig:graph1-2}.
Namely, vision camera $i$ can get some information
from $j$ if $(j,i) \in \E$.
In addition, we define the neighbor set $\N_i$ of vision camera $i\in \V$ as
\begin{equation}
\N_i := \{j\in\V |\ (j,i) \in \E\}.
\label{eqn:neighbor}
\end{equation}
Let us now employ the following assumption on the graph $G$.
\begin{assumption}
\label{ass:1}
The communication graph $G$ is fixed, balanced and
strongly connected.
\end{assumption}
The balanced and strongly connected graph is a graph
such that there exists at least one directed path 
between any pair of nodes 
and the in-degree and out-degree are equal for
all nodes \cite{bullo}.
\begin{figure}[t]
\begin{center}
\begin{minipage}{4.5cm}
\begin{center}
\includegraphics[width = 4.5cm]{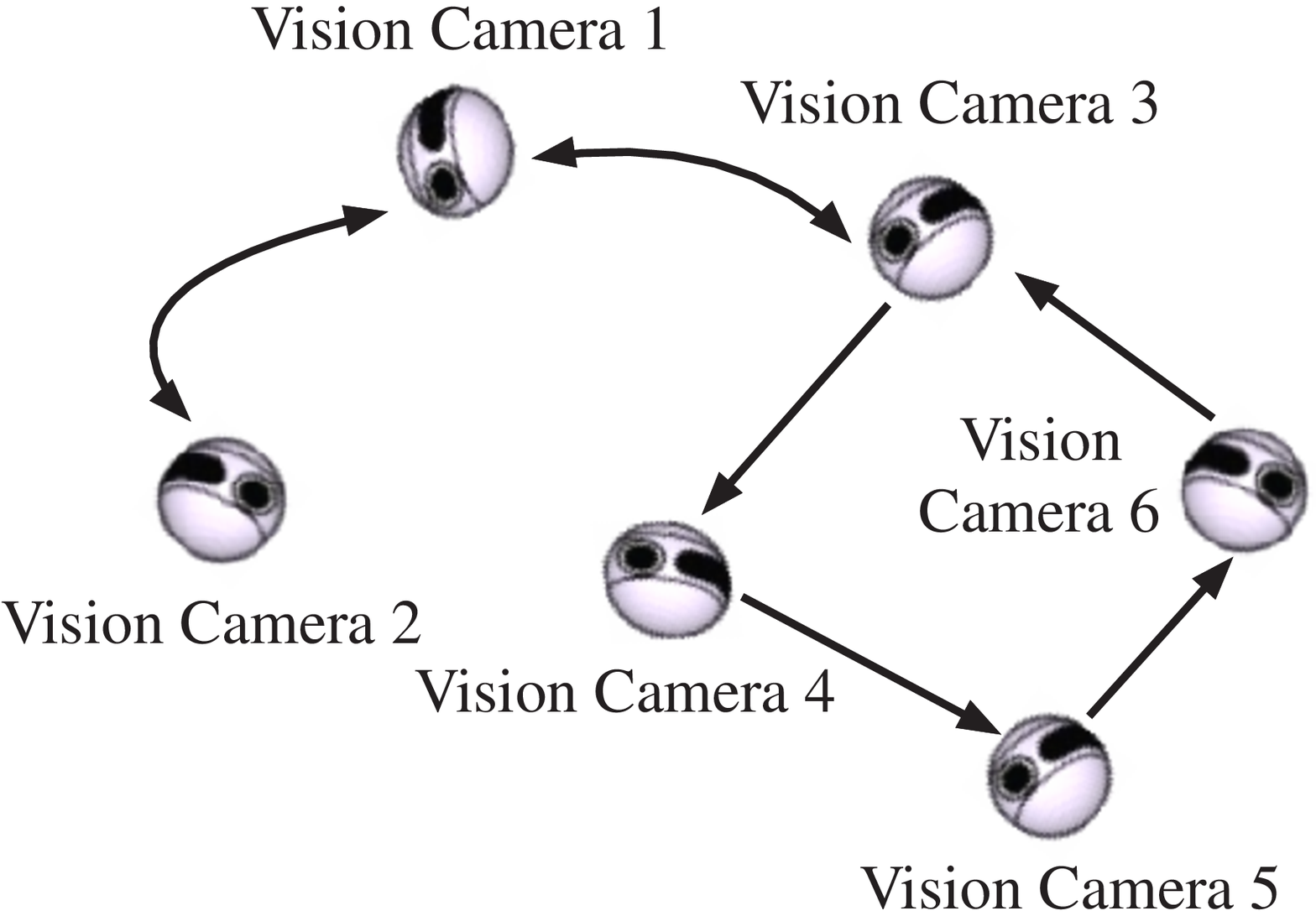}
\end{center}
\end{minipage}
\hspace{.5cm}
\begin{minipage}{5.5cm}
\begin{center}
\includegraphics[width = 5.5cm]{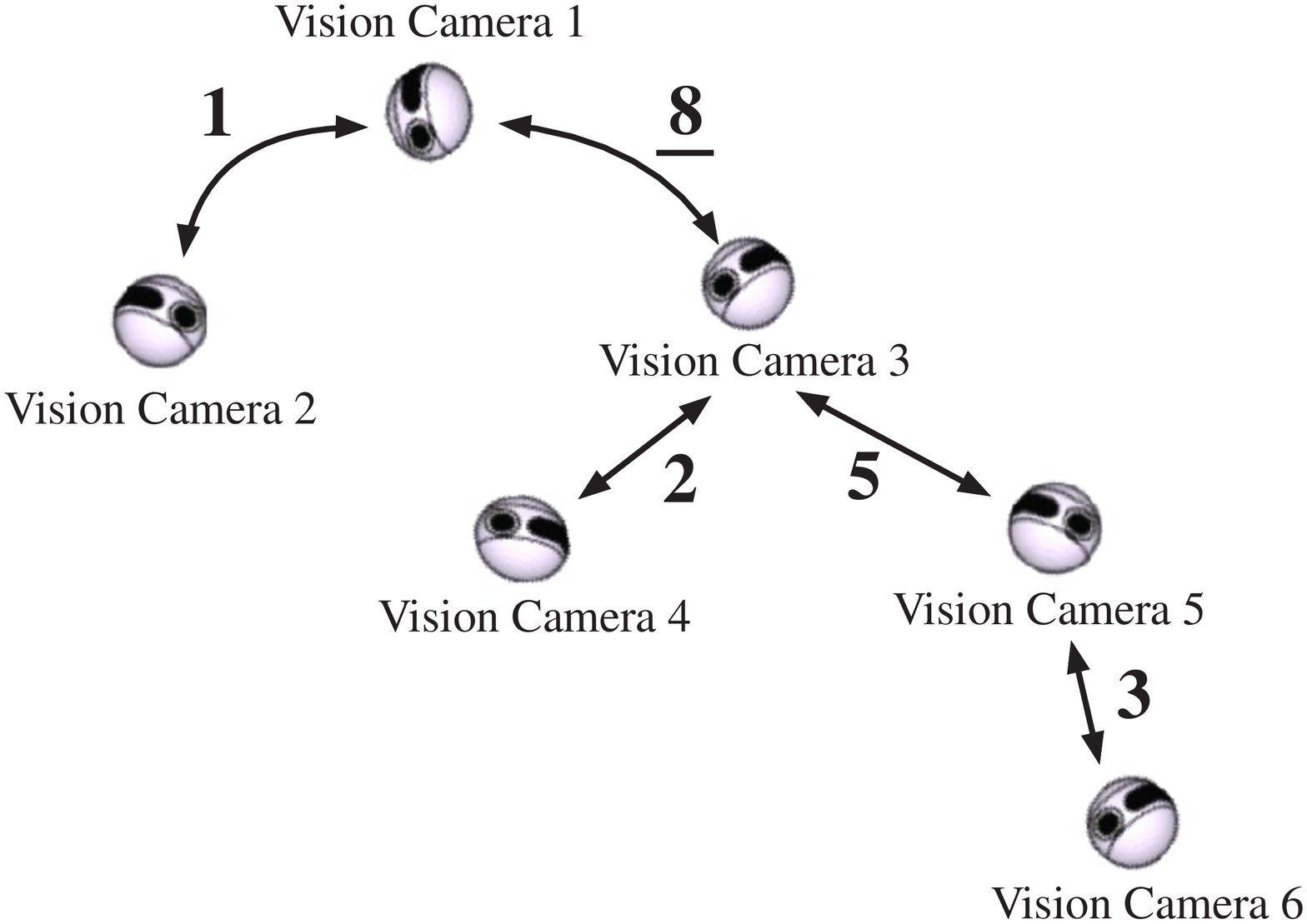}
\end{center}
\end{minipage}
\hspace{.5cm}
\begin{minipage}{4.7cm}
\begin{center}
\includegraphics[width = 4.7cm]{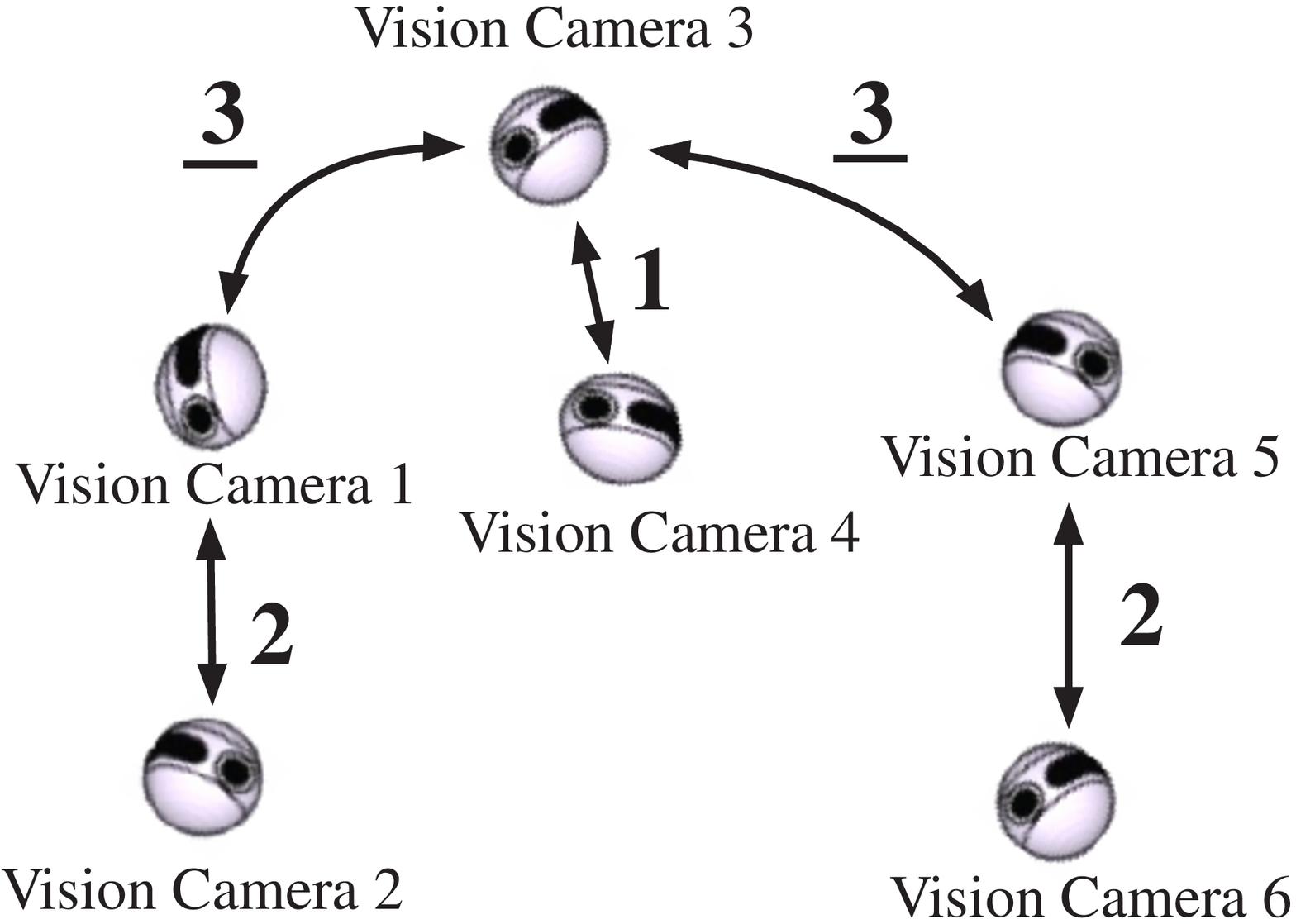}
\end{center}
\end{minipage}
\caption{Left: Communication Graph, Middle:
Tree with Root $1$ Minimizing $\tilde{D}$ 
($\tilde{D}=8$), Right: Tree Minimizing ${D}$ ($W = 3$)}
\label{fig:graph1-2}
\end{center}
\end{figure}

We also denote by $G_u$
the undirected graph produced by replacing all the directed edges of
$G$ by the undirected ones.
Let ${\mathcal T}(i_0)$ be the set of all 
spanning trees over $G_u$ with a root $i_0 \in \V$
and we consider an element $G_T = (\V, \E_T)\in {\mathcal T}(i_0)$.
Let the path from $i_0$ to a node $i\in \V$ along with the tree $G_T$
be denoted by $P_{G_T}(i) = (v_0,\cdots, v_{d_{G_T}(i)}),\ v_0 = i_0,\ v_{d_{G_T}(i)} = 
i,\ (v_l,v_{l+1}) \in \E_T\ {\forall l}\in \{0, \cdots,d_{G_T}(i)-1\}$,
where $d_{G_T}(i)$ denotes the length of the path $P_{G_T}(i)$.
We also define 
\[
\delta_{G_T}(E;i) =\left\{
\begin{array}{ll}
1,&\mbox{if the path } P_{G_T}(i) \mbox{ includes edge } E\\
0,&\mbox{otherwise}
\end{array}
\right.
\]
for any $E \in \E_T$.
By using the above notations, we define
\begin{equation}
 W := \min_{i_0 \in \V} D(i_0),\ D(i_0) := \min_{G_T \in {\mathcal T}(i_0)}\tilde{D}(G_T),\
\tilde{D}(G_T) := \max_{E\in \E_T}\sum_{i\in \V}\delta_{G_T}(E;i)d_{G_T}(i).
\label{eqn:defD}
\end{equation}

For example, let us consider the communication graph in Fig. \ref{fig:graph1-2}(Left).
Suppose that we choose $i_0 = 1$ and build a tree depicted in the middle
figure of Fig. \ref{fig:graph1-2}, where the number at around 
each edge is the value of $\sum_{i\in \V}\delta_{G_T}(E;i)d_{G_T}(i)$.
Namely, $\tilde{D}$ is equal to $8$ for the tree and it is
actually minimal for all spanning trees in ${\mathcal T}(1)$.
However, choosing another node as a root can reduce the value of $\tilde{D}$.
Indeed, as illustrated in the right figure of Fig. \ref{fig:graph1-2},
a tree with $i_0 = 3$ achieves $\tilde{D} = 3$, which is the minimal 
$D(i_0)$ among all the choices of the
 root $i_0$.

\subsection{Average on $SO(3)$ and $SE(3)$}
\label{sec:2.4}

In this paper, the tuple of the 
relative rigid body motion (\ref{eqn:RRBM}),
the visual measurement (\ref{eqn:f_i}) and the communication
structure (\ref{eqn:neighbor}) is called a visual sensor network.
The objective of this paper is to present
a cooperative estimation mechanism for the visual sensor networks 
meeting the following requirements simultaneously:
{\it Averaging for static objects}, which means each camera $i$ estimates a pose 
close to an average of $\{g_{io_j}\}_{j \in \V},\ g_{io_j} := g_{wi}^{-1}g_{wo_j}$,
{\it Tracking to moving objects}, which means the estimates track the moving average pose
within a bounded tracking error.

Let us now introduce
the following mean $g^*$ on $SE(3)$
as an average of target poses $\{g_{wo_j}\}_{j \in \V}$.
\begin{eqnarray}
&&\hspace{-.7cm}g^* = (p^*,\es) := \arg\min_{g \in SE(3)}\sum_{j \in \V}
\psi(g^{-1}g_{wo_j}),
\label{eqn:average}
\end{eqnarray}
where the function $\psi$ is defined for any $g = (p, \e)\in SE(3)$ as
\begin{eqnarray}
&&\psi(g) := \frac{1}{2}\|I_4 - g\|_F^2 =
\frac{1}{2}\|p\|^2 + \phi(\e),\
\phi(\e) := \frac{1}{2}\|I_3 - \e\|_F^2
= \tr(I_3 - \e)
\label{eqn:psi}
\end{eqnarray}
and $\|M\|_F$ is the matrix Frobenius norm of matrix $M$.
Hereafter, we also use the notation 
\[
 g_{i}^* = (p_i^*, \esi):= \arg\min_{g_i\in SE(3)}\sum_{j\in \V}\psi(g_i^{-1}g_{io_j})
= g_{wi}^{-1}g^*. 
\]

The position average $p^*$ is equal to
the arithmetic mean
$p^* = \frac{1}{n}\sum_{j\in \V}p_{wo_j}$
of target positions $\{p_{wo_j}\}_{j\in\V}$ and the orientation
average $\es$ is
a so-called Euclidean mean \cite{M_SIAM02} 
of target orientations $\{\ewoj\}_{j\in\V}$ defined by
\begin{equation}
\es := \arg\min_{\e\in\SO(3)}\sum_{j\in\V}\phi(\me\ewoj).
\label{eqn:euclidean_mean}
\end{equation}
It is known \cite{M_SIAM02} that the Euclidean mean $\es$ is given by
\begin{equation}
\es(t) = {\rm Proj}\left(S(t)\right),\ 
S(t):=\frac{1}{n}\sum_{j\in \V}\ewoj(t).
\label{proj}
\end{equation}
Here, ${\rm Proj}(M)$ is the orthogonal projection of $M\in \R^{3\times 
3}$
onto $SO(3)$, which is given by $U_MV_M^T$ for the matrix
$M$ with singular value decomposition $M=U_M\Sigma V_M^T$ \cite{M_SIAM02}.

\begin{remark}
\label{rem:1}
Just computing the Euclidean mean
is not so difficult even in a distributed fashion
if we have prior knowledge that the target object is static.
Indeed, the matrix $S$ is computed by using the consensus protocol
under appropriate assumptions on the graph \cite{OFM_IEEE07}
and the operation ${\rm Proj}$ can be locally executed.
However, such a scheme works only for static objects
and never embodies tracking nature for moving target objects.
The objective here is to present an estimation mechanism without
using any prior knowledge and any decision-making process on
whether the targets are static or moving.
\end{remark}




\section{Networked Visual Motion Observer}
\label{sec:3}

\subsection{Visual Motion Observer} 
\label{sec:3.1}

In this subsection, we consider the problem that
vision camera $i$ estimates the target object motion
$g_{io_i}$ from the visual measurements $f_i$
without considering communication.
For the purpose, we introduce the {\it visual motion observer}
presented in \cite{TCST07}.


We first prepare a model of the rigid body motion (\ref{eqn:RRBM})
similarly to the Luenberger observer as
\begin{eqnarray}
\dot{\bar{g}}_{io_i}=-\hat{V}_{wi}^{b}\bar{g}_{io_i}+\bar{g}_{io_i}\hat{u}_{ei},\label{eqn:EsRRBM}
\end{eqnarray}
where $\bar{g}_{io_i}=(\bar{p}_{io_i},e^{\hat{\bar{\xi}}\bar{\theta}_{io_i}})$
is the estimate of the actual relative pose $g_{io_i}$.
The input $u_{ei}=(v_{uei},\ \omega_{uei})$ is to be
determined to drive the estimated value $\bar{g}_{io_i}$
to the actual $g_{io_i}$.

In order to establish the estimation error system,
we define the estimation error between the estimated value
$\bar{g}_{io_i}$ and the actual relative rigid body motion $g_{io_i}$ as
$g_{ei}=(p_{ei}, e^{\hat{\xi}\theta_{ei}})
:=\bar{g}_{io_i}^{-1}g_{io_i}$.
Using the notations
$e_{R}(e^{\hat{\xi}\theta}):=\sk(e^{\hat{\xi}\theta})^{\vee}$
and $\sk(e^{\hat{\xi}\theta}) 
:= \frac{1}{2}(e^{\hat{\xi}\theta}-e^{-\hat{\xi}\theta})$,
the vector representation of the estimation error $g_{ei}$ is given by
\begin{eqnarray}
e_{ei}:=E_R(g_{ei}),\ E_R(g_{ei}) :=
\left[\begin{array}{cc}
p_{ei}^{T}&e_{R}^{T}(e^{\hat{\xi}\theta_{ei}})\end{array}\right]^{T}.
\label{eqn:eei}
\end{eqnarray}

Once the estimate $\bar{g}_{io_i}$ is determined,
the estimated measurement $\bar{f}_{i}$ is
also computed by (\ref{eqn:f_i}).
Let us now define the visual measurement error as $f_{ei}:=f_{i}(g_{io_i})-\bar{f}_i(\bar{g}_{io_i})$.
Then, the measurement error vector $f_{ei}$ can be approximately given by 
$f_{ei}=J_i(\bar{g}_{io_i})e_{ei}$
\cite{TCST07}, where $J_i(\bar{g}_{io_i}):SE(3)\rightarrow\R^{2m\times 6}$ is
the well-known image Jacobian. 
Now, if $m\geq4$, the image Jacobian has the full column rank
and the estimation error vector $e_{ei}$ is reconstructed
as
\begin{eqnarray}
e_{ei}=J_i^{\dagger}(\bar{g}_{io_i})f_{ei}, \label{eqn:e_e}
\end{eqnarray}
where $\dagger$ denotes the pseudo-inverse.

Differentiating $g_{ei}=\bar{g}_{io_i}^{-1}g_{io_i}$ with respect to time
and using (\ref{eqn:RRBM}) and (\ref{eqn:EsRRBM}), we obtain
the estimation error system
\begin{eqnarray}
\dot{g}_{ei}=-\hat{u}_{ei}g_{ei}+g_{ei}\hat{V}_{wo_i}^{b}\label{eqn:vmo_update}.
\end{eqnarray}
Fig.~\ref{fig:EstErrSys} shows the block diagram of the
 system (\ref{eqn:vmo_update}).
\begin{figure}[t]
\centering
\begin{minipage}{7.5cm}
{\includegraphics[width=7.5cm]{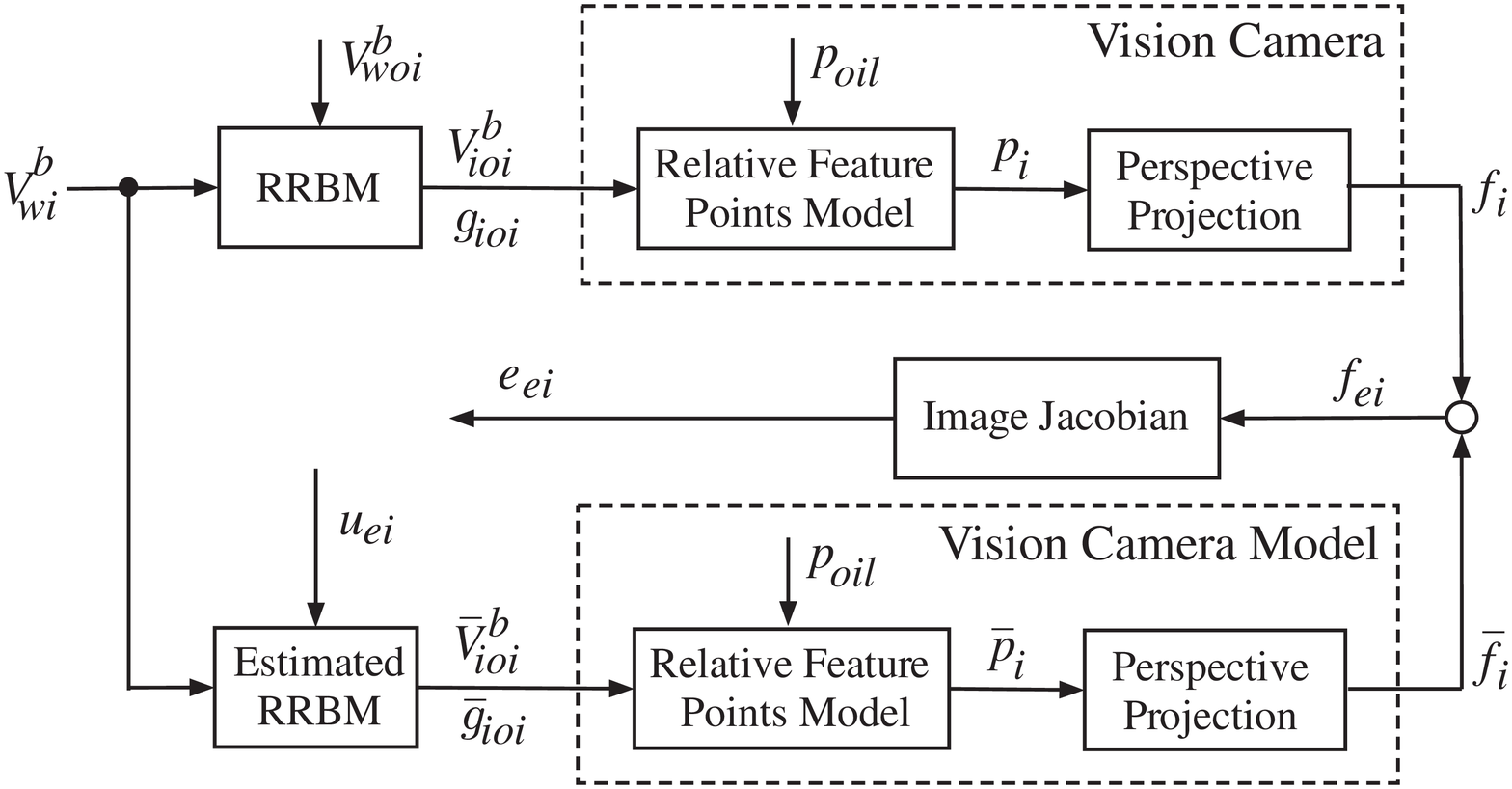}}
\caption{Estimation Error System}
\label{fig:EstErrSys}
\end{minipage}
\hspace{5mm}
\begin{minipage}{7.5cm}
{\includegraphics[width=7.5cm]{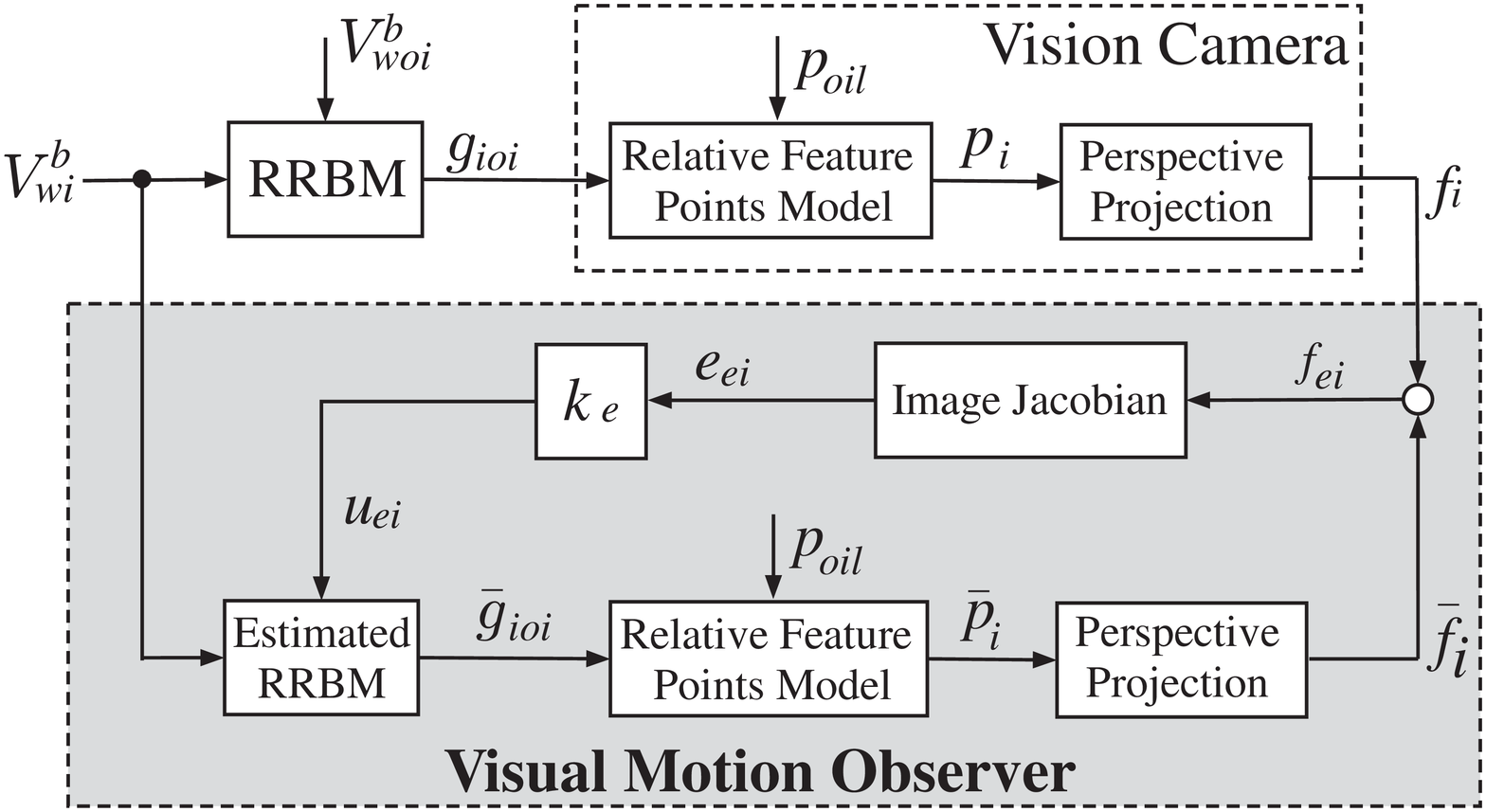}}
\caption{Visual Motion Observer}
\label{fig:vmo}
\end{minipage}
\end{figure}
The paper \cite{TCST07} proves that
if $V_{wo_i}^{b}=0$, then 
the estimation error system
(\ref{eqn:vmo_update}) is passive from the input $u_{ei}$ to the
output $-e_{ei}$.

Based on passivity-based control theory,
we close the loop by using the input
\begin{equation}
 u_{ei} = -k_{e} (-e_{ei}) = k_{e} e_{ei},\ k_{e} > 0.
\label{eqn:vmo_input}
\end{equation}
Then, the resulting total estimation mechanism 
formulated as
\begin{eqnarray}
\mbox{Visual Motion Observer: }
\left\{
\begin{array}{lr}
\dot{\bar{g}}_{io_i}=-\hat{V}_{wi}^{b}\bar{g}_{io_i}+\bar{g}_{io_i}\hat
 {u}_{ei}&\cdots (\ref{eqn:EsRRBM})\\
e_{ei}=J_i^{\dagger}(\bar{g}_{io_i})f_{ei}&\cdots (\ref{eqn:e_e})\\
u_{ei} = k_{e} e_{ei}&\cdots (\ref{eqn:vmo_input})
\end{array}
\right.
\label{eqn:visual_motion_observer}
\end{eqnarray}
is called {\it visual motion observer} \cite{HF_MSC10},
whose block diagram is illustrated in Fig.~\ref{fig:vmo}.
In terms of the mechanism, we immediately obtain the following facts 
from passivity.
\begin{fact}\cite{TCST07}
\label{vmo_theorem}\mbox{}
(i) If $V_{wo_i}^{b}=0$, then the equilibrium point $e_{ei} = 0$
for the closed-loop system (\ref{eqn:vmo_update}) with
(\ref{eqn:vmo_input}) is asymptotically stable.
(ii) Given a positive scalar $\nu_i$, 
if $k_{e}$ satisfies 
$k_{e} - \frac{1}{2\nu_i^2} - \frac{1}{2} > 0$,
then the system (\ref{eqn:vmo_update}) and
(\ref{eqn:vmo_input}) with input $V^b_{wo_i}$
and output $e_{ei}$ has $L_2$-gain smaller than $\nu_i$.
\end{fact}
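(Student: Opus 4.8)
The plan is to use the error function $\psi(g_{ei})$ defined in (\ref{eqn:psi}), evaluated along the estimation error $g_{ei}$, as a single storage/Lyapunov function for both parts. This quantity is nonnegative and vanishes exactly when $p_{ei}=0$ and $e^{\hat{\xi}\theta_{ei}}=I_3$, i.e. when $e_{ei}=0$, so it is positive definite on a neighborhood of the equilibrium. The backbone of the whole argument is the energy balance
\begin{equation}
\frac{d}{dt}\psi(g_{ei}) = e_{ei}^T\left(V_{wo_i}^{b} - u_{ei}\right),
\label{eqn:prop_energy}
\end{equation}
which is precisely the dissipation identity underlying the passivity property already stated before the Fact (passivity from $u_{ei}$ to $-e_{ei}$ when $V_{wo_i}^{b}=0$). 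I would first establish (\ref{eqn:prop_energy}) by differentiating $\psi(g_{ei})$ along the error system (\ref{eqn:vmo_update}), splitting it into the translational part $\tfrac{1}{2}\|p_{ei}\|^2$ and the rotational part $\tr(I_3-e^{\hat{\xi}\theta_{ei}})$, and using the definition $e_R(e^{\hat{\xi}\theta})=\sk(e^{\hat{\xi}\theta})^\vee$ together with the standard skew-symmetric trace identities.

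For part (i), I would set $V_{wo_i}^{b}=0$ and substitute the feedback (\ref{eqn:vmo_input}) into (\ref{eqn:prop_energy}), which gives $\frac{d}{dt}\psi(g_{ei}) = -k_e\|e_{ei}\|^2 \le 0$. Stability of $e_{ei}=0$ is then immediate, and asymptotic stability follows from LaSalle's invariance principle: on the set where the derivative vanishes one has $e_{ei}=0$, and the only invariant subset there is the equilibrium itself.

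For part (ii), I would retain the disturbance term and regard $V_{wo_i}^{b}$ as the input and $e_{ei}$ as the output. Substituting (\ref{eqn:vmo_input}) into (\ref{eqn:prop_energy}) yields $\frac{d}{dt}\psi(g_{ei}) = -k_e\|e_{ei}\|^2 + e_{ei}^T V_{wo_i}^{b}$, and completing the square on the cross term via Young's inequality, $e_{ei}^T V_{wo_i}^{b} \le \frac{1}{2\nu_i^2}\|e_{ei}\|^2 + \frac{\nu_i^2}{2}\|V_{wo_i}^{b}\|^2$, gives
\begin{equation}
\frac{d}{dt}\psi(g_{ei}) \le -\left(k_e - \frac{1}{2\nu_i^2}\right)\|e_{ei}\|^2 + \frac{\nu_i^2}{2}\|V_{wo_i}^{b}\|^2 .
\label{eqn:prop_dissip}
\end{equation}
Under the hypothesis $k_e - \frac{1}{2\nu_i^2} - \frac{1}{2} > 0$ the coefficient in (\ref{eqn:prop_dissip}) exceeds $\tfrac{1}{2}$, so $\frac{d}{dt}\psi(g_{ei}) \le \frac{\nu_i^2}{2}\|V_{wo_i}^{b}\|^2 - \frac{1}{2}\|e_{ei}\|^2$, which is exactly the dissipation inequality certifying an $L_2$-gain smaller than $\nu_i$; integrating over $[0,T]$ and using $\psi(g_{ei})\ge 0$ closes the argument.

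The main obstacle is the derivation of the energy balance (\ref{eqn:prop_energy}): differentiating the rotational term $\tr(I_3-e^{\hat{\xi}\theta_{ei}})$ along (\ref{eqn:vmo_update}) and showing it collapses exactly to the inner product $e_{ei}^T(V_{wo_i}^{b}-u_{ei})$ requires the skew-symmetric operator and $e_R$ extraction identities, and everything downstream is routine dissipativity bookkeeping. A secondary subtlety is that $\psi$ is only a \emph{local} Lyapunov function on $SO(3)$, since $e_R=0$ also holds at rotations by $\pi$; hence the LaSalle step in part (i) must be confined to a neighborhood of the identity, which is consistent with the (local) asymptotic stability asserted in the statement.
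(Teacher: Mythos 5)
The paper itself offers no proof of this Fact: it is imported from \cite{TCST07} and justified only by the remark that both items ``immediately'' follow from the passivity of the error system (\ref{eqn:vmo_update}). Your storage-function argument is exactly the standard route behind that remark, and the fact that your dissipation inequality reproduces the precise condition $k_e - \frac{1}{2\nu_i^2} - \frac{1}{2} > 0$ is good evidence you have reconstructed the intended proof; your caveat about $e_R$ vanishing at rotations by $\pi$, hence only local asymptotic stability, is also correctly placed.

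One bookkeeping point deserves care, because with this paper's normalization your central identity (\ref{eqn:prop_energy}) is not exact as written. Here $\phi(\e) = \tr(I_3 - \e)$ (no factor $\tfrac12$), and differentiating along $\dot{R} = -\hat{\omega}_{uei}R + R\hat{\omega}_{wo_i}$ gives $\frac{d}{dt}\phi(R) = 2\,e_R^T(R)\,(\omega_{wo_i} - \omega_{uei})$ --- the factor $2$ is visible in the paper's own computations, e.g.\ (\ref{eqn:dev_U}) and (\ref{eqn:thm3_1}). Likewise the translational channel yields $\dot{p}_{ei} = -\hat{\omega}_{uei}p_{ei} - v_{uei} + e^{\hat{\xi}\theta_{ei}}v_{wo_i}$, so the disturbance enters as $p_{ei}^T e^{\hat{\xi}\theta_{ei}} v_{wo_i}$ rather than $p_{ei}^T v_{wo_i}$. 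Neither issue is fatal: the extra rotation is an isometry so Young's inequality applies unchanged, and the factor $2$ only strengthens the damping term $-2k_e\|e_R\|^2 \le -k_e\|e_R\|^2$ while requiring a correspondingly rebalanced Young split on $2e_R^T\omega_{wo_i}$; alternatively one recovers your clean identity verbatim by taking the storage function with $\tfrac12\phi$, which is the normalization used in \cite{TCST07}. You should either adopt that normalization explicitly or carry the factor $2$ through (ii) and verify the stated gain condition survives, rather than asserting (\ref{eqn:prop_energy}) as an exact equality under the definitions of this paper.
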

Item (i) means 
the visual motion observer leads the estimate 
$\bar{g}_{io_i}$ to the actual 
$g_{io_i}$ for a static object.
Item (ii) implies that the observer
also works for a moving target object, and
the parameter $\nu_i$ is an index on estimation accuracy
when the  observer is applied to a moving target.


\subsection{Networked Visual Motion Observer}
\label{sec:3.3}

The objective of this paper is to 
achieve averaging,
while preserving the tracking nature of 
the visual motion observer.
For this purpose, this subsection presents a cooperative estimation mechanism
under the assumption of (i) each vision camera knows
relative pose $g_{ij} = g_{wi}^{-1}g_{wj}$
with respect to neighbors $j \in \N_i$
and (ii) all the vision cameras are static, i.e. $V^b_{wi} = 0\ {\forall i}\in \V$.

Under $V^b_{wi} = 0$, the relative rigid body motion (\ref{eqn:RRBM}) 
is simply given by 
$\dot{g}_{io_i}=g_{io_i}\hat{V}_{wo_i}^{b}$.
Accordingly, the update procedure in (\ref{eqn:visual_motion_observer}) 
is reformulated as
\begin{eqnarray}
\dot{\bar{g}}_{io_i} = \bar{g}_{io_i}\hat{u}_{ei},\ 
u_{ei} = k_{e}e_{ei}.
\label{eqn:vmo_update_static}
\end{eqnarray}
Then, the following proposition holds in terms of
the procedure (\ref{eqn:vmo_update_static}).
\begin{prop}\cite{mm}
The update procedure (\ref{eqn:vmo_update_static})
is a gradient decent algorithm on $SE(3)$
for the potential function
$\psi(\bar{g}_{io_i}g_{io_i})$, where the function $\psi$
is defined in (\ref{eqn:psi}).
\end{prop}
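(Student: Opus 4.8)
The plan is to read the statement's potential $\psi(\bar{g}_{io_i}g_{io_i})$ as the error potential $V(\bar{g}_{io_i}) := \psi(g_{ei})$ with $g_{ei} = \bar{g}_{io_i}^{-1}g_{io_i}$, regarded as a function of the estimate $\bar{g}_{io_i}\in SE(3)$ with the true pose $g_{io_i}$ frozen (legitimate here because $V^b_{wo_i} = 0$ makes $g_{io_i}$ constant). To give ``gradient descent on $SE(3)$'' a precise meaning I would trivialise the tangent bundle by body velocities, identifying a tangent vector $\bar{g}_{io_i}\hat\eta$ with $\eta = (v,\omega)\in\R^6\cong se(3)$, and equip $SE(3)$ with the left-invariant metric $\langle\eta,\eta'\rangle := v^Tv' + 2\,\omega^T\omega'$. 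Under this trivialisation the update (\ref{eqn:vmo_update_static}) moves $\bar{g}_{io_i}$ with body velocity $\eta = u_{ei}$, so the proposition reduces to showing that the body-frame Riemannian gradient of $V$ equals $-e_{ei}$; then $u_{ei} = k_{e} e_{ei} = -k_{e}\,\grad V$ is exactly the negative-gradient flow with step size $k_{e}$.

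First I would compute the first variation of $V$. Perturbing the estimate by an arbitrary body velocity $\hat\eta$ and using $\dot{g}_{io_i} = 0$ gives $\dot{g}_{ei} = -\hat\eta g_{ei}$, which is precisely the error dynamics (\ref{eqn:vmo_update}) specialised to $V^b_{wo_i} = 0$. Differentiating $\psi(g_{ei}) = \frac{1}{2}\|I_4 - g_{ei}\|_F^2$ and splitting along (\ref{eqn:psi}) into the translation part $\frac{1}{2}\|p_{ei}\|^2$ and the rotation part $\phi(e^{\hat{\xi}\theta_{ei}})$, the block structure of $\hat\eta g_{ei}$ yields $\dot{p}_{ei} = -(\hat\omega p_{ei} + v)$ and $\dot{(e^{\hat{\xi}\theta_{ei}})} = -\hat\omega\, e^{\hat{\xi}\theta_{ei}}$.

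The heart of the argument is to rewrite $\frac{d}{dt}V$ as a linear functional of $\eta$. The translation part is elementary: since $p_{ei}^T\hat\omega p_{ei} = 0$, one gets $\frac{d}{dt}\frac12\|p_{ei}\|^2 = -v^Tp_{ei}$. The rotation part is the delicate one: using $\phi = \tr(I_3 - e^{\hat{\xi}\theta_{ei}})$ and the skew--trace identity $\tr(\hat a M) = -2\,a^T\sk(M)^\vee$ valid for every $\hat a\in so(3)$, one finds $\frac{d}{dt}\phi = \tr(\hat\omega\, e^{\hat{\xi}\theta_{ei}}) = -2\,\omega^T e_R(e^{\hat{\xi}\theta_{ei}})$. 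Adding the two contributions gives $\frac{d}{dt}V = -v^Tp_{ei} - 2\,\omega^T e_R(e^{\hat{\xi}\theta_{ei}}) = -\langle\eta, e_{ei}\rangle$ for every $\eta$, where the inner product is exactly the weighted metric fixed above. Hence $\grad V = -e_{ei}$ in body coordinates, so $u_{ei} = k_{e} e_{ei} = -k_{e}\,\grad V$ and $\frac{d}{dt}V = -k_{e}\langle e_{ei}, e_{ei}\rangle \le 0$, which is the claimed gradient-descent property.

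I expect the rotation computation to be the main obstacle: one must keep careful track of the fact that the perturbation enters $g_{ei}$ on the left as $-\hat\eta g_{ei}$, and verify the skew--trace identity so that the Riemannian gradient of $\phi$ on $SO(3)$ comes out as the residual $-e_R = -\sk(e^{\hat{\xi}\theta_{ei}})^\vee$ rather than some other multiple, which is precisely what forces the weight $2$ on the rotational block of the metric. This manipulation is the static specialisation of the passivity computation underlying Fact \ref{vmo_theorem} (whose storage function is $\psi(g_{ei})$ and whose output is $-e_{ei}$), so I would borrow the trace estimates of \cite{TCST07} to discharge it rather than redo them from scratch; the choice of left-invariant metric, being the only nonstandard ingredient, is what I would flag most explicitly.
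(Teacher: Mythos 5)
Your proposal is correct and is essentially the argument the paper relies on: the paper gives no proof of its own (it simply cites \cite{mm}), and the intended content is exactly your computation that the body-frame differential of $\psi(\bar{g}_{io_i}^{-1}g_{io_i})$ (the displayed potential omits the inverse, which you correctly restored) is $\eta\mapsto -\langle\eta,e_{ei}\rangle$, so that $u_{ei}=k_ee_{ei}$ is the negative gradient flow of $\psi$ with step $k_e$. The only discretionary step --- the weight $2$ on the rotational block of the left-invariant metric, forced by $\tr(\hat{a}\hat{b})=-2a^Tb$ --- is handled and explicitly flagged, which is the right thing to do.
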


Let us now view $\psi(\bar{g}^{-1}_{io_i}g_{io_i}) = \psi(\bar{g}^{-1}_{wo_i}g_{wo_i})$
as the local objective function to be minimized by vision camera $i$.
Then, we see that the group objective (\ref{eqn:average})
is given by the sum of the local objective functions for all $i \in \V$.
Note that each vision camera does not know the local objective of
the other vision cameras.
Under such a situation
computing a solution minimizing the global objective function 
by using local negotiations is called {\it multi-agent optimization problem}
and \cite{NO_TAC09} presents an update rule of the local estimates of 
the solution to produce approximate solutions to the global objective
combining the gradient decent algorithm
of the local objective function and the consensus protocol \cite{OFM_IEEE07}.
The present cooperative estimation mechanism is inspired by
the algorithm but the consensus protocol cannot be executed on $SE(3)$.
We thus instead use a pose synchronization law presented in \cite{TCST09},
which is also based on passivity of rigid body motion.
\begin{figure}[t]
\centering
{\includegraphics[width=10cm]{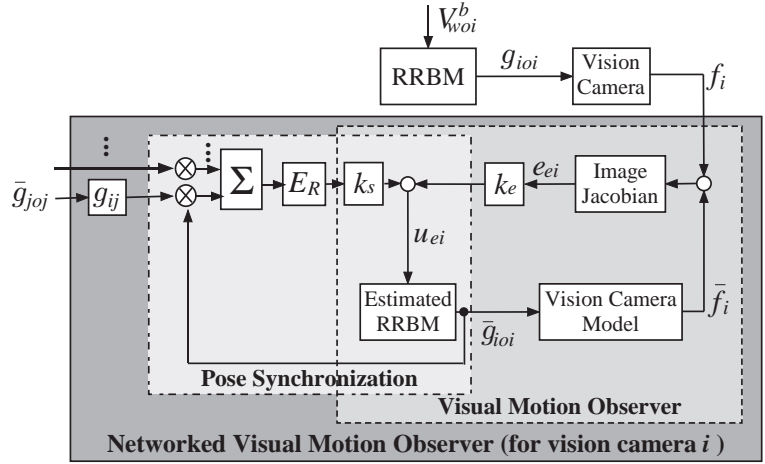}}
\caption{Networked Visual Motion Observer}
\label{fig:ce}
\end{figure}

We next present an update rule of the estimates
$\bar{g}_{io_i}$ so as to estimate the average $g_i^*$.
Each vision camera $i$ first gains the estimates $\bar{g}_{jo_j}$
from $j \in \N_i$ as messages.
Now, by multiplying known information
$g_{ij}$ from left, each vision camera $i$
gets $\bar{g}_{io_j}:=g_{ij}\bar{g}_{jo_j}$
for all $j \in \N_i$.
Using the information,
the estimate $\bar{g}_{io_i}$ is updated according to 
(\ref{eqn:EsRRBM}) with
\begin{eqnarray}
{u}_{ei} = k_ee_{ei} + 
k_s\sum_{j\in \N_i}
E_R(\bar{g}_{io_i}^{-1}\bar{g}_{io_j}),\ k_e > 0,\ k_s > 0.
\label{eqn:ce_update}
\end{eqnarray}
Since $e_{ei}$ is reconstructed from the visual
measurement $f_i$ by (\ref{eqn:e_e})
and $\bar{g}_{io_j}$ is obtained through communication 
as stated above,
the update procedure (\ref{eqn:ce_update})
is implementable.

The present input (\ref{eqn:ce_update}) 
consists of the visual feedback term $k_e e_{ei}$ 
and the mutual feedback term $k_s \sum_{j\in \N_i} E_R(\bar{g}_{io_i}^{-1}\bar{g}_{io_j})$,
where the former is inspired by the visual motion observer \cite{TCST07}
and the latter is by the pose synchronization
law \cite{TCST09}.
Indeed,  without the second term,
the update rule (\ref{eqn:ce_update})
is the same as that of the visual motion observer
(\ref{eqn:vmo_update_static}).
In addition, without the visual feedback, the update procedure 
(\ref{eqn:ce_update}), namely ${u}_{ei} = k_s\sum_{j\in \N_i}
E_R(\bar{g}_{io_i}^{-1}\bar{g}_{io_j})$,
is essentially equivalent to the passivity-based pose synchronization
law \cite{TCST09} of a group of rigid bodies 
with states 
$\bar{g}_{wo_i} := g_{wi}\bar{g}_{io_i}$.
Thus, under appropriate assumptions, each state $\bar{g}_{wo_i}$
would converge to a state satisfying $\bar{g}_{wo_i} = \bar{g}_{wo_j}\ 
{\forall i,j}\in \V$
as time goes to infinity without the visual feedback term. 

In other words, the visual motion observers are networked by
the mutual feedback term in the total 
estimation mechanism formulated as
\begin{eqnarray}
\mbox{
Networked VMO: }
\left\{
\begin{array}{ll}
\dot{\bar{g}}_{io_i}=\bar{g}_{io_i}\hat
 {u}_{ei}&\cdots (\ref{eqn:EsRRBM})\\
e_{ei}=J_i^{\dagger}(\bar{g}_{io_i})f_{ei}&\cdots (\ref{eqn:e_e})\\
{u}_{ei} = k_ee_{ei} + 
k_s\sum_{j\in \N_i}
E_R(\bar{g}_{io_i}^{-1}\bar{g}_{io_j})&\cdots (\ref{eqn:ce_update})
\end{array}
\right.
{\forall i}\in \V,
\label{eqn:net_visual_motion_observer}
\end{eqnarray}
where VMO is an acronym for Visual Motion Observer.
This is why the estimation mechanism is called 
{\it networked visual motion observer}.
The block diagram of the total system
of vision camera $i$ is illustrated in Fig. \ref{fig:ce}.


\section{Averaging Performance Analysis}
\label{sec:4}

In this section, we derive ultimate estimation accuracy
of the average $g_i^*$ achieved by
the networked visual motion observer
(\ref{eqn:net_visual_motion_observer})
under the following assumption.
\begin{assumption}
\label{ass:2}\mbox{}\\
(i) The target objects are static, i.e. $V^b_{wo_i} = 0\ {\forall 
	   i}\in \V$.\\
(ii) There exists a pair $(i,j) \in \V\times \V$ such that
$p_{wo_i} \neq p_{wo_j}$ and $\ewoi \neq \ewoj$.\\
(iii) $\mesi\eioi > 0$ for all $i \in \V$.
\footnote{Throughout this paper, we refer to a real matrix $M$,
which is not necessarily symmetric, as a
positive definite (positive semi-definite) matrix
if and only if $x^{T}Mx > 0$
($x^{T}Mx \geq 0$) for all nonzero vector $x$.}
\end{assumption}
The moving target objects will be investigated in Section \ref{sec:6}.
The item (ii) is assumed in order to avoid a meaningless problem
such that $g_{wo_i}=g_{wo_j}\ {\forall i,j}\in \V$.
Indeed, under the situation, it is straightforward to prove
convergence of the estimates to the common pose.
In terms of the item (iii), we see that
if $\mewoi\ewoj>0$ for all $i,j\in \V$,
then the following inequality holds.
\begin{eqnarray}
\phi(\mesi\eioi) \leq \phi_m := 
\max_{i,j\in \V}\phi(\mewoi\ewoj)\ {\forall i}\in \V
\label{ind_mean}
\end{eqnarray}
Inequality (\ref{ind_mean}) implies that
if $\mewoi\ewoj > 0 \ {\forall i,j}\in \V$
($\phi_m$ is smaller than $2$), 
then  (iii) is satisfied.
Thus, (iii) can be checked if set-valued prior 
information on the target orientations,
i.e. an upper bound of $\phi_m$ is available.

\subsection{Definition of Averaging Performance}
\label{sec:4.1}

In this subsection, we introduce a notion of 
approximate averaging.
For this purpose, we define the following sets for any positive parameter
$\varepsilon$.
\begin{eqnarray}
\hspace{-1cm}&&\Omega_p(\varepsilon):=\left\{(\bar{p}_{io_i})_{i\in \V}\left|\ 
\frac{1}{2}\sum_{i\in \V}\|\bar{p}_{io_i} - p^*_i\|^2\leq \varepsilon \rho_p\right.\right\},\
\rho_p := \frac{1}{2}\sum_{i\in \V}\|{p}_{io_i} - p^*_i\|^2
\label{eqn:ome_p}\\
\hspace{-1cm}&&\Omega_R(\varepsilon):=\left\{(\ebioi)_{i\in \V}\left|\ 
\sum_{i\in \V}\phi(\mesi\ebioi)\leq \varepsilon \rho_R\right.\right\},\
\rho_R := \sum_{i\in \V}\phi(\mesi\eioi)
\label{eqn:ome_R}
\end{eqnarray}

Let us now define $\varepsilon$-level 
averaging performance to be met by the estimates
$\bar{g}_{io_i} = (\bar{p}_{io_i}, \ebioi)$.
\begin{definition}
\label{def:1}
Given target poses $(g_{io_i})_{i\in \V}$,
position estimates $(\bar{p}_{io_i})_{i\in \V}$
are said to achieve $\varepsilon$-level 
averaging performance for a scalar $\varepsilon>0$
if there exists a finite $T$
such that 
$(\bar{p}_{io_i}(t))_{i\in \V} \in \Omega_p(\varepsilon)\ 
 {\forall t}\geq T$
and the orientation estimates $(\ebioi)_{i\in \V}$
are said to achieve $\varepsilon$-level 
averaging performance
if there exists a finite $T$ such that
$(\ebioi(t))_{i\in \V} \in \Omega_R(\varepsilon)\ {\forall t}\geq T$.
\end{definition}

In the absence of communication, each vision camera $i$
acquires no information on the target objects $o_j,\ j\neq i$.
Under the situation, what each vision camera can do is
to produce as an accurate estimate of the relative pose $g_{io_i}$ as possible.
Namely, the parameters $\rho_p$ and $\rho_R$ specify
the best performance of average estimation 
in the absence of communication.
More specifically, since the visual motion observer (\ref{eqn:visual_motion_observer})
correctly estimates the static target object pose
$g_{io_i}$ (Fact 1), the parameters $\rho_p$ and $\rho_R$ indicate 
the average estimation accuracy in the absence of the mutual feedback term of
$u_{ei}$ in (\ref{eqn:ce_update}).
Namely, the parameter $\varepsilon$ is an indicator 
of improvement of average estimation accuracy by inserting the mutual 
feedback term $k_s \sum_{j\in \N_i}E_R(\bar{g}_{io_i}^{-1}\bar{g}_{io_j})$.


\subsection{Auxiliary Results}
\label{sec:4.8}

In this subsection, we give some results 
necessary for proving the main result of this section.

\begin{lemma}
\label{lem_journal:1}
Suppose that the estimates $(\bar{g}_{io_i})_{i \in \V}$ are updated 
by the networked visual motion observer
(\ref{eqn:net_visual_motion_observer}).
Then, under Assumptions \ref{ass:1} and \ref{ass:2}
and $\mebioi\esi > 0\ 
{\forall t}\geq 0$, 
for all $c > 0$, there exists a finite $\tau(c)$ such that
$\phi(\mesi\ebioi) \leq \phi(\mes\ewol) + c \ {\forall t}\geq \tau(c),\
i \in \V$, 
where $h := \arg\max_{j\in \V}\phi(\mes\ewoj)$.
\end{lemma}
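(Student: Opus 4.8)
The plan is to pass to world‑frame orientations, reduce the claim to an ultimate bound on the \emph{maximal} orientation error over the cameras, and drive that maximum down with a nonsmooth (max‑type) Lyapunov argument whose decrease is supplied by the visual feedback.

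First I would rewrite the error in world‑frame coordinates, where $\ebwoi$ denotes the world‑frame estimate orientation. Using $\esi = e^{-\hat{\xi}\theta_{wi}}\es$ and $\ebwoi = e^{\hat{\xi}\theta_{wi}}\ebioi$ one gets $\mesi\ebioi = \mes\ebwoi$, hence $\phi(\mesi\ebioi)=\phi(\mes\ebwoi)$; likewise the visual rotation satisfies $\mebioi\eioi = \mebwoi\ewoi$, and (via $\bar g_{io_j}=g_{ij}\bar g_{jo_j}$) the mutual terms become $e_R(\mebwoi\ebwoj)$. Since the cameras are static (Assumption \ref{ass:2}(i)), the rotation part of (\ref{eqn:EsRRBM})--(\ref{eqn:ce_update}) is $\dot{\ebwoi}=\ebwoi\hat{\omega}_{uei}$, and a direct computation using $\tr(\hat a\hat b)=-2a^Tb$ gives
\[
\frac{d}{dt}\phi(\mes\ebwoi) = 2\,e_R(\mes\ebwoi)^T\omega_{uei},\qquad
\omega_{uei} = k_e\, e_R(\mebwoi\ewoi) + k_s\sum_{j\in\N_i} e_R(\mebwoi\ebwoj).
\]
This turns the claim into an ultimate bound $\phi(\mes\ebwoi)\le\phi(\mes\ewol)+c$ on the world‑frame errors.

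Next I would introduce the nonsmooth Lyapunov function $\Phi(t):=\max_{i\in\V}\phi(\mes\ebwoi(t))$ and bound its upper Dini derivative by the maximum of the above derivatives over the active index set. The crux is a geometric sign property on the open geodesic ball of radius $\pi/2$ about the mean $\es$, which is exactly where every quantity lives: Assumption \ref{ass:2}(iii) gives $\mes\ewoi>0$, i.e. $\phi(\mes\ewoi)<2$ for the targets, and the standing hypothesis $\mebioi\esi>0$ gives $\phi(\mes\ebwoi)<2$ for the estimates at all times. The property is: if $R,R'\in\SO(3)$ lie in this ball and $\phi(\mes R')\le\phi(\mes R)$, then $e_R(\mes R)^T e_R(R^{-1}R')\le 0$, with a strict decrease quantified by the gap $\phi(\mes R)-\phi(\mes R')$. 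In words, pulling $R$ toward any point no farther from the mean cannot increase its distance to the mean; this is checked directly in the commuting case and extended over the $\pi/2$‑ball where $\phi$ is geodesically convex.

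Applying this at an index $i$ attaining the maximum when $\Phi(t)>\phi(\mes\ewol)=:\phi_h$: every neighbor obeys $\phi(\mes\ebwoj)\le\Phi=\phi(\mes\ebwoi)$, so each mutual term is $\le 0$, and the target obeys $\phi(\mes\ewoi)\le\phi_h<\phi(\mes\ebwoi)$, so the visual term is $\le 0$ as well; hence $D^+\Phi\le 0$ whenever $\Phi>\phi_h$. This keeps the trajectory inside the compact sublevel set $\{\phi\le\max(\Phi(0),\phi_h)\}$, strictly interior to the $\pi/2$‑ball. On this compact set the quantified sign property yields a uniform rate from the visual term alone: when $\Phi\ge\phi_h+c$ the active camera has $\phi(\mes\ewoi)\le\phi_h\le\phi(\mes\ebwoi)-c$, so $2k_e\,e_R(\mes\ebwoi)^Te_R(\mebwoi\ewoi)\le-\alpha(c)<0$ with $\alpha(c)>0$ continuous in $c$, while the mutual terms remain $\le 0$. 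Thus $D^+\Phi\le-\alpha(c)$ while $\Phi\ge\phi_h+c$, so $\Phi$ reaches the level $\phi_h+c$ in finite time $\tau(c)$ and cannot recross it, giving $\phi(\mesi\ebioi)=\phi(\mes\ebwoi)\le\Phi\le\phi(\mes\ewol)+c$ for all $i\in\V$ and $t\ge\tau(c)$.

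The main obstacle is the geometric sign inequality and, more sharply, its quantitative form producing the uniform rate $\alpha(c)$; proving it needs the $\pi/2$‑ball confinement (Assumption \ref{ass:2}(iii) together with the standing hypothesis $\mebioi\esi>0$) and the geodesic convexity of $\phi$ there, whereas the nonsmooth maximum bookkeeping and the finite‑time conclusion are then routine.
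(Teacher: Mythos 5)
Your proposal is correct and follows essentially the same route as the paper: a max-type Lyapunov function $\max_{i}\phi(\mes\ebwoi)$ evaluated at the maximizing camera, non-positivity of the mutual-feedback terms there, and a uniform negative rate from the visual term when the gap to $\phi(\mes\ewol)$ exceeds $c$. Your ``geometric sign property'' with its gap quantification is exactly the content of the paper's Lemma~\ref{lem:A_1} (cited from \cite{TCST09}), which supplies the explicit bound $2e_R(\mes\ebwoiota)^Te_R(\mebwoiota\ewoiota)\leq -(\phi(\mes\ebwoiota)-\phi(\mes\ewoiota))$ and hence the rate $\alpha(c)=k_ec$ without needing your compactness step.
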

\begin{proof}
See Appendix \ref{app:2}.
\end{proof}
Lemma \ref{lem_journal:1} implies that the individual estimate $\ebioi$
gets closer to the average $\esi$ at least than the object
with the farthest orientation from the average.
In addition, the proof of this lemma also means that the set
\begin{equation*}
 {\mathcal S} = \{(\ebioi)_{i\in \V}|\ \mebioi\esi > 0\ {\forall 
i}\in \V\}
\end{equation*}
is positively invariant for the total system 
(\ref{eqn:net_visual_motion_observer})
under Assumption 2.
Namely, if $\mebioi\esi > 0$ is satisfied at the initial time,
then $\mebioi\esi > 0$ holds for all subsequent time. 

We next have the following lemma.
\begin{lemma}
\label{lem_journal:2}
Suppose that the estimates $(\bar{g}_{io_i})_{i \in \V}$ are updated 
by the networked visual motion observer
(\ref{eqn:net_visual_motion_observer}).
Then, under Assumptions \ref{ass:1} and \ref{ass:2},
if the initial estimates satisfy $(\ebioi(0))_{i\in \V} \in {\mathcal S}$,
both of the estimates $(\bar{p}_{io_i})_{i \in \V}$ 
and $(\ebioi)_{i\in \V}$ achieve
$1$-level averaging performance.
\end{lemma}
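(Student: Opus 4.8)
The plan is to pass to the world frame, where the closed loop separates into a linear position channel and an autonomous orientation channel, to settle the position part by an explicit contraction estimate, and to settle the orientation part by LaSalle's invariance principle together with a shrinkage argument on $\SO(3)$. First I would rewrite the dynamics using $\bar{g}_{wo_i}:=g_{wi}\bar{g}_{io_i}$. Since the cameras are static, $g_{wi}$ is constant, so $\dot{\bar{g}}_{wo_i}=\bar{g}_{wo_i}\hat{u}_{ei}$, and moreover $\bar{g}_{io_i}^{-1}\bar{g}_{io_j}=\bar{g}_{wo_i}^{-1}\bar{g}_{wo_j}$ and $\bar{g}_{io_i}^{-1}g_{io_i}=\bar{g}_{wo_i}^{-1}g_{wo_i}$, so the input (\ref{eqn:ce_update}) is a function of world-frame quantities only. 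Because $g_{wi}$ is a rigid motion, $\|\bar{p}_{io_i}-p_i^*\|=\|\bar{p}_{wo_i}-p^*\|$ and $\phi(\mesi\ebioi)=\phi(\mes\,\ebwoi)$, so it suffices to prove both bounds in the world frame, with $V_p:=\tfrac12\sum_{i\in\V}\|\bar{p}_{wo_i}-p^*\|^2\le\rho_p$ and $V_R:=\sum_{i\in\V}\phi(\mes\,\ebwoi)\le\rho_R$ eventually. Writing $u_{ei}=(v_{uei},\omega_{uei})$, a short computation shows $\omega_{uei}$ depends only on $\{\ebwoi\}$, while $\ebwoi$ cancels in the position channel, giving $\dot{\bar{p}}_{wo_i}=k_e(p_{wo_i}-\bar{p}_{wo_i})+k_s\sum_{j\in\N_i}(\bar{p}_{wo_j}-\bar{p}_{wo_i})$. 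Thus the two channels decouple.

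For positions, stacking $\bar{P}=(\bar{p}_{wo_i})_{i\in\V}$ and $P=(p_{wo_i})_{i\in\V}$ yields $\dot{\bar{P}}=k_e(P-\bar{P})-k_s(L\otimes I_3)\bar{P}$, where $L$ is the Laplacian of $G$. Since $L+L^T\succeq0$ for balanced graphs, the numerical range of $-k_eI-k_s(L\otimes I_3)$ lies in the open left half plane, so $\bar{P}$ converges to $\bar{P}^\infty=(k_eI+k_s(L\otimes I_3))^{-1}k_eP=:MP$. Working coordinatewise with the $n\times n$ matrix $M=k_e(k_eI+k_sL)^{-1}$, balancedness gives $L\mathbf{1}=0$ and $\mathbf{1}^TL=0$, hence $M\mathbf{1}=\mathbf{1}$ and $\mathbf{1}^TM=\mathbf{1}^T$, so the centroid is preserved, $\tfrac1n\sum_i\bar{p}^\infty_{wo_i}=p^*$, and the centered estimate is $M$ applied to the centered data. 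Testing $y=Mx$ against $k_ey+k_sLy=k_ex$ gives $k_e\|y\|^2\le k_ex^Ty$, so $\|M\|\le1$, with strict inequality off $\mathrm{span}\{\mathbf{1}\}$ by strong connectivity; with Assumption \ref{ass:2}(ii) this gives $V_p^\infty<\rho_p$, and convergence places $(\bar{p}_{io_i}(t))_{i\in\V}$ in $\Omega_p(1)$ for all large $t$.

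For orientations, I would first use Lemma \ref{lem_journal:1}: starting in $\mathcal{S}$, the trajectory stays in $\mathcal{S}$ and is bounded, with $\mes\,\ebwoi\succ0$ throughout. On $\mathcal{S}$ I would apply LaSalle's invariance principle with the natural storage function $U_R=k_e\sum_{i\in\V}\phi(\mebwoi\ewoi)+\tfrac{k_s}{2}\sum_{i\in\V}\sum_{j\in\N_i}\phi(\mebwoi\ebwoj)$, invoking the passivity-based synchronization analysis of \cite{TCST09} (which applies to balanced digraphs) to conclude convergence to the equilibrium set $\{\omega_{uei}=0\;\forall i\in\V\}$. It then remains to bound $V_R$ at such an equilibrium, which is the heart of the matter.

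The hard part will be exactly this last step, because the mechanism that made positions easy is absent on $\SO(3)$. There the balanced-graph cancellation makes $M$ centroid-preserving and non-expansive, so the limiting estimates are a genuine linear shrinkage of the data toward the conserved average. On $\SO(3)$ no such conserved average exists: $e_R(\mebwoi\ebwoj)=-e_R(\mebwoj\ebwoi)$ only edgewise, and for a directed balanced graph the in- and out-neighbourhoods differ, so summing the equilibrium conditions does not relate $\sum_i\ebwoi$ to $\sum_i\ewoi$. To close the argument I would work in the Frobenius embedding, using $\phi(A^TB)=\tfrac12\|A-B\|_F^2$ to write $V_R=\tfrac12\sum_i\|\es-\ebwoi\|_F^2$ and $\rho_R=\tfrac12\sum_i\|\es-\ewoi\|_F^2$, and exploit the convexity of $\phi$ on $\mathcal{S}$ so that the LaSalle limit is the unique critical point of $U_R$ there; comparing against the consensus-at-average configuration $\ebwoi=\es$ controls the visual and synchronization parts of $U_R$. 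The remaining and decisive difficulty is converting that energy bound into the spread bound $V_R\le\rho_R$: when the equilibrium condition is written in the embedding it carries nonlinear reflection terms of the form $\ebwoi X^T\ebwoi$, and showing that these force the equilibrium to be a contraction of the data toward $\es$ — the $\SO(3)$ analogue of $\|M\|\le1$ — is the crux on which the whole orientation claim rests.
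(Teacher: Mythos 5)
Your position argument is correct but takes a genuinely different route from the paper's: you compute the limit $\bar{P}^\infty=MP$ explicitly and show $M=k_e(k_eI+k_sL)^{-1}$ is a centroid-preserving strict contraction on $\mathbf{1}^\perp$, whereas the paper runs a direct Lyapunov argument on $U_p$, using the balanced-graph identity $\sum_{i\in\V}\sum_{j\in\N_i}(\|\bar{q}_j\|^2-\|\bar{q}_i\|^2)=0$ to conclude $\dot{U}_p<0$ whenever $U_p>\rho_p$. Your version buys an explicit description of the limit; the paper's buys a template that transfers to $SO(3)$.

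That template is exactly what your orientation half lacks, and you have correctly located the crux without resolving it: the plan ``LaSalle to the equilibrium set, then show every equilibrium is a contraction of the data toward $\es$'' is left open at its decisive step, so the orientation claim is not proved. The paper never characterizes the limit set. It differentiates $U_R=\sum_{i\in\V}\phi(\mes\ebwoi)$ and applies the trace inequality of Lemma \ref{lem:A_1} termwise: the visual-feedback term yields $-k_e\sum_{i\in\V}\big(\phi(\mes\ebwoi)-\phi(\mes\ewoi)\big)-k_e\sum_{i\in\V}\sigma_i\phi(\mebwoi\ewoi)$ with $\sigma_i=\lambda_{min}(\sym(\mes\ebwoi))>0$ on ${\mathcal S}$, i.e.\ it produces exactly $-k_e(U_R-\rho_R)$ plus a nonpositive remainder, while the mutual-feedback term contributes $\sum_{i\in\V}\sum_{j\in\N_i}\big(\phi(\mes\ebwoi)-\phi(\mes\ebwoj)\big)$, which vanishes by balancedness --- this is precisely the $SO(3)$ analogue of the cancellation you claim is unavailable --- plus a further nonpositive remainder $-\sigma_i\phi(\mebwoi\ebwoj)$. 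Hence $\dot{U}_R<0$ strictly outside $\Omega_R(1)$, and finite-time entry follows with no equilibrium analysis and no $SO(3)$ analogue of $\|M\|\le 1$. A secondary concern with your route: the pairwise storage function you propose for LaSalle is not obviously nonincreasing on a \emph{directed} balanced graph, since the mutual-feedback flow is not the gradient of that pairwise sum unless the graph is undirected.
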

\begin{proof}
See Appendix \ref{app:3}.
\end{proof}
This lemma is proved by using the energy functions
\begin{eqnarray}
\ \ U_p := \frac{1}{2}\sum_{i \in \V}\|p_i^* - \bar{p}_{io_i}\|^2
= \frac{1}{2}\sum_{i \in \V}\|p^* - \bar{p}_{wo_i}\|^2, \
{U}_R := \sum_{i \in \V}\phi(\mesi\ebioi)
= \sum_{i \in \V}\phi(\mes\ebwoi)
\nonumber
\end{eqnarray}
which are defined by the sum of individual error
between the average and the estimate.
The functions $U_p\geq 0$ and $U_R\geq 0$ are equal to $0$ if and only if 
$\bar{p}_{io_i} = p_i^*$ and
$\ebioi = \esi\ {\forall i}\in \V$ respectively.
The selection of the energy function is inspired by
one of our previous works on pose synchronization \cite{TCST09}
whose framework is originally presented in \cite{CS_BK06}.

%
Lemma \ref{lem_journal:2} means that the average estimation 
as a group in the presence of communication
is at least more accurate than the case in the absence of communication.
However, this lemma does not say how accurate estimates of the average
the networked visual motion observer produces.

%
%

From Lemmas \ref{lem_journal:1} and \ref{lem_journal:2},
the estimates $(\bar{p}_{io_i})_{i\in \V}$ and 
$(\ebioi)_{i\in \V}$ settle into
$\Omega_p(1)$ and
${\mathcal S}^R_1 := {\mathcal S} \cap \Omega_R(1)$
in finite time, respectively.
Let us now define the following subsets of $\Omega_p(1)$ and ${\mathcal S}^R_1$.
\begin{eqnarray}
&&{\mathcal S}^p_2(k) := \Big\{(\bar{p}_{io_i})_{i\in \V}\in 
\Omega_p(1)\Big|\sum_{i\in \V}\sum_{j \in \N_i} 
\frac{1}{2}\|\bar{p}_{wo_i} - \bar{p}_{wo_j}\|^2
\geq k\rho_p\Big\},\
\nonumber\\
&&{\mathcal S}^R_2(k) := \Big\{(\ebioi)_{i\in \V}\in 
{\mathcal S}^R_1\Big|\beta \sum_{i\in \V}\sum_{j \in \N_i}
\phi(\mebwoi\ebwoj) 
\geq k\rho_R
\Big\},\
\nonumber\\
&&{\mathcal S}^p_3(k,\varepsilon) := 
\Omega_p(1)\setminus ({\mathcal S}^p_2(k) 
\cup \Omega_p(\varepsilon)),\
{\mathcal S}^R_3(k,\varepsilon) := 
{\mathcal S}^R_1\setminus ({\mathcal S}^R_2(k) 
\cup \Omega_R(\varepsilon))
\nonumber
\end{eqnarray}
for some $\varepsilon \in [0,1)$, 
where $\beta := 1 - \sqrt{2(\phi(\mes\ewol) + c)}$ and
$k = k_{e}/k_{s}$.
Images of the subsets on the position space are
depicted in Fig. \ref{set}.
We see from the figure that
\begin{eqnarray}
&&\hspace{-1cm}\Omega_p(1) \setminus 
({\mathcal S}^p_2(k)\cup {\mathcal S}^p_3(k,\varepsilon))\subseteq \Omega_p(\varepsilon),\
{\mathcal S}^R_1 \setminus 
({\mathcal S}^R_2(k)\cup {\mathcal S}^R_3(k,\varepsilon))\subseteq \Omega_R(\varepsilon).
\label{eqn:subset2}
\end{eqnarray}
\begin{figure}[t]
\begin{center}
\begin{minipage}{3.5cm}
\includegraphics[width=3.5cm]{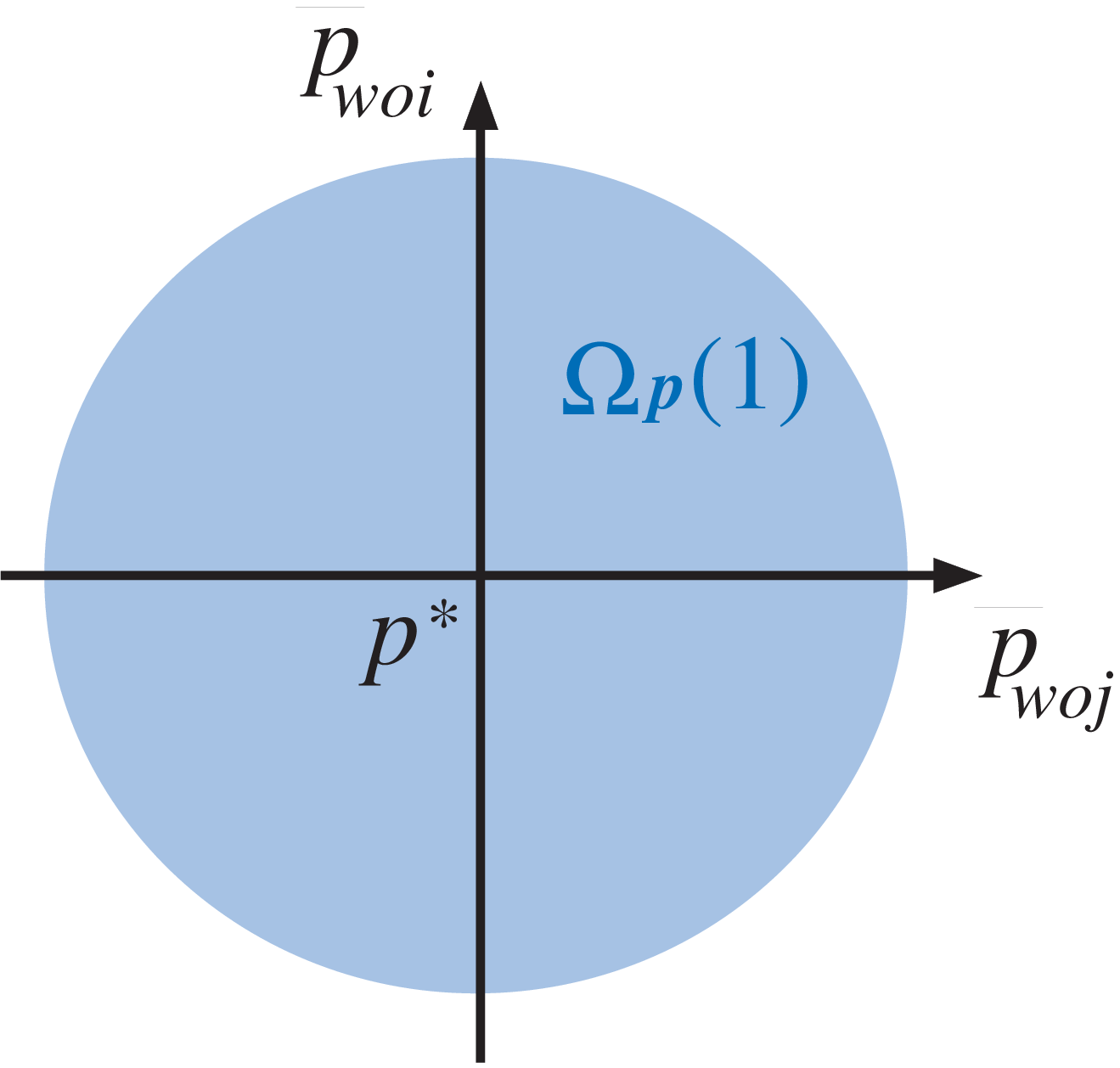}
\end{minipage}
\hspace{5mm}
\begin{minipage}{3.5cm}
\includegraphics[width=3.5cm]{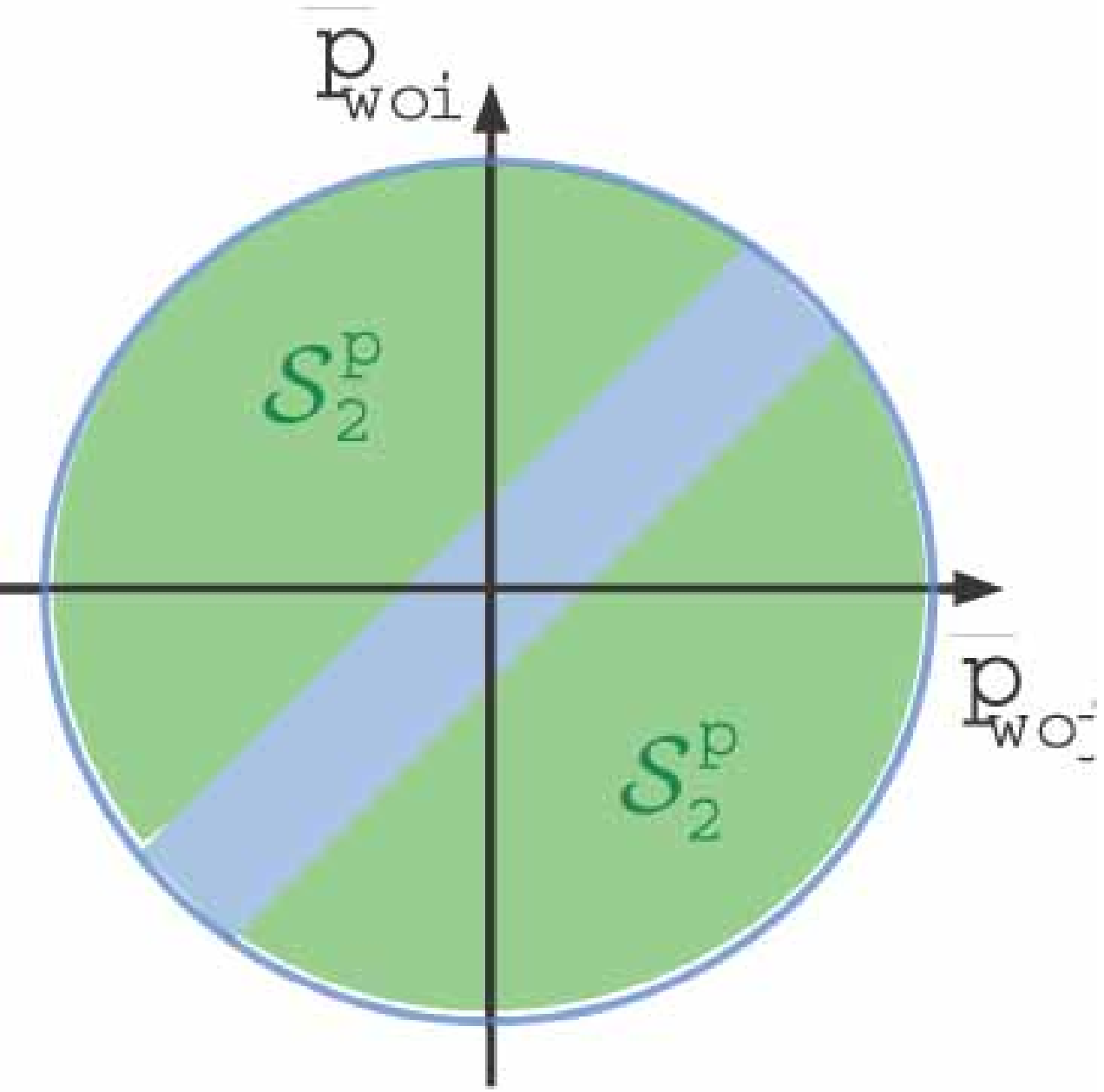}
\end{minipage}
\hspace{5mm}
\begin{minipage}{3.5cm}
\includegraphics[width=3.5cm]{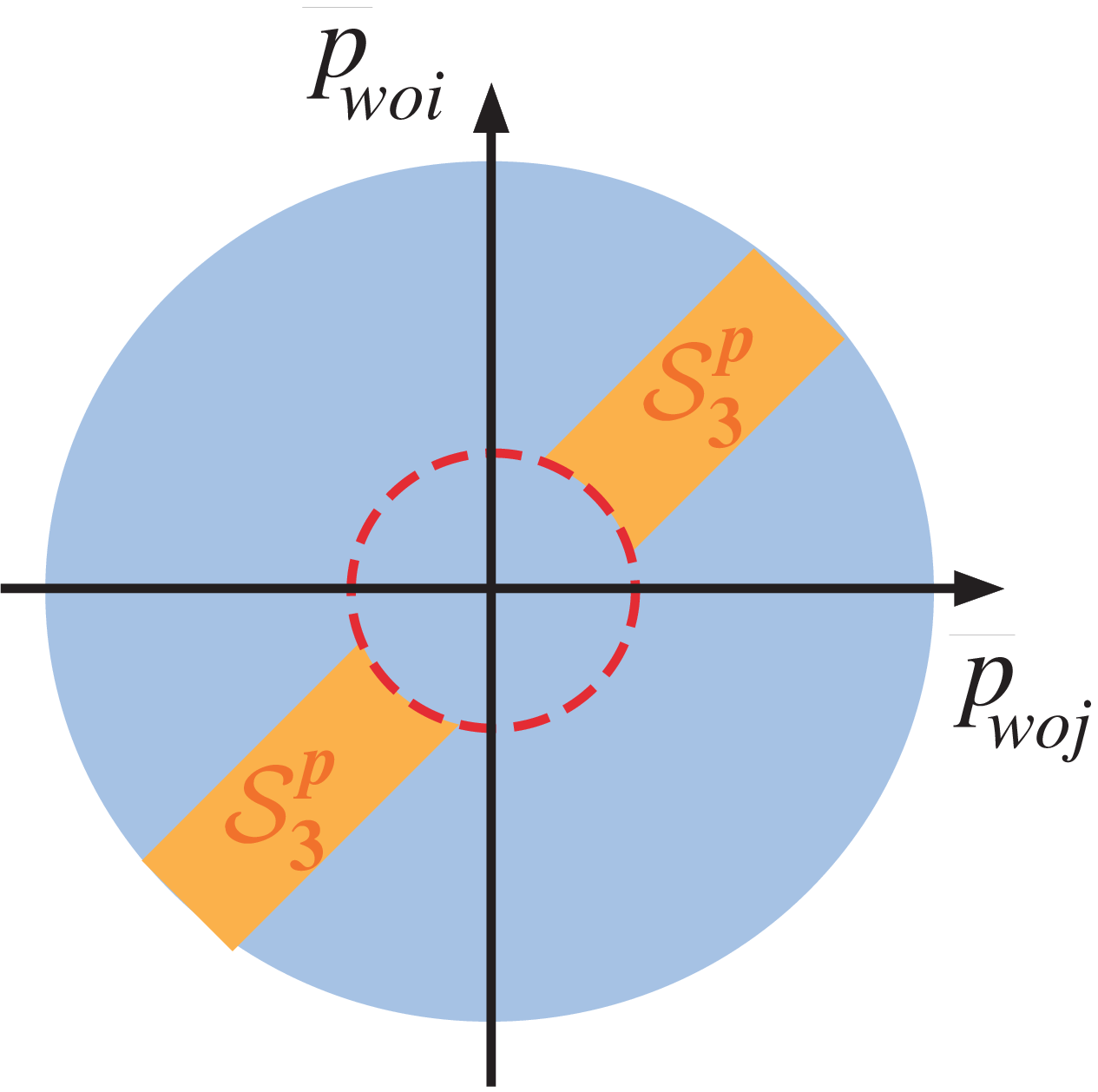}
\end{minipage}
\hspace{5mm}
\begin{minipage}{3.5cm}
\includegraphics[width=3.5cm]{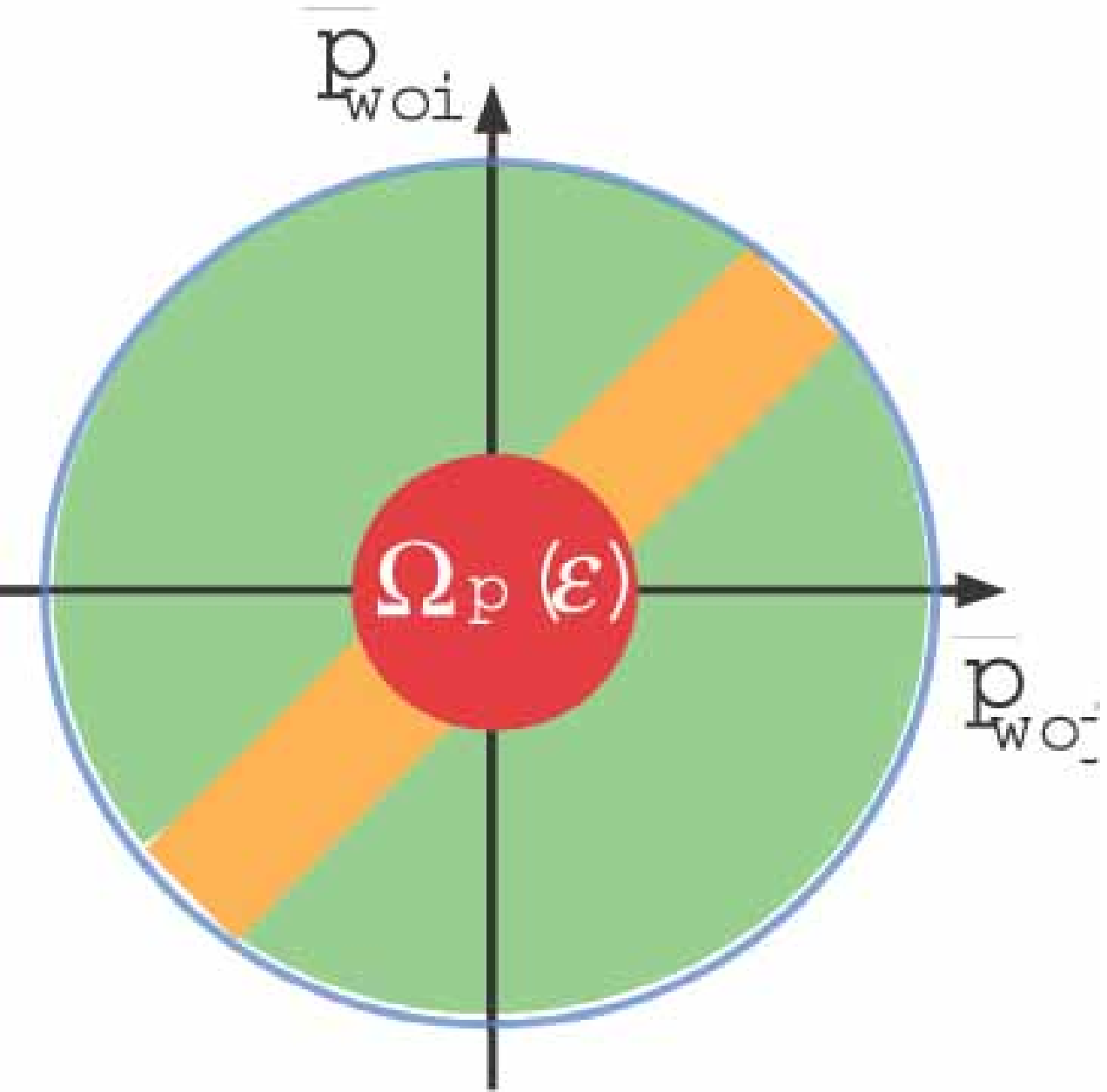}
\end{minipage}
\end{center}
\caption{Images of Each Subsets (Position)}
\label{set}
\end{figure}
In terms of the subsets ${\mathcal S}^p_2(k)$ and 
${\mathcal S}^R_2(k)$, we have the following lemma.
\begin{lemma}
\label{lem_journal:3}
Suppose that all the assumptions in Lemma \ref{lem_journal:2} hold
and $\beta > 0$.
Then, the time derivative of $U_p$ and $U_R$ along with 
the trajectories of (\ref{eqn:net_visual_motion_observer})
are strictly negative
as long as $(\bar{p}_{io_i})_{i\in \V}\in {\mathcal S}^p_2(k)$
and $(\ebioi)_{i\in \V}\in {\mathcal S}^R_2(k)$
respectively,
at least after the time $\tau(c)$.
\end{lemma}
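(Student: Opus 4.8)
The plan is to compute the time derivatives $\dot{U}_p$ and $\dot{U}_R$ explicitly along the trajectories of the networked visual motion observer, substitute the update law, and show that under the set membership conditions the dominant (synchronization) term is strictly negative while the remaining (visual feedback) term is bounded in magnitude by it. The whole estimate hinges on the parameter $k = k_e/k_s$: the sets $\mathcal{S}^p_2(k)$ and $\mathcal{S}^R_2(k)$ are engineered precisely so that the mutual-feedback dissipation rate exceeds $k$ times the visual-feedback driving term, so I expect the inequalities defining those sets to be exactly what gets invoked at the final step.

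First I would treat the position part. Since $V^b_{wi}=0$ under Assumption~\ref{ass:2}, the estimate evolves by $\dot{\bar{p}}_{io_i}$ induced from (\ref{eqn:EsRRBM}) with the input (\ref{eqn:ce_update}); passing to world-frame variables $\bar{p}_{wo_i}=g_{wi}\bar{g}_{io_i}$ (as flagged in the definition of $U_p$) should diagonalize the dynamics. Differentiating $U_p = \frac{1}{2}\sum_i\|p^* - \bar{p}_{wo_i}\|^2$ and substituting the update, the visual-feedback term contributes something controlled by $\rho_p$ while the mutual-feedback term yields a graph Laplacian-type quadratic form. Using Assumption~\ref{ass:1} (balanced, strongly connected), this Laplacian term collapses to $-k_s\sum_{i}\sum_{j\in\N_i}\frac{1}{2}\|\bar{p}_{wo_i}-\bar{p}_{wo_j}\|^2$, the exact quantity in $\mathcal{S}^p_2(k)$. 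The condition $(\bar{p}_{io_i})\in\mathcal{S}^p_2(k)$ then states that this dissipation is at least $k_s k\rho_p = k_e\rho_p$, which dominates the visual-feedback contribution, giving $\dot{U}_p<0$.

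The orientation part follows the same architecture but is the genuinely hard step, because $\phi$ lives on $SO(3)$ and the error $e_R$ enters nonlinearly. Differentiating $U_R=\sum_i\phi(\mes\ebwoi)$ produces a visual-feedback term and a synchronization term $\sum_i\sum_{j\in\N_i}$ involving $e_R(\mebwoi\ebwoj)$. The obstacle is that bounding $\phi(\mebwoi\ebwoj)$ below by the norm of $e_R$ of the same argument requires the geometric inequality on $SO(3)$ controlling $\phi$ by $\|e_R\|$, and this is exactly where $\beta := 1 - \sqrt{2(\phi(\mes\ewol)+c)}$ and the hypothesis $\beta>0$ enter: by Lemma~\ref{lem_journal:1}, after time $\tau(c)$ every $\ebioi$ is within angular distance governed by $\phi(\mes\ewol)+c$ of the average, so the relevant rotations are confined to a geodesic ball on which $\phi$ and $\|e_R\|^2$ are comparable with constant $\beta$. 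This justifies replacing the dissipation term by $\beta\sum_i\sum_{j\in\N_i}\phi(\mebwoi\ebwoj)$, which is precisely the quantity appearing in $\mathcal{S}^R_2(k)$.

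Finally, with both dissipation terms expressed in the forms used to define $\mathcal{S}^p_2(k)$ and $\mathcal{S}^R_2(k)$, the membership conditions guarantee that each dissipation exceeds $k$ times the corresponding visual-feedback bound, so $\dot{U}_p<0$ and $\dot{U}_R<0$ hold strictly on those sets. The restriction \emph{at least after the time $\tau(c)$} is simply because the orientation argument invokes Lemma~\ref{lem_journal:1}, which only holds for $t\geq\tau(c)$; the position argument needs no such delay but I would state both together for uniformity. I expect the main technical labor, and the only place where a nontrivial inequality is needed rather than bookkeeping, to be the $SO(3)$ comparison estimate furnishing the constant $\beta$ in the orientation case.
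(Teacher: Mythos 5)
Your proposal is correct and follows essentially the same route as the paper: differentiate $U_p$ and $U_R$, use the balanced-graph telescoping to isolate the mutual-feedback dissipation, invoke Lemma \ref{lem_journal:1} after $\tau(c)$ to lower-bound the coefficient of the orientation dissipation by $\beta$, and then use the defining inequalities of ${\mathcal S}^p_2(k)$ and ${\mathcal S}^R_2(k)$ to cancel the $k_e\rho_p$ (resp.\ $k_e\rho_R$) driving term, with strict negativity supplied by the leftover quadratic terms under Assumption \ref{ass:2}. The only cosmetic difference is that the paper produces $\beta$ as a lower bound on $\lambda_{min}(\sym(\mes\ebwoi))$ appearing in the trace inequality of Lemma \ref{lem:A_1}, obtained via a Hoffman--Wielandt perturbation estimate, rather than as a $\phi$-versus-$\|e_R\|^2$ comparability constant on a geodesic ball, but it plays exactly the role you assign to it.
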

\begin{proof}
See Appendix \ref{app:4}.
\end{proof}
From (\ref{ind_mean}), $\beta$ can be estimated
by set-valued prior information on
the target orientations i.e. $\phi_m$.


\subsection{Averaging Performance}
\label{sec:4.2}

%
We are now ready to state the main result of this section
on averaging accuracy attained by the networked visual motion observer
(\ref{eqn:net_visual_motion_observer}).
\begin{theorem}
\label{thm_journal:1}
Suppose that all the assumptions in Lemma \ref{lem_journal:2} hold.
Then, for any $\epsilon \in (0, 1)$,
position estimates $(\bar{p}_{io_i})_{i\in \V}$ achieve $\varepsilon_p$-level averaging performance
with 
\begin{eqnarray}
\varepsilon_p
= \left\{
\begin{array}{ll}
1 - (1-\epsilon)\left(
1 - \sqrt{kW}\right)^2 &
\mbox{if }k\leq 1/W\\
1&\mbox{otherwise}
\end{array}
\right.,
\label{16a}
\end{eqnarray}
and orientation estimates $(\ebioi)_{i\in \V}$ achieve 
$\varepsilon_R$-level averaging performance
with 
\begin{eqnarray}
\varepsilon_R
= \left\{
\begin{array}{ll}
1 - (1-\epsilon)\left(
\sqrt{\beta} - \sqrt{kW}\right)^2, 
&\mbox{if }
k\leq \beta/W,\ \beta > 0\\
1,&\mbox{otherwise}
\end{array}
\right.,
\label{16b}
\end{eqnarray}
where $W$ is defined in (\ref{eqn:defD}).
\end{theorem}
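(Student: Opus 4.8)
The plan is to fold the three auxiliary lemmas into an invariant-set argument on the energy functions $U_p$ and $U_R$ of Lemma \ref{lem_journal:2}. First I would observe that, by Lemma \ref{lem_journal:2}, the estimates reach $\Omega_p(1)$ and ${\mathcal S}^R_1$ in finite time and, by the positive invariance of ${\mathcal S}$ established in Lemma \ref{lem_journal:1}, never leave them; so it suffices to study the flow inside these sets for $t\geq\tau(c)$. The decomposition $\Omega_p(1)={\mathcal S}^p_2(k)\cup{\mathcal S}^p_3(k,\epsilon)\cup\Omega_p(\epsilon)$ (and its orientation counterpart) splits the state into a region where Lemma \ref{lem_journal:3} forces $\dot U_p<0$, a target region $\Omega_p(\epsilon)$, and the transient layer ${\mathcal S}^p_3(k,\epsilon)$ of small disagreement but non-negligible error. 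The heart of the proof is then to show that $U_p$ cannot settle above $\varepsilon_p\rho_p$: outside $\Omega_p(\varepsilon_p)$ the trajectory is either in ${\mathcal S}^p_2(k)$, where $U_p$ strictly decreases, or it is traversing ${\mathcal S}^p_3(k,\epsilon)$, whose reachable error level must be bounded by $\varepsilon_p\rho_p$. A standard argument on the boundary $\{U_p=\varepsilon_p\rho_p\}$ then gives ultimate containment in $\Omega_p(\varepsilon_p)$, which is exactly $\varepsilon_p$-level averaging performance in the sense of Definition \ref{def:1}.

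The key quantitative estimate, and the step I expect to be most delicate, is a graph-theoretic bound converting the small-disagreement condition (the complement of ${\mathcal S}^p_2(k)$) into a bound on the deviation of the estimates from consensus. Fixing a root $i_0$ and a spanning tree $G_T\in{\mathcal T}(i_0)$, for each $i\in\V$ I would telescope $\bar p_{wo_i}-\bar p_{wo_{i_0}}$ along the tree path $P_{G_T}(i)$ and apply Cauchy--Schwarz, obtaining $\|\bar p_{wo_i}-\bar p_{wo_{i_0}}\|^2\leq d_{G_T}(i)\sum_{E\in P_{G_T}(i)}\|\Delta_E\|^2$ with $\Delta_E$ the edge difference. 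Summing over $i$, exchanging the order of summation, and invoking the definition of $\tilde D(G_T)$ in (\ref{eqn:defD}) gives $\sum_{i\in\V}\|\bar p_{wo_i}-\bar p_{wo_{i_0}}\|^2\leq\tilde D(G_T)\sum_{E\in\E_T}\|\Delta_E\|^2$; minimizing over $G_T$ and $i_0$ and bounding the tree-edge sum by the full edge sum yields the constant $W$ and the inequality $\sum_{i\in\V}\|\bar p_{wo_i}-\bar p_{wo_{i_0}}\|^2\leq 2W\sum_{i\in\V}\sum_{j\in\N_i}\tfrac12\|\bar p_{wo_i}-\bar p_{wo_j}\|^2$. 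Since the centroid minimizes the sum of squared distances, the left-hand side also dominates the spread of the estimates about their own mean.

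With this inequality I would bound $U_p$ by splitting the error into its consensus component (the bias of the estimates' mean relative to $p^*=\tfrac1n\sum_{j}p_{wo_j}$) and its disagreement component. The disagreement component is controlled by $2W$ times the edge-disagreement, which on the complement of ${\mathcal S}^p_2(k)$ is below $2Wk\rho_p$; the bias component is precisely what the visual feedback term $k_ee_{ei}$ in (\ref{eqn:net_visual_motion_observer}) keeps near the true average. Writing the triangle inequality in $\ell^2$ form, $\sqrt{2U_p}\leq\sqrt{2Wk\rho_p}+(\text{bias})$, and absorbing the cross term by Young's inequality with the free weight $\epsilon$, produces the bracket $(1-\sqrt{kW})^2$ scaled by $(1-\epsilon)$, i.e. (\ref{16a}); the condition $k\leq 1/W$ is exactly what keeps $1-\sqrt{kW}$ nonnegative, and for larger $k$ only the trivial bound $\varepsilon_p=1$ coming from $\Omega_p(1)$ survives. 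The monotone dependence on $k=k_e/k_s$ then formalizes the stated insight that accuracy improves as the mutual feedback dominates the visual feedback.

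Finally, the orientation bound (\ref{16b}) follows the same route with $\phi(\mesi\ebioi)$ in place of squared distances and the chordal quantities $\phi(\mebwoi\ebwoj)$ as the disagreement. The main obstacle here is the non-Euclidean geometry: the telescoping/Cauchy--Schwarz step must be executed on $SO(3)$, where $\phi$ is only locally quadratic, so I would use the positive invariance of ${\mathcal S}$ (Lemma \ref{lem_journal:1}) to remain where $\mebioi\esi>0$ and then lower-bound the \emph{linear} gain of the synchronizing term by the factor $\beta=1-\sqrt{2(\phi(\mes\ewol)+c)}$ furnished by (\ref{ind_mean}). This is why $\beta$ multiplies the disagreement in ${\mathcal S}^R_2(k)$ and why the bracket becomes $(\sqrt{\beta}-\sqrt{kW})^2$ with feasibility $k\leq\beta/W,\ \beta>0$. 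I expect the genuinely hard part to be making the consensus-bias argument rigorous simultaneously with this $SO(3)$ linearization, since both the attainable error level and the effective feedback gain degrade as the estimates approach the boundary of ${\mathcal S}$, precisely where $\beta$ and the constant $c$ must be tracked with care.
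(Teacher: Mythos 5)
Your architecture matches the paper's: finite-time entry into $\Omega_p(1)$ and ${\mathcal S}^R_1$ via Lemma \ref{lem_journal:2}, positive invariance of ${\mathcal S}$ via Lemma \ref{lem_journal:1}, the decomposition into ${\mathcal S}^{p}_2,{\mathcal S}^p_3,\Omega_p(\varepsilon_p)$ with Lemma \ref{lem_journal:3} handling ${\mathcal S}^p_2$, and the telescoping/Cauchy--Schwarz bound along a spanning tree that produces the constant $W$ (your inequality $\sum_{i\in\V}\|\bar q_i-\bar q_{j^*}\|^2\leq W\sum_{i\in\V}\sum_{j\in\N_i}\|\bar q_i-\bar q_j\|^2$ is exactly the paper's (\ref{eq5})--(\ref{eq6})). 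The gap is in how you handle ${\mathcal S}^p_3$. You propose a \emph{static} bound on $U_p$ there, splitting it into a spread term (controlled by $2Wk\rho_p$ outside ${\mathcal S}^p_2(k)$) plus a bias term which you assert is ``precisely what the visual feedback term keeps near the true average.'' That assertion cannot be made statically: inside ${\mathcal S}^p_3(k,\varepsilon_p)$ the only a priori bound on the collective bias is $U_p\leq\rho_p$, which is not enough to conclude $U_p\leq\varepsilon_p\rho_p$, so the set ${\mathcal S}^p_3(k,\varepsilon_p)$ is not empty and no ``reachable error level'' inside it is bounded by $\varepsilon_p\rho_p$ without using the dynamics. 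The paper instead proves $\dot U_p<0$ throughout ${\mathcal S}^p_3(k,\varepsilon_p)$, and the step your proposal is missing is the one that makes this work: the variational characterization of the average. From $p^*=\arg\min_g\sum_j\psi(g^{-1}g_{wo_j})$ one gets $\sum_{i\in\V}\|q_i-\bar q_{j^*}\|^2\geq\sum_{i\in\V}\|q_i-p^*\|^2=2\rho_p$ (the paper's (\ref{eq11})), i.e.\ the targets' spread about the near-consensus point $\bar q_{j^*}$ is at least $2\rho_p$. Combining this with the Young-type inequality $\|q_i-\bar q_i\|^2\geq\alpha\|q_i-\bar q_{j^*}\|^2-\frac{\alpha}{1-\alpha}\|\bar q_i-\bar q_{j^*}\|^2$, the tree bound, and the complement of ${\mathcal S}^p_2(k)$ gives $\sum_i\|q_i-\bar q_i\|^2\geq 2\rho_p\sup_{\alpha\in(0,1)}(\alpha-\frac{kW\alpha}{1-\alpha})=2\rho_p(1-\sqrt{kW})^2$; inserted into the expression for $\dot U_p$ this forces $\dot U_p\leq -a_p<0$ whenever $U_p>\varepsilon_p\rho_p$. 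The bracket $(1-\sqrt{kW})^2$ is the optimum over $\alpha$ of that lower bound, not the result of a Young absorption with weight $\epsilon$; in the paper $\epsilon$ merely reserves a fraction of the dissipation to keep the residual $a_p$ strictly positive. You invoke centroid minimality for the \emph{estimates}, but the inequality that is actually needed is minimality of $p^*$ (resp.\ $\es$) for the \emph{targets}.

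The same omission recurs in your orientation argument: besides the $SO(3)$ issues you correctly anticipate (the invariance of ${\mathcal S}$ and the factor $\sigma_i\geq\beta$ from the Hoffman--Wielandt bound in Lemma \ref{lem_journal:3}), the proof needs $\sum_{i\in\V}\phi(\mebwojs\ewoi)\geq\sum_{i\in\V}\phi(\mes\ewoi)=\rho_R$ from the definition (\ref{eqn:euclidean_mean}) of the Euclidean mean, fed into the analogue $\phi(\mebwoi\ewoi)\geq\alpha\phi(\mebwojs\ewoi)-\frac{\alpha}{1-\alpha}\phi(\mebwojs\ebwoi)$ and optimized over $\alpha$ to give $(\sqrt\beta-\sqrt{kW})^2$. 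Without these two ingredients --- the derivative (rather than static) estimate on ${\mathcal S}^{p}_3,{\mathcal S}^R_3$ and the variational lower bound tied to the definition of the average --- the stated values of $\varepsilon_p$ and $\varepsilon_R$ cannot be derived.
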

\begin{proof}
See Appendix \ref{app:6}.
\end{proof}

Suppose that $\epsilon$ is taken sufficiently close to $0$.
Then, we see that both of the parameters
$\varepsilon_p$ and $\varepsilon_R$ become small
as the term $\sqrt{kW}$ approaches to $0$.
Note that if we use a sufficiently small $k$ ($k_s\gg k_e$) 
in (\ref{eqn:ce_update}),
the term is approximated by $0$.
Here, we see an essential difference between
the position and orientation estimates.
The definition of $\varepsilon_p$ with $\epsilon \approx 1$
indicates that we can get arbitrarily accurate
estimation of the average $p_i^*$ by choosing a sufficiently small $k$.
In contrast, we see from the definition of $\varepsilon_R$ that
an offset associated with $\sqrt{\beta}(< 1)$ occurs
for the orientation estimates regardless of the parameter $k$.
From the definition of $\beta := 1 - \sqrt{2(\phi(\mes\ewol) + c)}$, if
the target object's orientation $\ewol$ 
is sufficiently close to the average $\mes$, i.e.
if $\ewoi$ and $\ewoj$ are close among all $i,j\in \V$ enough to
approximate all the orientations by  
matrices on a tangent vector space of $SO(3)$ at $\ewoi$,
then it becomes close to $0$ and the average is accurately estimated 
by the networked visual motion observer (\ref{eqn:net_visual_motion_observer}).
Otherwise, the accuracy might degrade, though it is more accurate 
at least than the case in the absence of communication.

\section{Tracking Performance Analysis}
\label{sec:6}

In this section, we analyze the tracking performance
of the estimates $\{\bar{g}_{io_i}\}_{i\in \V}$ to the average $g_i^*$
for moving targets when the networked visual motion observer is applied 
to the visual sensor networks under the following assumption.
\begin{assumption}
\label{ass:3}\mbox{}\\
(i) The target body velocities $V^b_{wo_i}(t),\ i\in \V$ are continuous in 
	   $t$ and bounded as
\begin{equation}
\|v^b_{wo_i}(t)\|_2^2\leq \bar{w}_p^2,\
\|\omega^b_{wo_i}(t)\|^2 \leq \bar{w}_R^2\ {\forall i\in \V},\ t\geq 0.
\label{dist}
\end{equation}
(ii) For all $t\geq 0$, there exists $(i(t),j(t)) \in \V\times \V$ such that
$p_{wo_{i(t)}}\neq p_{wo_{j(t)}}$ and 
$e^{\hat{\xi}\theta_{wo_{i(t)}}}\neq e^{\hat{\xi}\theta_{wo_{j(t)}}}$.\\
(iii)
$\mewoj(t)\ewoi(t) > 0$ for all $i,j \in \V$ and $t\geq 0$.
\end{assumption}

\subsection{Description of Average Motion}
\label{sec:6.1}

In this subsection, we first formulate the motion of
the average $g^* = (p^*,\es)$.
The behavior of the position average $p^*$ is
clearly described by 
\begin{equation}
\dot{p}^* = \es v^{b,*},\ v^{b,*} := \mes\left(\frac{1}{n}\sum_{i\in \V} \ewoi v^b_{woi}\right)
\label{pesb}
\end{equation}
from the definition of $p^* = \frac{1}{n}\sum_{i\in \V}p_{wo_i}$.
Meanwhile, the trajectory of the orientation average $\es$ 
described by (\ref{proj}) satisfies the following lemma.
\begin{lemma}
\label{lem:9}
Under Assumption \ref{ass:3}, the average $\es$ is continuously differentiable.
\end{lemma}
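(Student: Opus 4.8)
The plan is to express $\es(t)$ as the composition of a continuously differentiable matrix curve $t\mapsto S(t)$ with the polar-projection map onto $\SO(3)$, and then verify that this map is smooth along the whole trajectory $S(t)$. First I would record that $S(t)$ is $C^1$. Each target orientation obeys the body-velocity kinematics $\dot{\ewoi} = \ewoi\hat{\omega}_{wo_i}$, and Assumption \ref{ass:3}(i) guarantees that $\omega_{wo_i}(t)$ is continuous, so every $\ewoi(t)$ is continuously differentiable. Hence $S(t) = \frac{1}{n}\sum_{j\in\V}\ewoj(t)$, a finite average of $C^1$ matrix functions, is itself $C^1$, with $\dot{S} = \frac{1}{n}\sum_{j\in\V}\ewoj\hat{\omega}_{wo_j}$.

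The crucial step is to show that $S(t)$ is nonsingular with $\det S(t) > 0$ for every $t\geq 0$, using Assumption \ref{ass:3}(iii). That assumption states that each matrix $\mewoj\ewoi$ is positive definite; taking transposes, $\mewoi\ewoj$ is positive definite as well. Consequently $\mewoi S = \frac{1}{n}\sum_{j\in\V}\mewoi\ewoj$ is an average of positive definite matrices and is therefore positive definite, hence invertible. Since $\ewoi\in\SO(3)$ is invertible, $S = \ewoi(\mewoi S)$ is invertible with $\det S = \det(\mewoi S) > 0$. Thus $S(t)$ lies in the open set $GL^+(3)$ of matrices with positive determinant for all $t$.

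On $GL^+(3)$ the projection $\proj$ in (\ref{proj}) coincides with the orthogonal factor of the polar decomposition, namely $\proj(S) = S(S^T S)^{-1/2}$: when $S$ is invertible its singular values are strictly positive, so from the singular value decomposition $S = U_S\Sigma V_S^T$ one computes $S(S^T S)^{-1/2} = U_S V_S^T$, and $\det S > 0$ forces the orthogonal factor $U_S V_S^T$ to have determinant $+1$, placing it in $\SO(3)$. The map $M\mapsto M^T M$ is polynomial and $A\mapsto A^{-1/2}$ is $C^\infty$ on positive definite symmetric matrices, so $M\mapsto M(M^T M)^{-1/2}$ is $C^\infty$ on $GL^+(3)$. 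Therefore $\es(t) = \proj(S(t))$, being the composition of the $C^1$ curve $S(t)$ with a smooth map, is continuously differentiable, which is the claim.

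The main obstacle is the second step: ensuring that $S(t)$ never becomes singular, since $\proj$ loses differentiability precisely where the singular values of its argument collide or vanish. Assumption \ref{ass:3}(iii) is exactly what rules this out, through the positive-definiteness argument above; once $S(t)\in GL^+(3)$ is secured for all $t$, the remaining smoothness bookkeeping is routine.
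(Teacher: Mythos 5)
Your proof is correct and follows essentially the same route as the paper: both write $\es(t)$ via the polar decomposition $S(S^TS)^{-1/2}$, both use Assumption \ref{ass:3}(iii) to show $\mewoi S(t)>0$ and hence that $S(t)$ stays invertible, and both conclude by differentiating the resulting formula. Your version is in fact slightly more careful than the paper's, which asserts that $P_S(t)=(S^TS)^{1/2}$ is ``clearly differentiable'' where you explicitly invoke smoothness of $A\mapsto A^{-1/2}$ on positive definite matrices and check that $\det S>0$ puts the polar factor in $\SO(3)$.
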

\begin{proof}
From the polar decomposition, we get $S(t) = \es(t) P_S(t)$ \cite{M_SIAM02},
where $S(t) = \frac{1}{n}\sum_{i\in \V}\ewoi$ and $P_S^2(t) = S^T(t)S(t)$.
Under Assumption \ref{ass:3}(iii), 
we have $\mewoj(t) S(t) > 0$
and hence $P_S(t)$ is invertible
for all ${t}\geq 0$.
Thus, the average $\es$ is given by $\es(t) = S(t)P_S^{-1}(t)$.
From (\ref{eqn:RRBM}), the matrices $S(t)$ and $P_S(t)$
are clearly differentiable from their definitions and hence 
$\des$ is well defined.
Moreover, from Assumption \ref{ass:3}(i),
both of $\dot{S}(t)$ and $\dot{P}_S(t)$ are continuous and 
$\frac{d}{dt}(P_S^{-1}) = P_S^{-1}(t)\dot{P}_SP_S^{-1}$ is also continuous,
which implies that $\des(t)$ is also continuous.
Hence, the average $\es$ is continuously differentiable.
This completes the proof.
\end{proof}

Moreover, since $\es(t) \in SO(3)$ holds for all $t\geq 0$,
the derivative ${\des}$ has to satisfy
$\des \in T_{\es}SO(3)$, 
where $T_{\es}SO(3) := \{\es X |\ X \in so(3)\}$ is the tangent
space of the manifold $SO(3)$ at $\es$.
Namely, the trajectory of the Euclidean mean $\es$ is described by
the differential equation
$\des = \es \hat{\omega}^{b,*}$
with some body velocity $\hat{\omega}^{b,*}\in so(3)$.

We next clarify a relation between
velocities $V^{b,*} := (v^{b,*}, \omega^{b,*})$ 
 and $V^b_{wo_i} = 
(v^b_{wo_i}, \omega^b_{wo_i}),\ {i\in \V}$.
We first define
$w_p := (v^b_{wo_i})_{i\in \V}$ and $w_R := (\omega^b_{wo_i})_{i\in \V}$.
Since it is easy from (\ref{pesb}) to obtain 
$\|v^{b,*}\|^2 \leq \|w_p\|^2/n$,
we mention only a relation between $\omega^{b,*}$ and $w_R$ in the following.
\begin{lemma}
\label{lem:8}
Suppose that the target orientations $(\ewoi)_{i\in \V}$ satisfy
\begin{equation}
\left\|\es(t) - S(t)\right\|_F\leq \gamma\ 
 {\forall t}\geq 0
\label{ac2}
\end{equation}
for some $\gamma > 0$.
Then, the following inequality holds.
\begin{equation}
 \|\omega^{b,*}(t)\|^2 < \frac{\mu^2(\gamma)}{n}\|w_R(t)\|^2,\
\mu(\gamma) := \frac{\sqrt{2}}{\sqrt{2}-\gamma}
\label{ine}
\end{equation}
\end{lemma}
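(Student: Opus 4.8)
The plan is to characterize the average body velocity $\omega^{b,*}$ through the polar decomposition $S = \es P_S$ established in the proof of Lemma \ref{lem:9}, where $P_S = (S^TS)^{1/2}$ is symmetric positive definite and, under Assumption \ref{ass:3}(iii), invertible. Since the average obeys $\des = \es\hat{\omega}^{b,*}$ and $\es$ is orthogonal, I first record the exact identity $\|\omega^{b,*}\| = \tfrac{1}{\sqrt{2}}\|\hat{\omega}^{b,*}\|_F = \tfrac{1}{\sqrt{2}}\|\des\|_F$, which reduces the task to bounding $\|\hat{\omega}^{b,*}\|_F$. To get an equation for $\hat{\omega}^{b,*}$, I would differentiate the symmetric relation $\es^T S = P_S$ and take the skew-symmetric part; using $\des^T\es = -\hat{\omega}^{b,*}$ this yields the Lyapunov-type equation
\[
\tfrac{1}{2}\left(\hat{\omega}^{b,*}P_S + P_S\hat{\omega}^{b,*}\right) = \sk(\es^T\dot{S}),
\]
whose left-hand side is a linear, skew-preserving operator acting on $\hat{\omega}^{b,*}$.

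Next I would invert this operator with a sharp norm bound. Diagonalizing $P_S = Q\,\diag(\sigma_1,\sigma_2,\sigma_3)\,Q^T$ with $Q\in\SO(3)$ and $\sigma_1\geq\sigma_2\geq\sigma_3>0$ the singular values of $S$, a componentwise computation in the rotated frame shows that $X\mapsto\tfrac{1}{2}(XP_S+P_SX)$ multiplies the skew entry in the $(k,l)$ plane by $\tfrac{1}{2}(\sigma_k+\sigma_l)$. Hence its smallest eigenvalue on $\so(3)$ is $\tfrac{1}{2}(\sigma_2+\sigma_3)$, and, the Frobenius norm being invariant under conjugation by $Q$,
\[
\|\hat{\omega}^{b,*}\|_F \leq \frac{2}{\sigma_2+\sigma_3}\,\|\sk(\es^T\dot{S})\|_F \leq \frac{2}{\sigma_2+\sigma_3}\,\|\dot{S}\|_F,
\]
where I used that $\sk$ and left multiplication by the orthogonal $\es$ do not increase the Frobenius norm. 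It then remains to bound $\|\dot{S}\|_F$: from $\tfrac{d}{dt}\ewoi = \ewoi\hat{\omega}^b_{wo_i}$ one gets $\|\dot{S}\|_F\leq\tfrac{1}{n}\sum_{i\in\V}\|\hat{\omega}^b_{wo_i}\|_F = \tfrac{\sqrt{2}}{n}\sum_{i\in\V}\|\omega^b_{wo_i}\|\leq\tfrac{\sqrt{2}}{\sqrt{n}}\|w_R\|$ by the triangle and Cauchy--Schwarz inequalities.

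The crux, and the step that produces the precise constant $\mu(\gamma)$, is a lower bound on $\sigma_2+\sigma_3$ extracted from the accuracy hypothesis (\ref{ac2}). Writing the singular value decomposition $S=U\Sigma V^T$ gives $\es = \proj(S) = UV^T$, hence $\|\es - S\|_F = \|I_3 - \Sigma\|_F = (\sum_{k}(1-\sigma_k)^2)^{1/2}\leq\gamma$. Applying Cauchy--Schwarz to the two smallest singular values only, $2-(\sigma_2+\sigma_3) = (1-\sigma_2)+(1-\sigma_3)\leq\sqrt{2}\,((1-\sigma_2)^2+(1-\sigma_3)^2)^{1/2}\leq\sqrt{2}\,\gamma$, so that $\sigma_2+\sigma_3\geq 2-\sqrt{2}\gamma = \sqrt{2}(\sqrt{2}-\gamma)$. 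Substituting into the chain above,
\[
\|\omega^{b,*}\| = \tfrac{1}{\sqrt{2}}\|\hat{\omega}^{b,*}\|_F \leq \frac{\sqrt{2}}{\sigma_2+\sigma_3}\|\dot{S}\|_F \leq \frac{\sqrt{2}}{\sqrt{2}(\sqrt{2}-\gamma)}\cdot\frac{\sqrt{2}}{\sqrt{n}}\|w_R\| = \frac{\mu(\gamma)}{\sqrt{n}}\|w_R\|,
\]
and squaring yields (\ref{ine}); the strict inequality reflects the non-saturation of the Frobenius and Cauchy--Schwarz estimates under Assumption \ref{ass:3}(ii) for nonzero disturbance $w_R$.

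I expect the main obstacle to be exactly this constant-chasing. The naive per-axis estimate $\sigma_k\geq 1-\gamma$ only gives $\sigma_2+\sigma_3\geq 2-2\gamma$ and hence the weaker factor $1/(1-\gamma)$, which is too loose to reach $\mu(\gamma)$; the sharp result hinges on applying Cauchy--Schwarz to the pair $(\sigma_2,\sigma_3)$ to recover the $\sqrt{2}$ hidden in $\mu(\gamma)=\sqrt{2}/(\sqrt{2}-\gamma)$. A secondary technical point is justifying the eigenstructure of the skew-preserving Lyapunov operator and that its inverse is controlled purely by the two smallest singular values; all the remaining ingredients (differentiability of $\es$, invertibility of $P_S$) are already supplied by Lemma \ref{lem:9} and Assumption \ref{ass:3}.
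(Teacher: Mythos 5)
Your argument is correct and arrives at the right constant, but it is a genuinely different route from the paper's. The paper's proof is essentially an appeal to an external result: it invokes Soderkvist's perturbation analysis of the orthogonal Procrustes problem \cite{S93}, which under (\ref{ac2}) gives $\sup_{\Delta S}\|\Delta \es\|_F^2 \le 4\bigl(1-(1-\mu^2(\gamma)\bar{s}/2)^{1/2}\bigr) < \mu^2(\gamma)\bar{s}$ for finite perturbations with $\|\Delta S\|_F^2\le\bar{s}$, and then passes to the limit $t_{\Delta}\to 0$ to obtain $\|\des\|_F^2 < \mu^2(\gamma)\|\dot{S}\|_F^2$, finishing with $n\|\dot{S}\|_F^2\le\|w\|^2$ (the identification $\|\omega^{b,*}\|^2=\|\des\|_F^2$ and that last bound each hide a factor of $2$ that cancels in the product; you track both factors explicitly). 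You instead derive the differential sensitivity from scratch: differentiating $\es^T S = P_S$ and taking the skew part gives the Sylvester equation $\tfrac12(\hat{\omega}^{b,*}P_S+P_S\hat{\omega}^{b,*})=\sk(\es^T\dot{S})$, whose inverse on $so(3)$ is controlled by $2/(\sigma_2+\sigma_3)$, and (\ref{ac2}) yields $\sigma_2+\sigma_3\ge\sqrt{2}(\sqrt{2}-\gamma)$ via $\|\es-S\|_F=\|I_3-\Sigma\|_F$ and Cauchy--Schwarz on the two smallest singular values --- which is precisely the mechanism packaged inside the cited theorem. Your route buys transparency (one sees exactly where $\mu(\gamma)$ comes from and that only $\sigma_2+\sigma_3$ matters) at the cost of length; the paper's buys brevity at the cost of a black-box citation and a slightly delicate limit step, since strict inequalities need not survive $t_{\Delta}\to 0$. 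The one caveat on your side is that your chain naturally produces ``$\le$'' rather than the strict ``$<$'' of (\ref{ine}), and your appeal to Assumption \ref{ass:3}(ii) for non-saturation is not actually checked (indeed the bound degenerates when $w_R=0$); but the paper's own limit argument has the identical weakness, so this is a shared cosmetic issue rather than a gap in your proof.
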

\begin{proof}
See Appendix \ref{app:8}
\end{proof}
Though we omit the proof, $\left\|\es(t) - S(t)\right\|_F$
is also upper bounded by $\phi_m$ and hence
$\gamma$ is estimated by prior information on the target orientations.

\subsection{Tracking Performance}
\label{sec:6.2}

Let us consider the whole networked system $\Sigma_{track}$ 
consisting of the relative rigid body motion
(\ref{eqn:RRBM}) for all $i\in \V$ and the networked visual motion observer 
(\ref{eqn:net_visual_motion_observer}).
Here, we regard the collections of body velocities of
the target objects $(V^b_{wo_i})_{i\in \V}$,
i.e. $w = (w_p,w_R)$, as the 
external disturbance to $\Sigma_{track}$ and evaluate the 
error between the estimates $(\bar{g}_{io_i})_{i\in \V}$ and the average $g^*_i$
in the presence of the disturbance $w$.
Namely, we let the error $(\{g_i^*\}^{-1}\bar{g}_{io_i})_{i\in \V}$
be the output signal of $\Sigma_{track}$.

Unlike the static objects case, 
$\rho_p = \frac{1}{2}\sum_{i\in \V}\|p_{io_i} - p^*_i\|^2$ and
$\rho_R = \sum_{i\in \V}\phi(\mesi\eioi)$  
are also time-varying.
We thus define the parameters
\begin{eqnarray}
\rho'_p := \sup_{t} \rho_p(t),\
\rho'_R := \sup_{t} \rho_R(t)
\nonumber
\end{eqnarray}
assuming $\rho_p' < \infty$ and redefine the sets $\Omega_p'$
and $\Omega_R'$ by just replacing $\rho_p$ and $\rho_R$
in (\ref{eqn:ome_p}) and (\ref{eqn:ome_R}) by
$\rho_p'$ and $\rho_R'$, respectively.
The parameters $\rho'_p$ and $\rho'_R$ are the suprimum of the distance from
the estimate to the average when $g_{io_i}$ is correctly
estimated and hence they are also indicators of
the best average estimation performance in the absence of communication.
Note however that the visual motion observer 
(\ref{eqn:visual_motion_observer}) cannot correctly estimate
$g_{io_i}$ as long as the object is moving with unknown velocity.

The problem to be considered here is
redefined as follows.
\begin{definition}
\label{def:3}
The position estimates $(\bar{p}_{io_i})_{i\in \V}$
are said to achieve $\varepsilon$-level 
tracking performance for a positive scalar $\varepsilon$
if there exists a finite $T$
such that 
$(\bar{p}_{io_i}(t))_{i\in \V} \in \Omega_p'(\varepsilon)\ 
{\forall w}\in {\mathcal W} \mbox{ and }
t \geq T$,
where ${\mathcal W}$ is the set of the
disturbance signal $w(\cdot)$
satisfying Assumption \ref{ass:3}.
Similarly, the estimates $(\ebioi)_{i\in \V}$
are said to achieve $\varepsilon$-level 
tracking performance
if there exists a finite $T$ such that
$(\ebioi(t))_{i\in \V} \in \Omega_R'(\varepsilon)\ 
{\forall w}\in {\mathcal W} \mbox{ and } {t}\geq T$.
\end{definition}

In terms of the tracking performance defined above, we have the following theorem.
\begin{theorem}
\label{thm_journal:3}
Suppose that the estimates $\bar{g}_{io_i}$ are updated 
according to 
(\ref{eqn:net_visual_motion_observer}).
Under Assumptions 1 and 3, 
if $k_e > \mu^2(\gamma)$ for $\gamma$ satisfying (\ref{ac2})
and $(\ebioi(t))_{i\in \V}\in {\mathcal S}\ 
 {\forall t}\geq 0$, 
the estimates $(\bar{p}_{io_i})_{i\in \V}$
and $(\ebioi)_{i\in \V}$ 
achieve $\varepsilon_p'$ and $\varepsilon_R'$-level
tracking performances respectively with
\begin{eqnarray}
\varepsilon_p' := 1 + \frac{1}{k_e-1} + \frac{\bar{w}_p^2}{\rho'_p(k_e-1)},\
\varepsilon_R' := 1 + \frac{\mu^2(\gamma)}{k_e-\mu^2(\gamma)} + \frac{\bar{w}_R^2}{\rho'_R(k_e-\mu^2(\gamma))}.
\nonumber
\end{eqnarray}
\end{theorem}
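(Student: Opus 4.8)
The plan is to run an ultimate-boundedness (input-to-state) argument on the \emph{same} energy functions $U_p=\frac12\sum_{i\in\V}\|p^*-\bar p_{wo_i}\|^2$ and $U_R=\sum_{i\in\V}\phi(\mes\ebwoi)$ introduced for the averaging analysis, but now treating the motion of the average pose $g^*$ as an external disturbance. First I would record the reference dynamics: Lemma~\ref{lem:9} gives $\des=\es\hat\omega^{b,*}$, and (\ref{pesb}) gives $\dot p^*=\es v^{b,*}$; Assumption~\ref{ass:3}(i) together with $\|v^{b,*}\|^2\le\|w_p\|^2/n$ and Lemma~\ref{lem:8} then bound the disturbance as $\|v^{b,*}\|\le\bar w_p$ and $\|\omega^{b,*}\|<\mu(\gamma)\bar w_R$, where (\ref{ac2}) is precisely what makes Lemma~\ref{lem:8} applicable. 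Since the cameras are static, the estimate obeys $\dot{\bar g}_{wo_i}=\bar g_{wo_i}\hat u_{ei}$, so the visual-feedback part of $u_{ei}$ drives $\bar g_{wo_i}$ toward the \emph{moving} true target $g_{wo_i}$ while the mutual-feedback part drives it toward the neighbours' estimates $\bar g_{wo_j}$.

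For the position estimates I would extract the translational dynamics $\dot{\bar p}_{wo_i}=k_e(p_{wo_i}-\bar p_{wo_i})+k_s\sum_{j\in\N_i}(\bar p_{wo_j}-\bar p_{wo_i})$ and differentiate $U_p$ along the closed loop. The visual gain contributes the dissipative term $-2k_eU_p$; the mutual-feedback term collapses, for a balanced graph, into a non-positive Laplacian-type quadratic form and can be discarded; the coupling between $\bar p_{wo_i}-p^*$ and $p_{wo_i}-p^*$ is controlled by $\rho_p\le\rho_p'$; and $\dot p^*$ enters as the forcing. A completion of squares on the last two contributions, legitimate once $k_e>1$, is designed to yield a differential inequality of the form $\dot U_p\le-(k_e-1)U_p+k_e\rho_p'+\bar w_p^2$, whose ultimate value $U_p\le(k_e\rho_p'+\bar w_p^2)/(k_e-1)=\varepsilon_p'\rho_p'$ is exactly the condition $(\bar p_{io_i})_{i\in\V}\in\Omega_p'(\varepsilon_p')$ reached in finite time.

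The orientation estimates are treated in parallel and constitute the main obstacle. Here $\dot{\ebwoi}=\ebwoi\hat\omega_{uei}$, and differentiating $U_R=\sum_{i\in\V}\tr(I_3-\mes\ebwoi)$ requires the trace/skew identities relating $\tr(\hat a R)$ to $a^{T}e_R(R)$ so as to turn the visual- and mutual-feedback terms into inner products of the error vectors $e_R(\cdot)$. As in the pose-synchronization analysis of \cite{TCST09}, the mutual-feedback term assembles into a non-positive synchronization term, while the visual-feedback term must be rendered coercive against $\phi(\mes\ebwoi)$. The delicate point is that this coercivity and the sign of the cross terms hold only while each $\ebwoi$ remains in the set ${\mathcal S}$ (equivalently $\phi<2$), which is exactly why the hypothesis $(\ebioi(t))_{i\in\V}\in{\mathcal S}\ \forall t\ge0$ is imposed; establishing that estimate on $SO(3)$ is where the real work lies. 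The reference now injects the angular disturbance $\omega^{b,*}$, which --- in contrast to the position case with unit gain $\|\dot p^*\|\le\bar w_p$ --- is amplified by $\mu(\gamma)$ through Lemma~\ref{lem:8}, and this is what replaces the threshold $k_e>1$ by $k_e>\mu^2(\gamma)$.

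Finally I would close the orientation estimate by the same Young's-inequality mechanism, choosing the weight in the forcing term equal to the disturbance gain $\mu(\gamma)$: this places $\mu^2(\gamma)$ on the decay coefficient while the bound $\|\omega^{b,*}\|^2<\mu^2(\gamma)\bar w_R^2$ leaves the \emph{raw} magnitude $\bar w_R^2$ in the forcing, giving $\dot U_R\le-(k_e-\mu^2(\gamma))U_R+k_e\rho_R'+\bar w_R^2$ and hence the ultimate bound $U_R\le(k_e\rho_R'+\bar w_R^2)/(k_e-\mu^2(\gamma))=\varepsilon_R'\rho_R'$, i.e. $(\ebioi)_{i\in\V}\in\Omega_R'(\varepsilon_R')$. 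The concluding step is a standard comparison/ultimate-boundedness lemma applied to these two scalar inequalities; I expect essentially all of the difficulty to sit in the $SO(3)$ coercivity estimate and the sign-definiteness of the synchronization term, after which the disturbance bookkeeping and the finite-time entry into $\Omega_p'(\varepsilon_p')$ and $\Omega_R'(\varepsilon_R')$ are routine.
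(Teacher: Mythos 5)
Your proposal is correct and follows essentially the same route as the paper's Appendix F: the same energy functions $U_p$ and $U_R$, the same reuse of the averaging-analysis bounds on the feedback terms (valid because $(\ebioi)_{i\in\V}\in{\mathcal S}$ keeps $\lambda_{min}(\sym(\mes\ebwoi))>0$), the same injection of $\dot p^*=\es v^{b,*}$ and $\des=\es\hat\omega^{b,*}$ as disturbances bounded via Lemma~\ref{lem:8}, and the same completion of squares with weight $\mu(\gamma)$ leading to $\dot U_p\lesssim -(k_e-1)U_p+k_e\rho_p'+\bar w_p^2$ and $\dot U_R\lesssim -(k_e-\mu^2(\gamma))U_R+k_e\rho_R'+\bar w_R^2$, whose ultimate bounds are exactly $\varepsilon_p'\rho_p'$ and $\varepsilon_R'\rho_R'$. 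No gaps beyond the paper's own.
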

\begin{proof}
See Appendix \ref{app:20}.
\end{proof}
This theorem implies that the networked visual motion observer works
even for moving target objects.
We also see that the ultimate error between the estimates
and the average gets small
as the visual feedback gain $k_e$ becomes large, which is a natural conclusion
from the form of (\ref{eqn:ce_update}).

In summary, we have the following conclusion on the gain selection.
In order to achieve a good averaging performance, 
we should make the mutual feedback gain $k_s$ large
relative to the visual feedback gain $k_e$.
In order to achieve a good tracking performance,
the visual feedback gain $k_e$ should be absolutely large.
Namely, the best selection is to make both
gains $k_e$ and $k_s$ large while the mutual feedback gain $k_s$ is much larger than
the visual feedback gain $k_e$.
However, the size of $k_s$ is in general restricted by
the communication rate due to limitation in
standard feedback control theory.
Then, a trade-off occurs between 
averaging and tracking performances, i.e.
if we set a large $k_e$, a good tracking performance
is achieved at the cost of a poor averaging performance
and vice versa.

\section{Simulation}
\label{sec:7}

\begin{figure}[t]
\begin{center}
\begin{minipage}{5.5cm}
\begin{center}
\includegraphics[width=5.5cm]{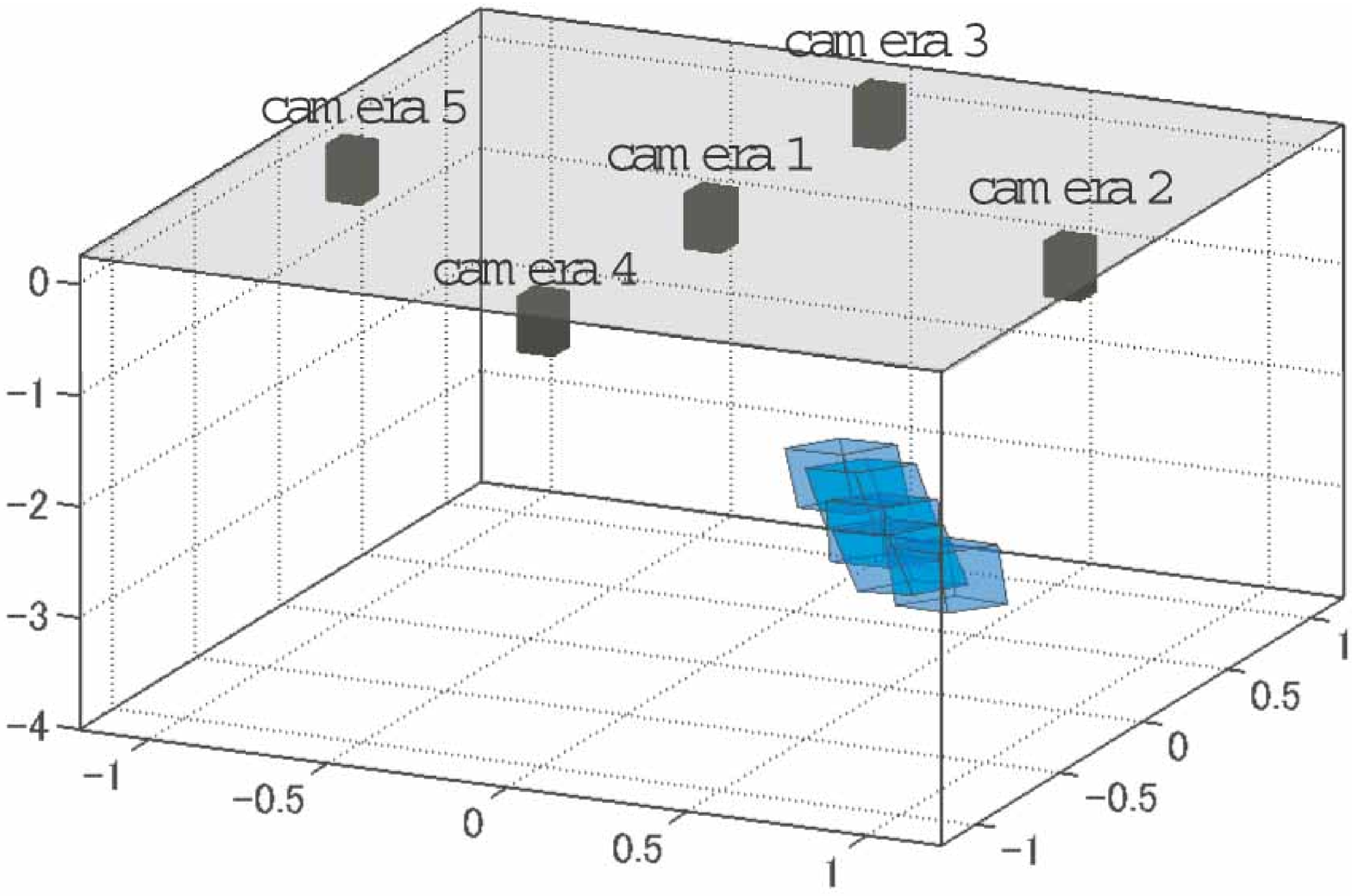}
\caption{Overview}
\label{fig:ov}
\end{center}
\end{minipage}
\hspace{1cm}
\begin{minipage}{3.5cm}
\begin{center}
\includegraphics[width=3.5cm]{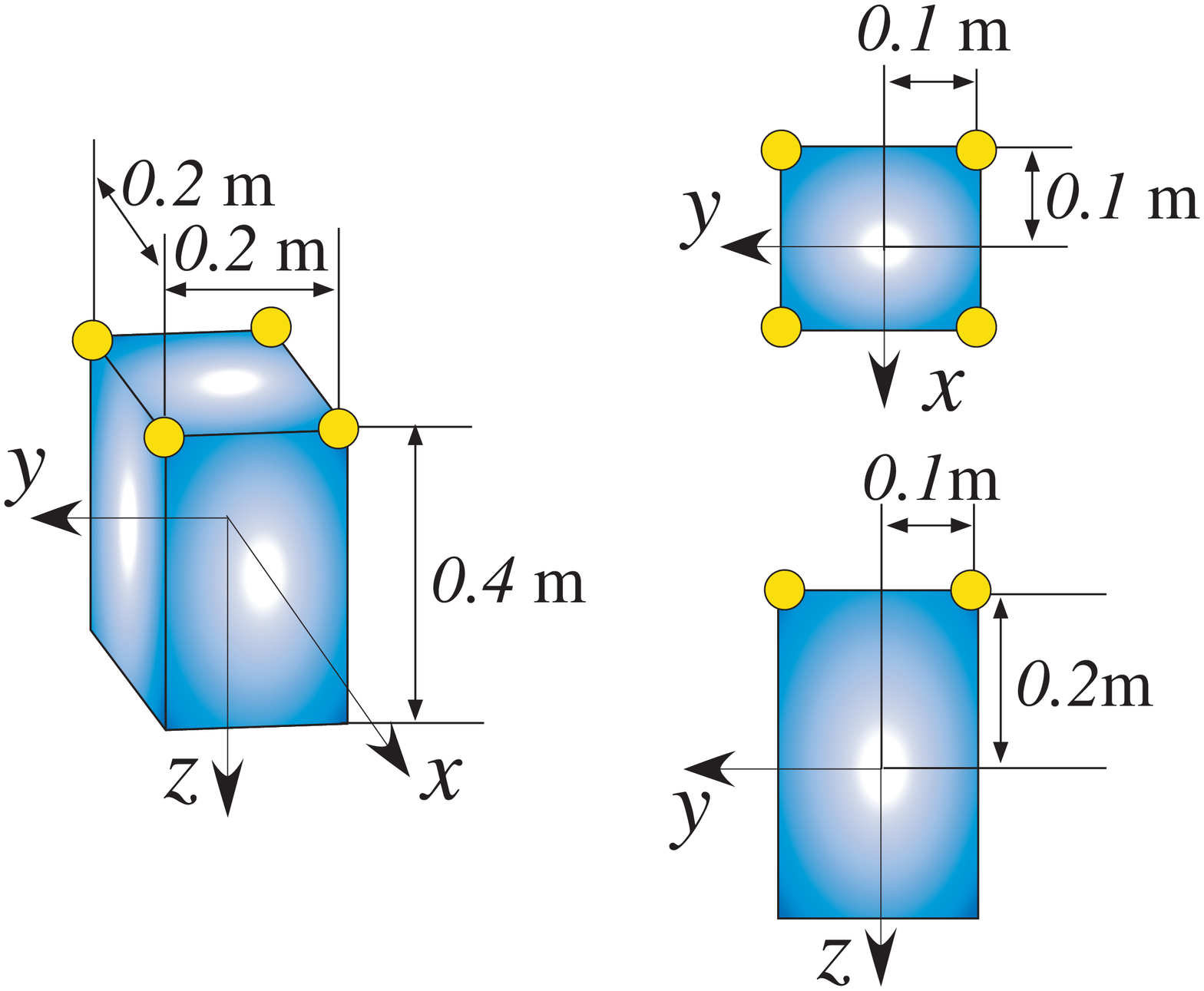}
\caption{Feature Points}
\label{fig:features}
\end{center}
\end{minipage}
\hspace{1cm}
\begin{minipage}{4.5cm}
\begin{center}
\includegraphics[width=4.5cm]{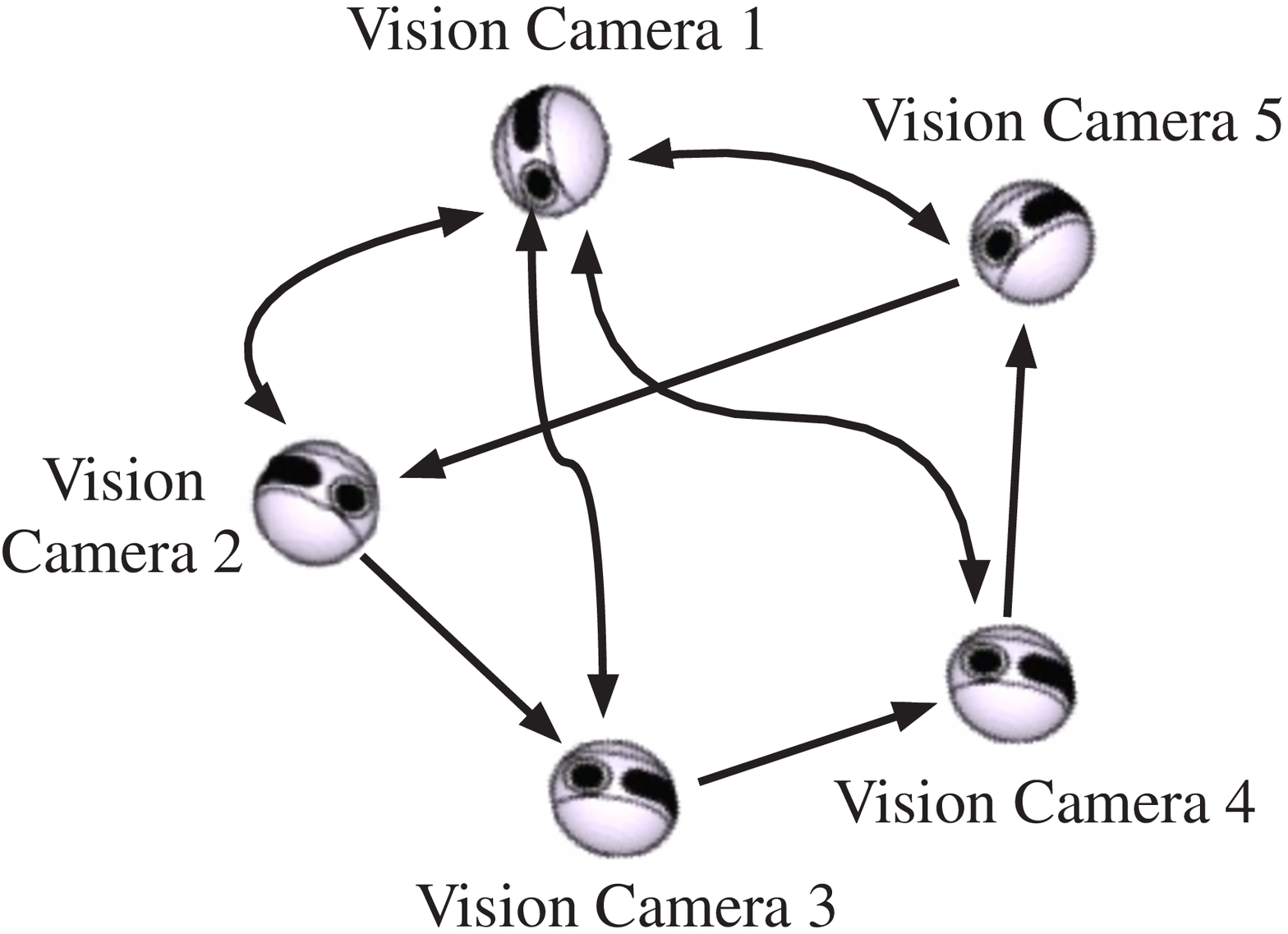}
\caption{Communication Graph}
\label{fig:graph1}
\end{center}
\end{minipage}
\end{center}
\end{figure}
We finally demonstrate the effectiveness of the networked visual motion observer
and validity of the theoretical results through simulation.
Throughout this section, we consider the situation where
five pin-hole type vision cameras with focal length $0.01$[m]
see a group of target objects.
We identify the frame of camera 1 with the world frame
and let 
$p_{w2} = [
1\
0\
0
]^T,\
p_{w3} = [
0\
1\
0
]^T,\
p_{w4} = [
0\
-1\
0
]^T,\
p_{w5} = [
-1\
0\
0
]^T$,\
and $e^{\hat{\xi}\theta_{wi}} = I_3,\ {\forall i}\in \{2,3,4,5\}$.
The overview of the setting is illustrated in Fig. \ref{fig:ov},
where blue boxes represent the initial configuration of target objects
with 
$p_{wo_1}=[
0.12\
0.55\
-2.78
],\
p_{wo_2}=
[
0.22\
0.48\
-2.85
],\
p_{wo_3}=
[
0.33\
0.33\
-2.97
],\
p_{wo_4}=
[
0.42\
0.23\
-3.08
],\
p_{wo_5}=
[
0.56\
0.12\
-3.15
]$
and
$
\xi\theta_{wo_1}=
[
   -0.30\
   -0.30\
   -0.30
],\
\xi\theta_{wo_2}=
[
   -0.30\
   -0.40\
   -0.40
],\
\xi\theta_{wo_3}=
[
   -0.40\
   -0.30\
   -0.30
],\
\xi\theta_{wo_4}=
[
   -0.30\
   -0.40\
   -0.30
],\
\xi\theta_{wo_5}=
[
   -0.30\
   -0.30\
   -0.40
].$
All the targets have four feature points
whose positions relative to the object frame 
are illustrated in Fig. \ref{fig:features}.
We use the points projected onto the image plane
as visual measurements $f_i$.
The communication structure is depicted in 
Fig. \ref{fig:graph1} with $W = 1$.

\begin{figure}[th]
\begin{center}
\begin{minipage}{4.5cm}
\begin{center}
\includegraphics[width=4.5cm]{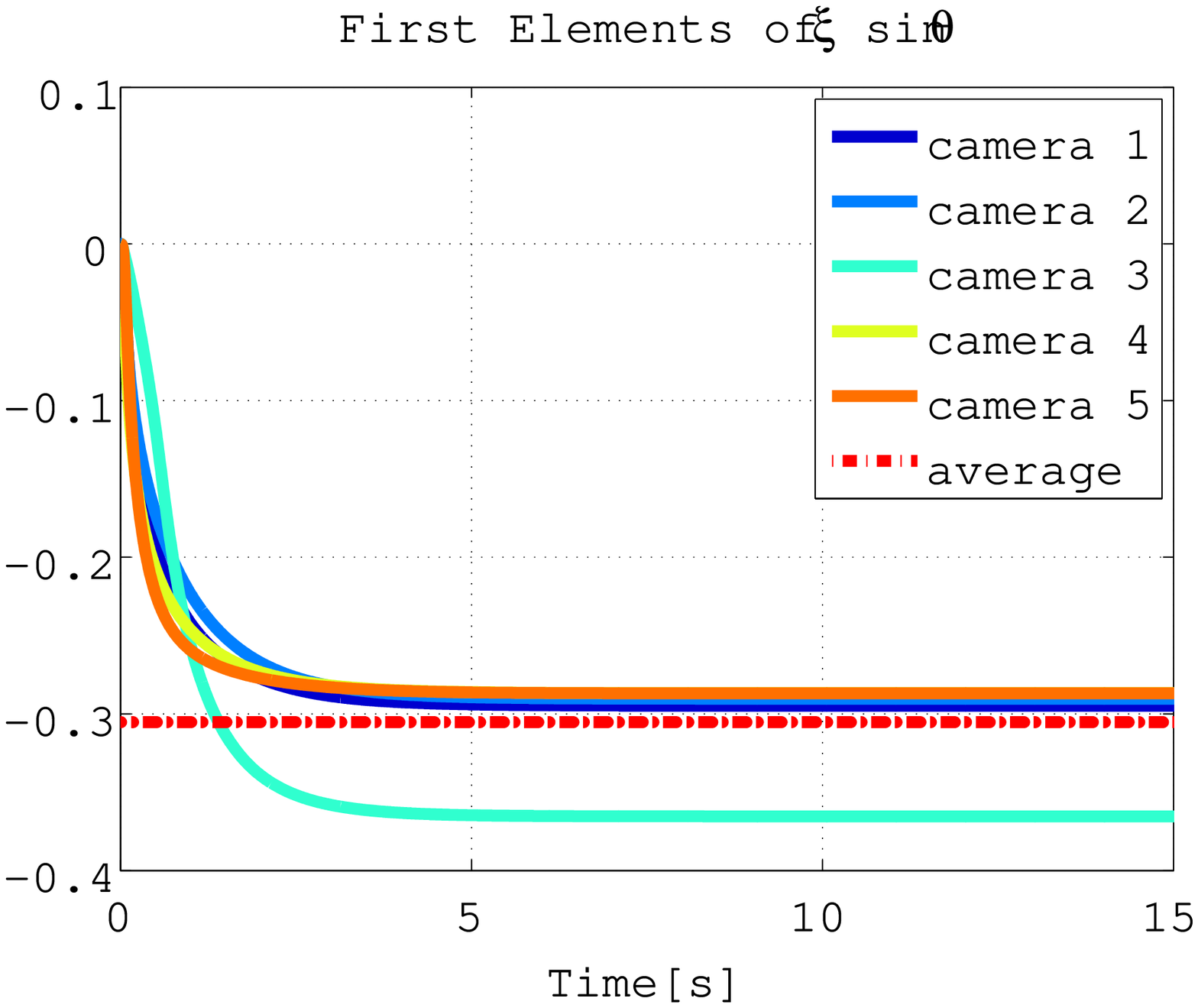}
\end{center}
\end{minipage}
\hspace{5mm}
\begin{minipage}{4.5cm}
\begin{center}
\includegraphics[width=4.5cm]{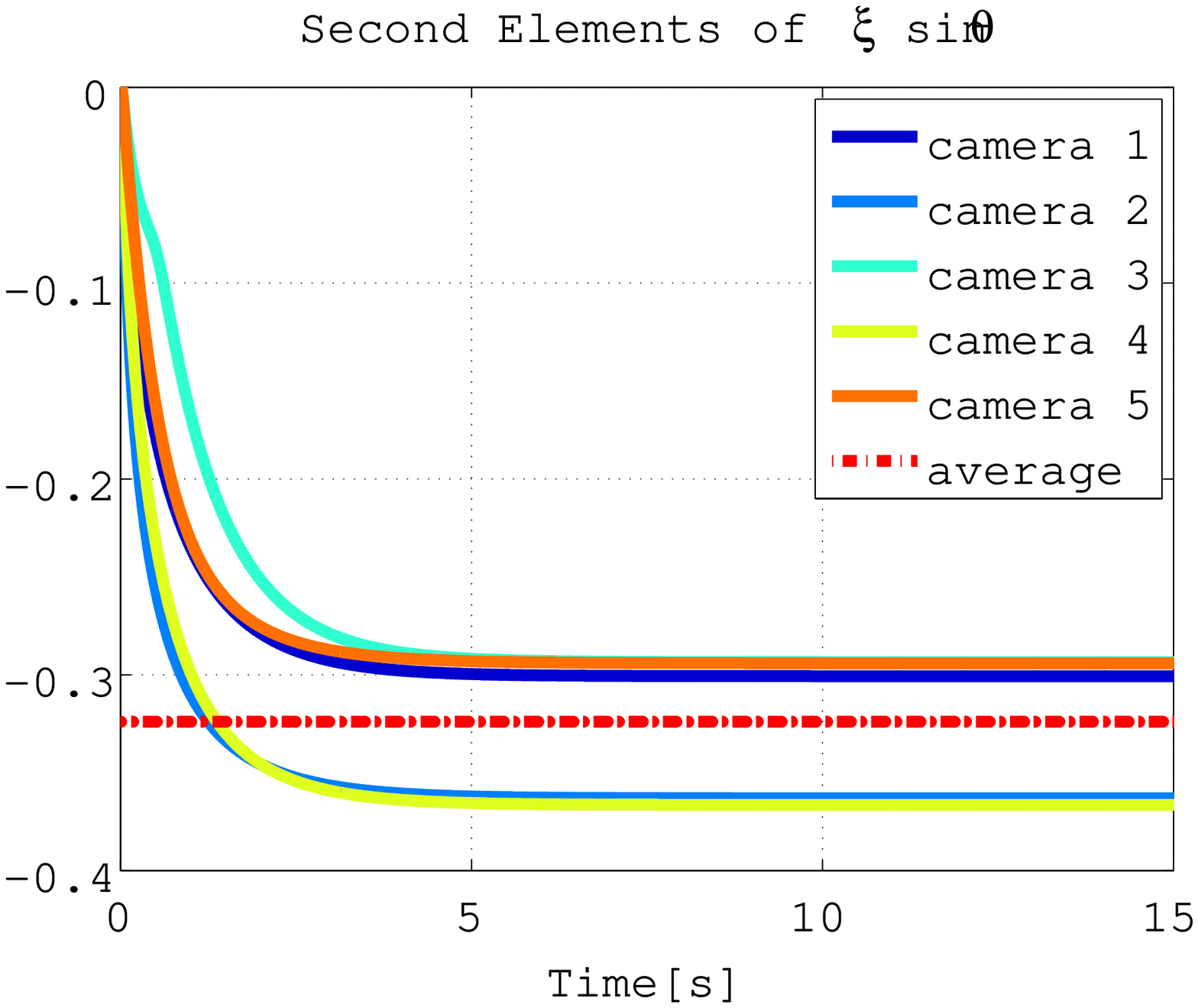}
\end{center}
\end{minipage}
\hspace{5mm}
\begin{minipage}{4.5cm}
\begin{center}
\includegraphics[width=4.5cm]{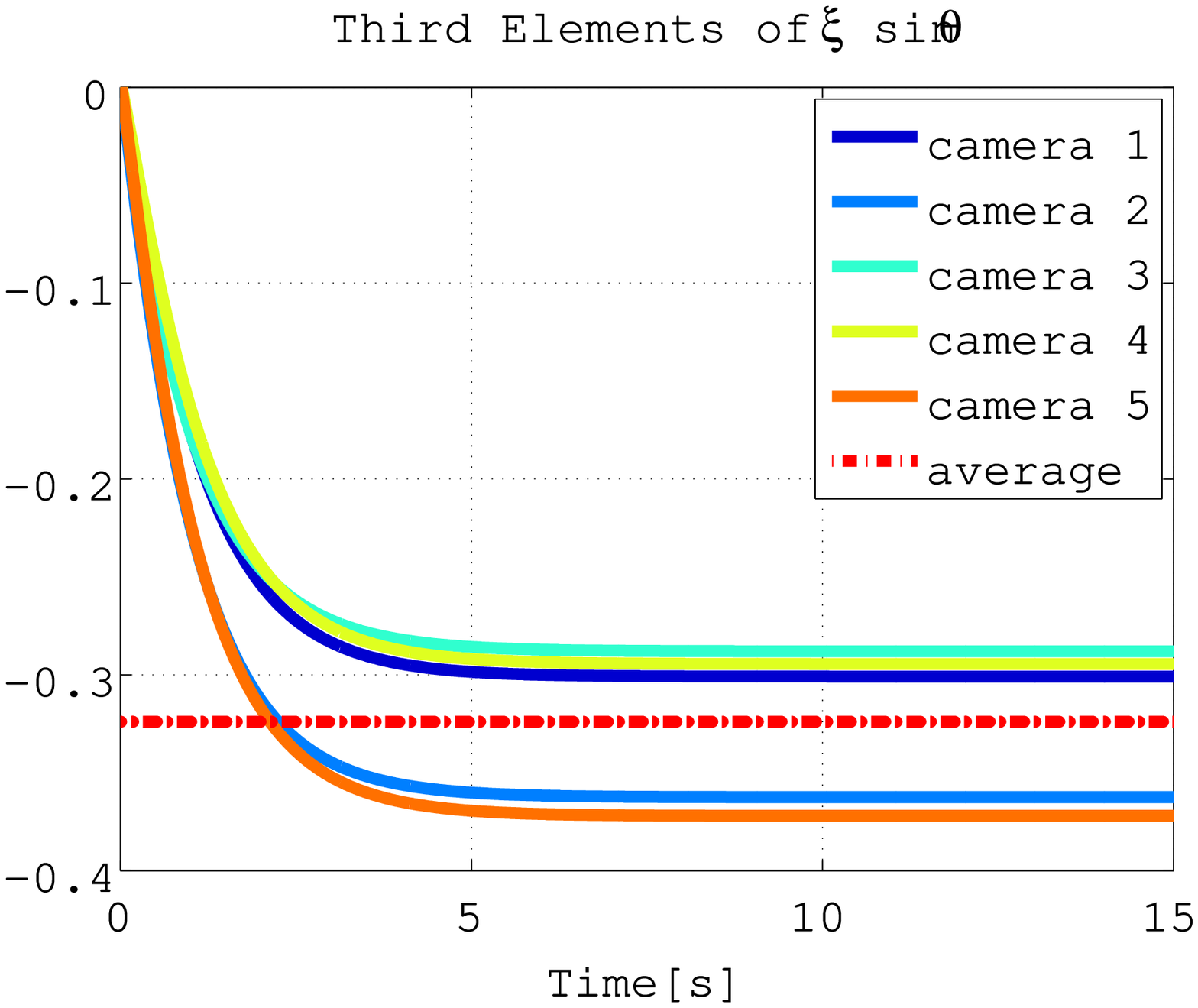}
\end{center}
\end{minipage}
\caption{Time Responses of Each Element of $\bar{\xi}\sin\bar{\theta},\ i=1,\cdots,5$ 
 (Static: $k_s = 0.1$)}
\label{fig:res1_2}
\end{center}
\begin{center}
\begin{minipage}{6cm}
\begin{center}
\includegraphics[width=6cm]{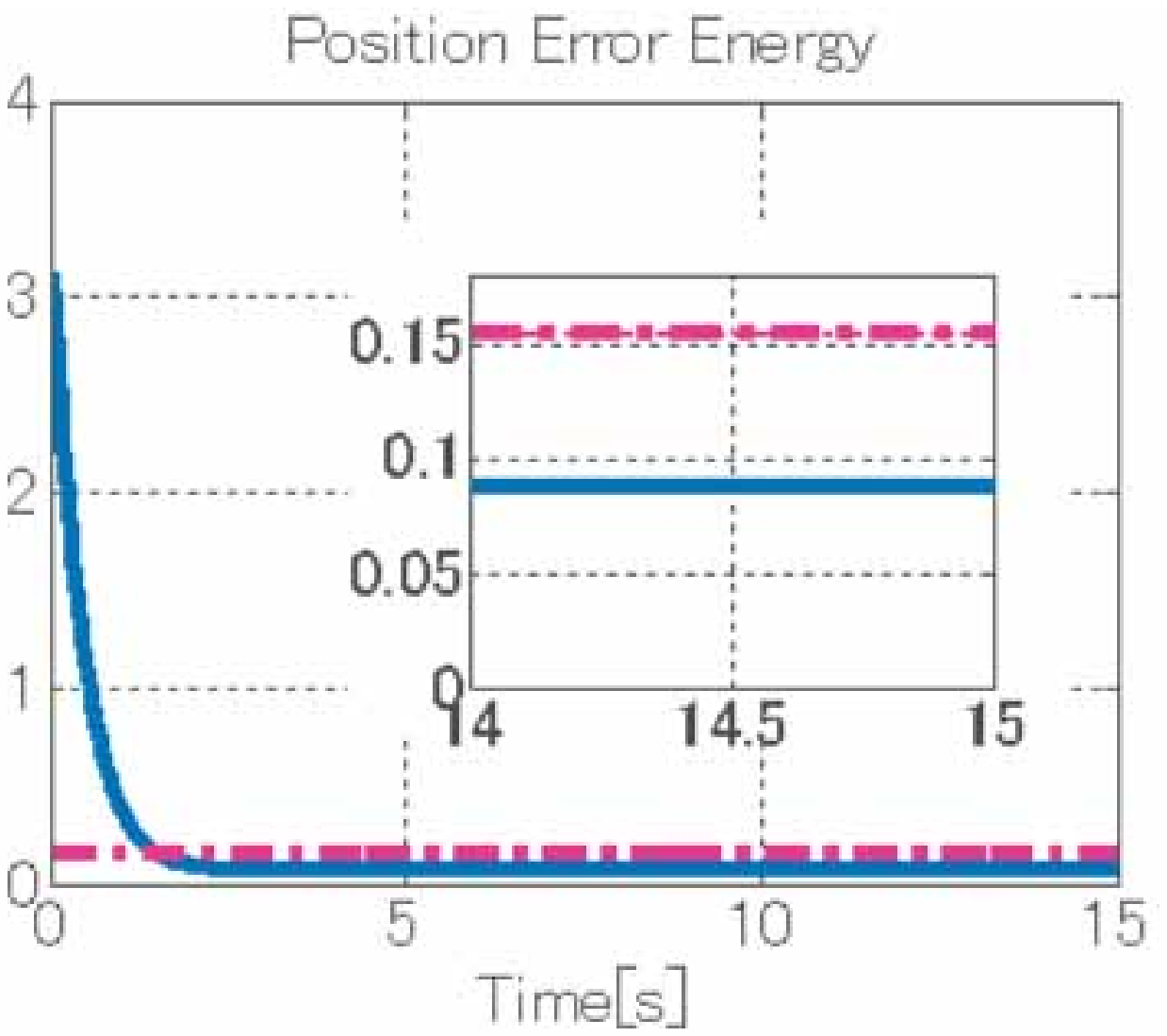}
\end{center}
\end{minipage}
\hspace{1.5cm}
\begin{minipage}{6cm}
\begin{center}
\includegraphics[width=6cm]{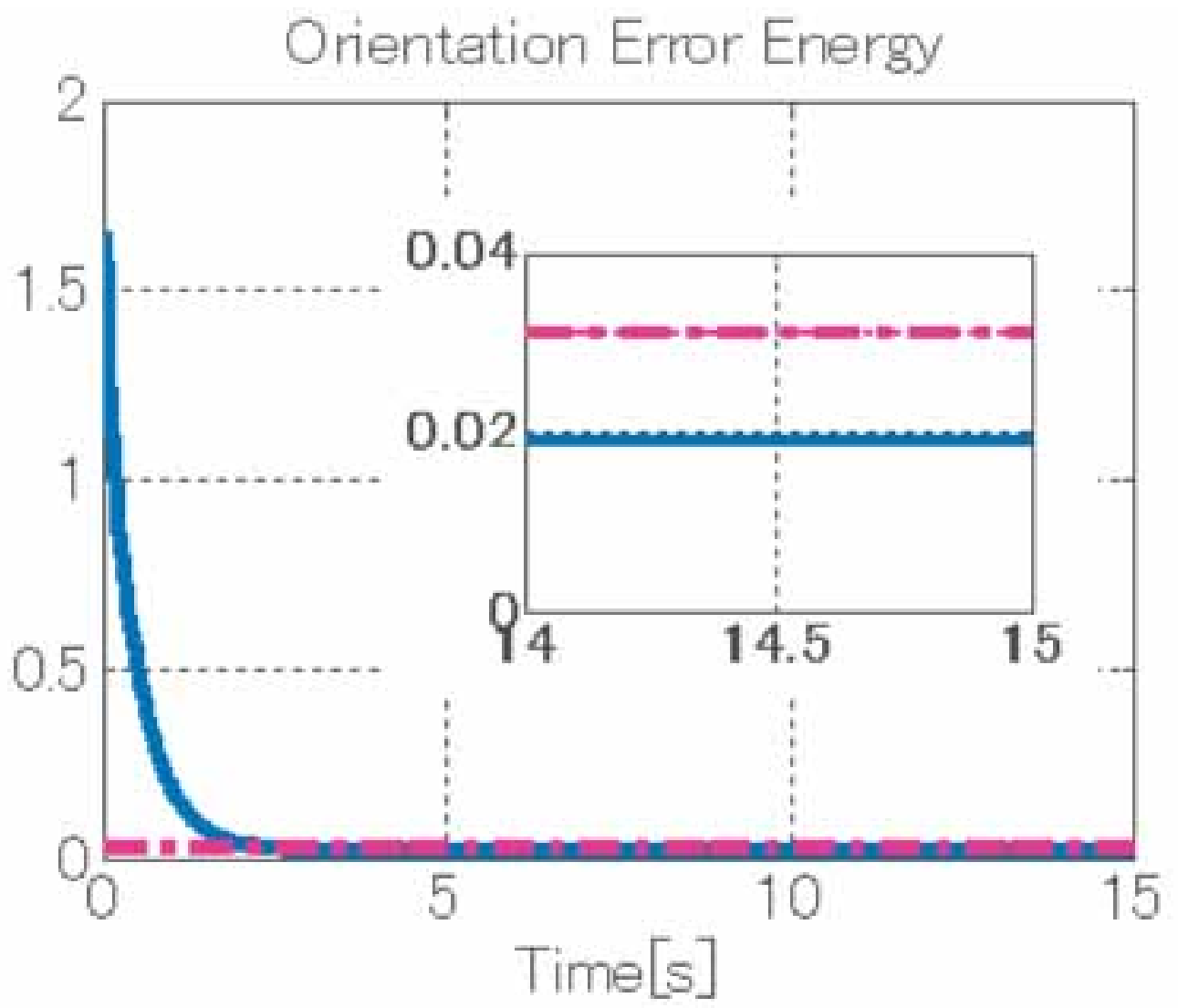}
\end{center}
\end{minipage}
\caption{Time Response of $U_p$ (Left) and $U_R$ (Right) (Static: $k_s = 0.1$)}
\label{fig:res1_8}

\end{center}
\begin{center}
\begin{minipage}{6cm}
\begin{center}
\includegraphics[width=6cm]{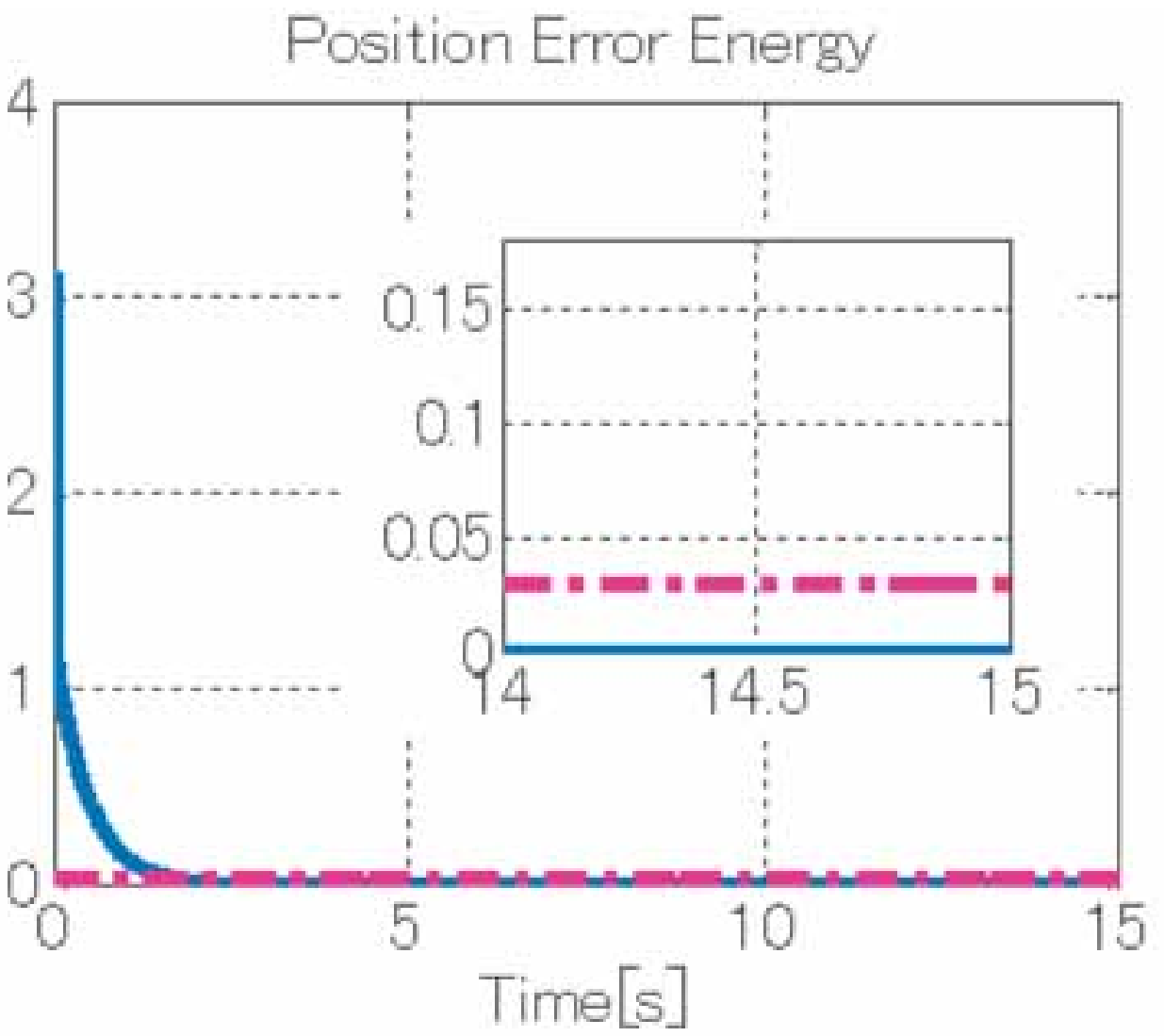}
\end{center}
\end{minipage}
\hspace{1.5cm}
\begin{minipage}{6cm}
\begin{center}
\includegraphics[width=6cm]{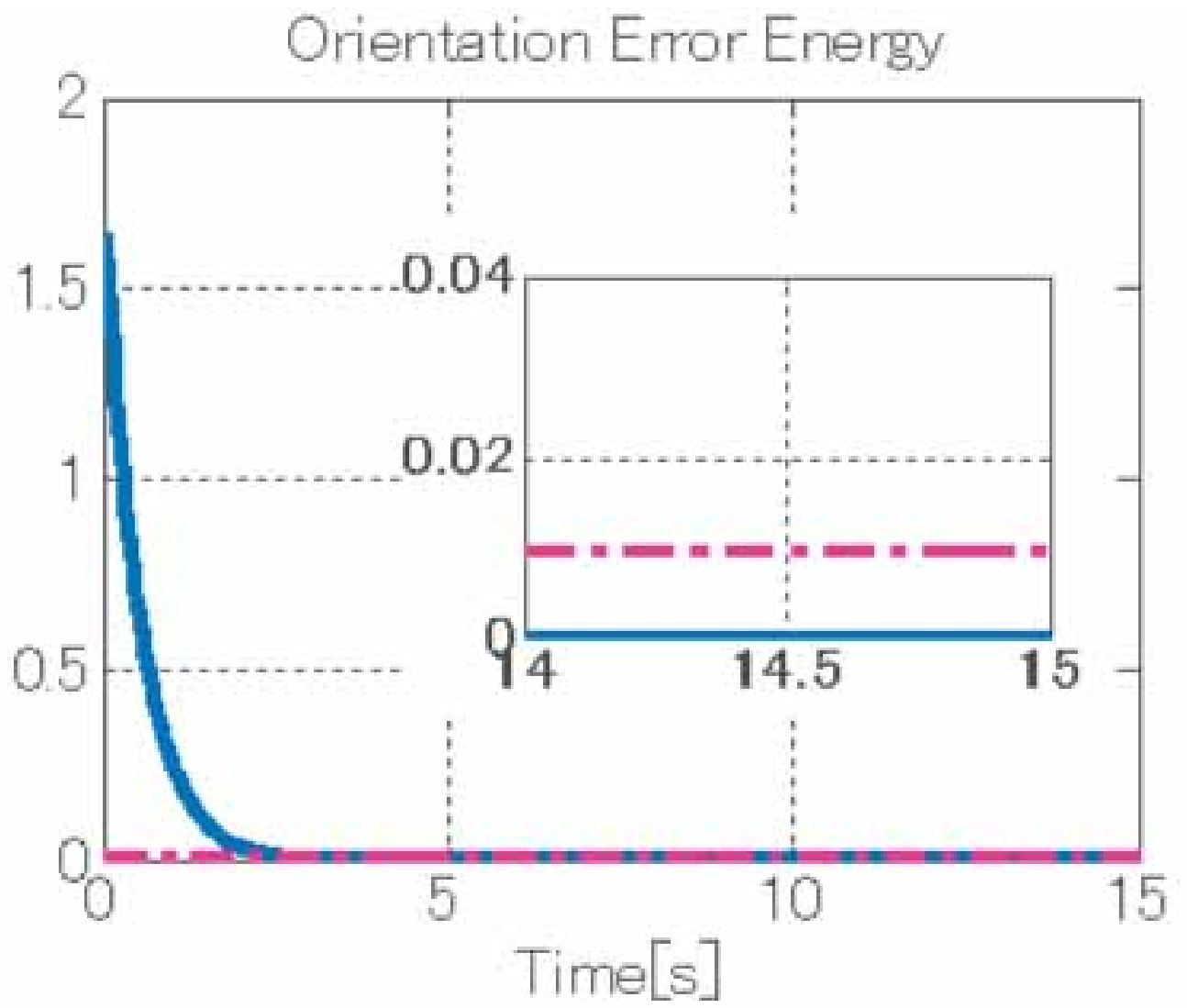}
\end{center}
\end{minipage}
\end{center}
\caption{Time Response of $U_p$ (Left) and $U_R$ (Right) (Static: $k_s = 100$)}
\label{fig:res2_8}
\end{figure}

In the first scenario, we consider static target objects and
demonstrate validity of Theorem \ref{thm_journal:1}.
Then, the average $g^* = (p^*, \es)$ is given by
$p^* = [
    0.33\
    0.36\
   -2.96
],\
\xi\theta^* = 
[
   -0.32\
   -0.34\
   -0.34
]$.
For the configuration of the target objects,
the parameter $\beta$ is given by about $\beta = 0.86$.
Throughout this section, we let the initial estimates
be $\bar{p}_{io_i}(0) = [0\ 0\ 2.5]^T$ and
$\ebioi(0) = I_3$.

We first employ the gains $k_e = 1$ and $k_s = 0.1$ ($k=10$).
Then, the parameters $\varepsilon_p$ and $\varepsilon_R$
in Theorem \ref{thm_journal:1} are given by $\varepsilon_p = \varepsilon_R = 1$.
Fig. \ref{fig:res1_2} illustrates the time responses of
orientation estimates of all vision cameras produced by the networked visual motion observer,
where the red dash-dotted lines 
represent each element of the average $\xi^*\sin\theta^*$.
We see from the figures that there exist gaps between the average and the estimates
for all elements.
The error functions $U_p$ and $U_R$ are depicted by blue curves in Fig.
\ref{fig:res1_8} respectively,
where red dash dotted lines represent $\varepsilon_p \rho_p$ and $\varepsilon_R \rho_R$.
Namely, Theorem \ref{thm_journal:1} implies that the blue curve eventually
takes lower values than the value indicated by the dash dotted line
and we see that it is really achieved as expected.

\begin{figure}[t]
\centering
\begin{minipage}{4.5cm}
\begin{center}
\includegraphics[width=4.5cm]{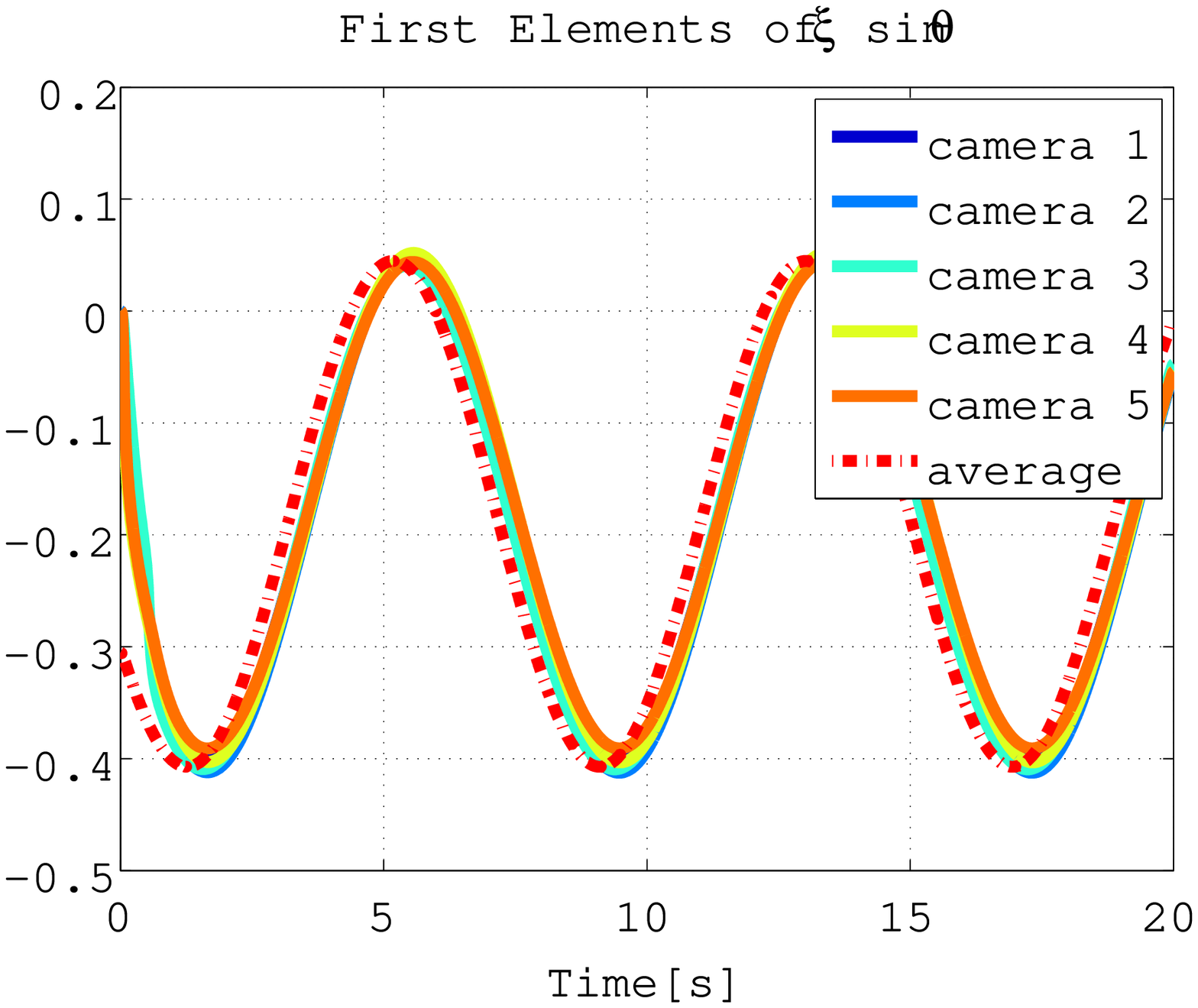}
\end{center}
\end{minipage}
\hspace{5mm}
\begin{minipage}{4.5cm}
\begin{center}
\includegraphics[width=4.5cm]{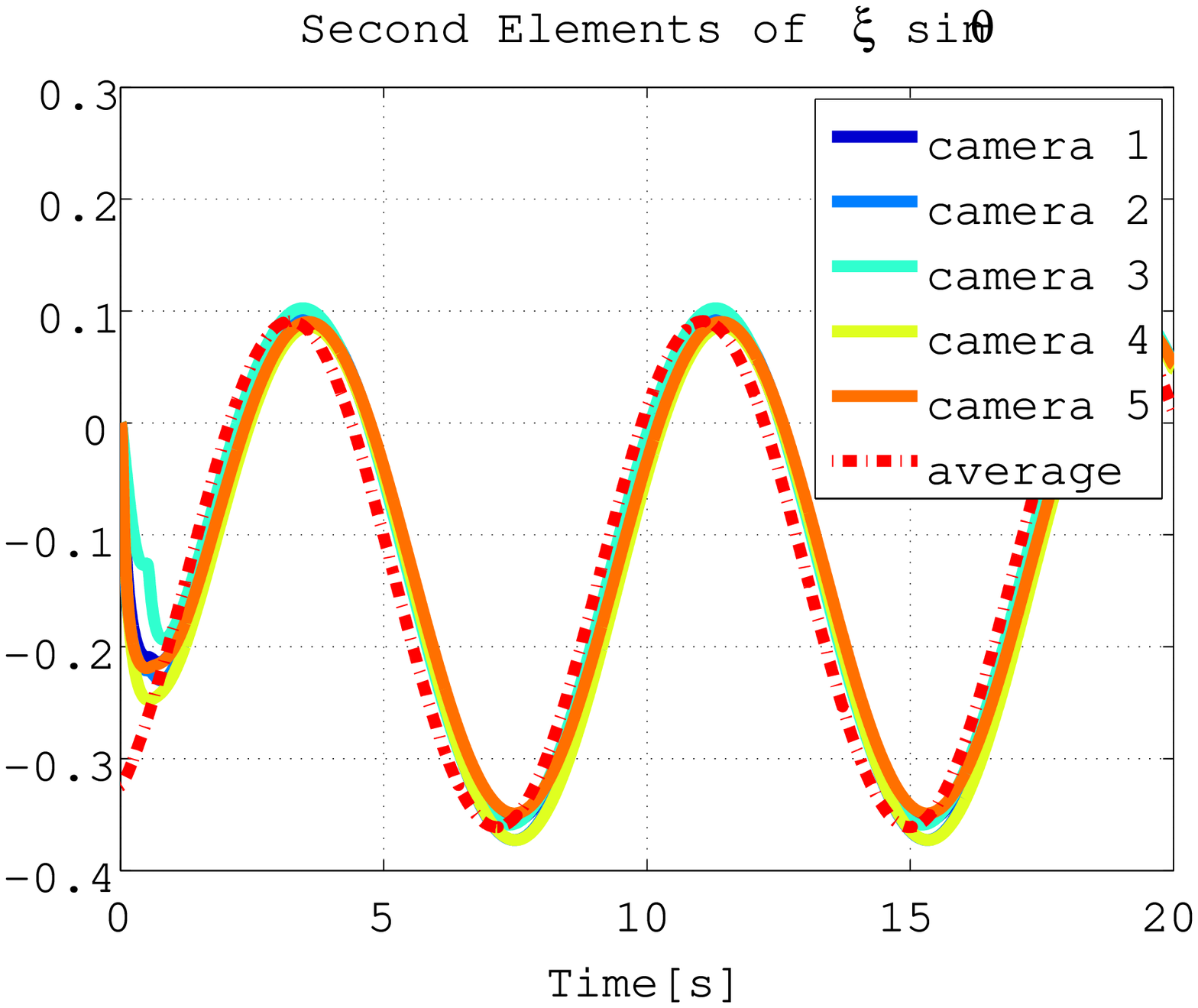}
\end{center}
\end{minipage}
\hspace{5mm}
\begin{minipage}{4.5cm}
\begin{center}
\includegraphics[width=4.5cm]{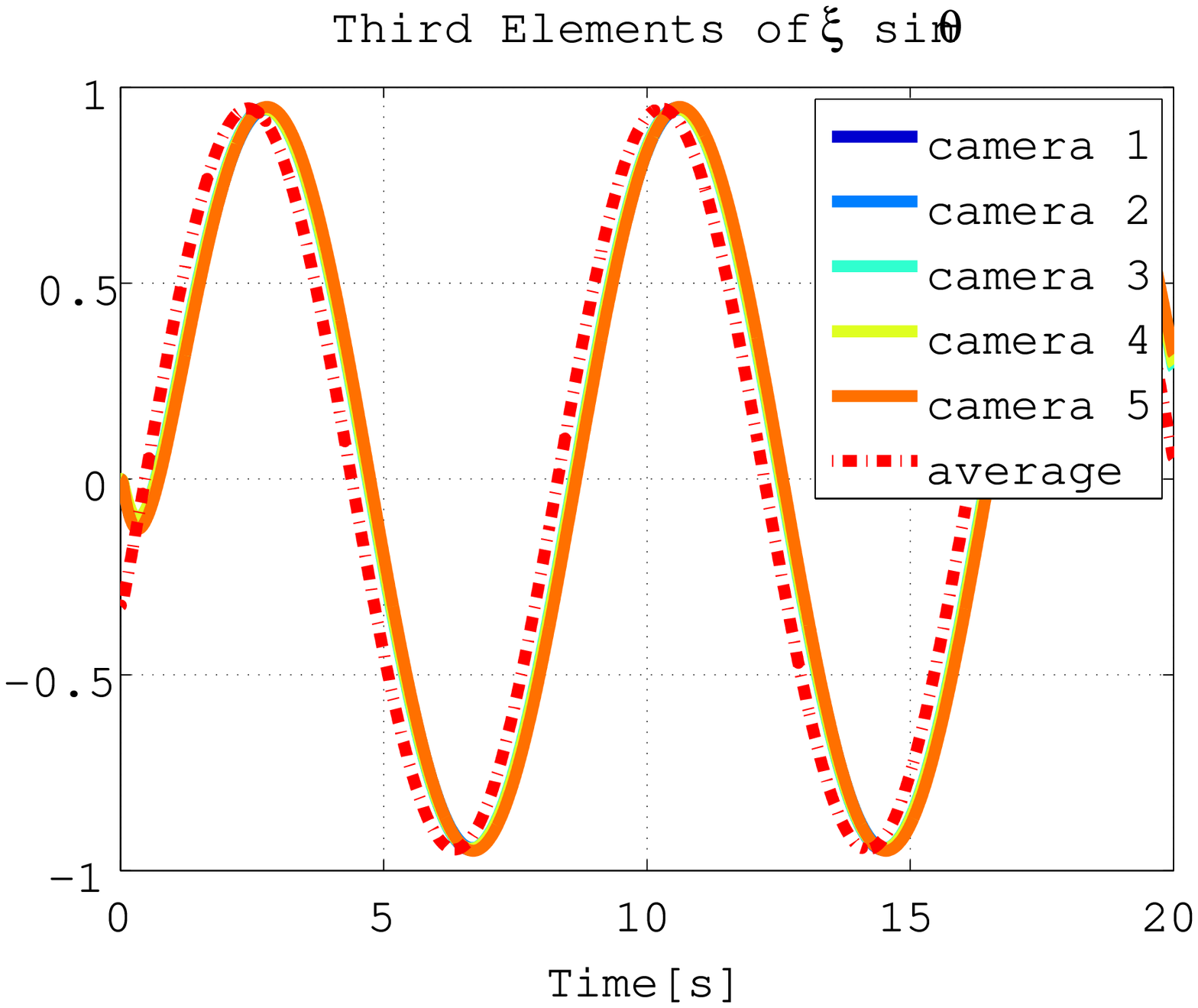}
\end{center}
\end{minipage}
\caption{Time Responses of Each Element of 
$\bar{\xi}\sin\bar{\theta},\ i=1,\cdots,5$ 
 (Moving: $k_e = 3$)}
\label{fig:res3_2}
\begin{center}
\begin{minipage}{6cm}
\begin{center}
\includegraphics[width=6cm]{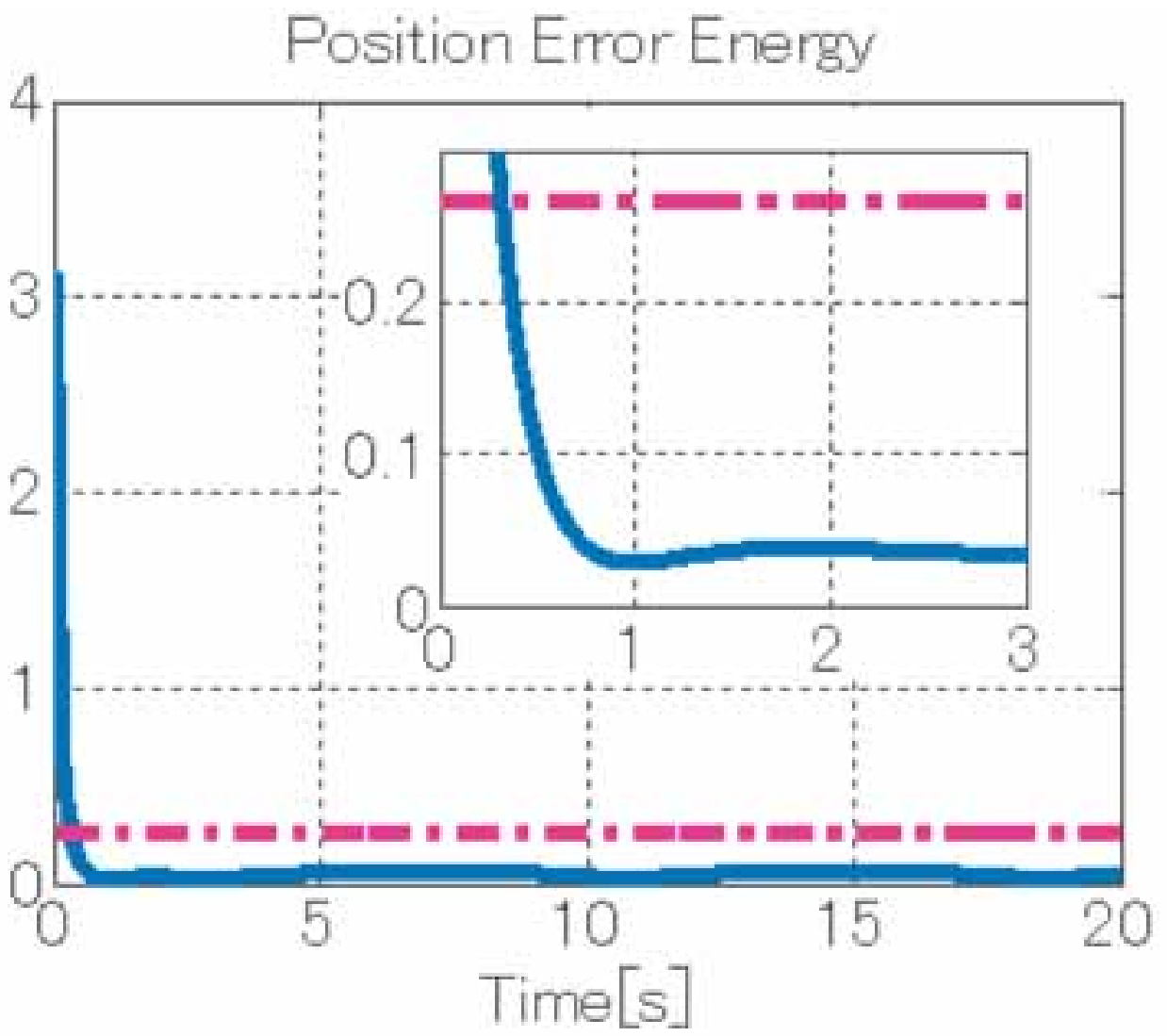}
\end{center}
\end{minipage}
\hspace{1.5cm}
\begin{minipage}{6cm}
\begin{center}
\includegraphics[width=6cm]{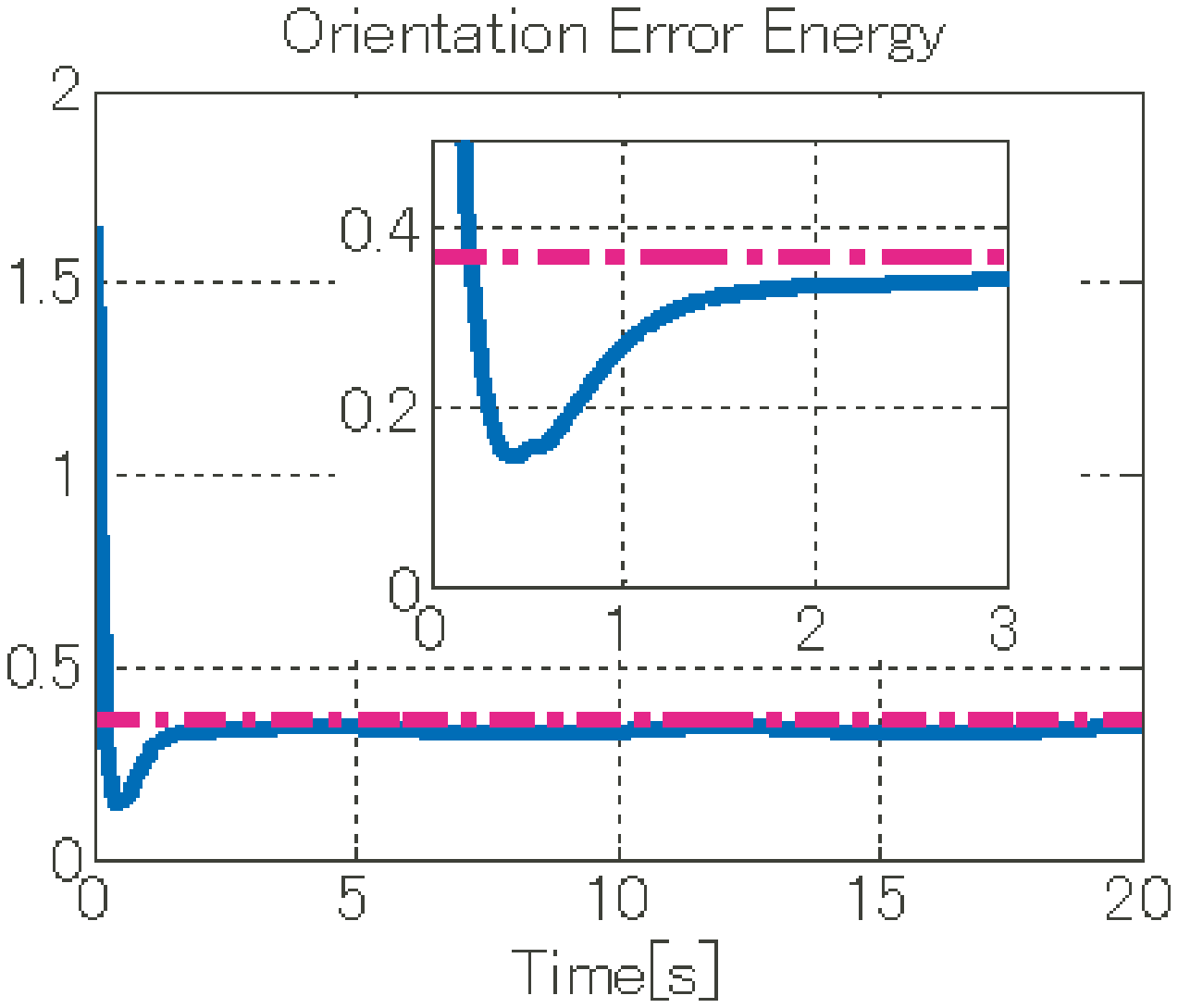}
\end{center}
\end{minipage}
\caption{Time Responses of $U_p$ (Left) and $U_R$ (Right) (Moving: $k_e = 3$)}
\label{fig:res3_8}

\begin{minipage}{6cm}
\begin{center}
\includegraphics[width=6cm]{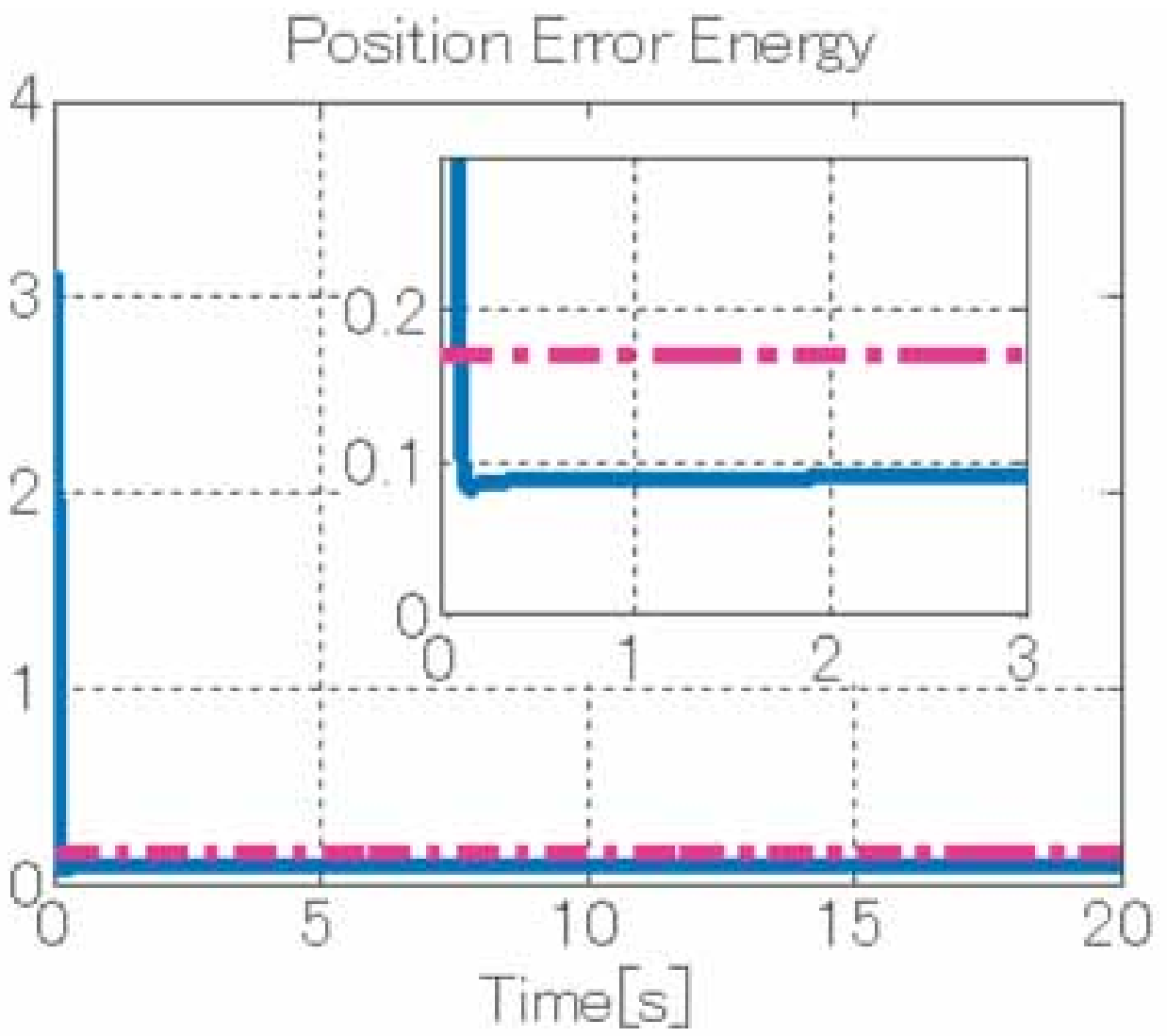}
\end{center}
\end{minipage}
\hspace{1.5cm}
\begin{minipage}{6cm}
\begin{center}
\includegraphics[width=6cm]{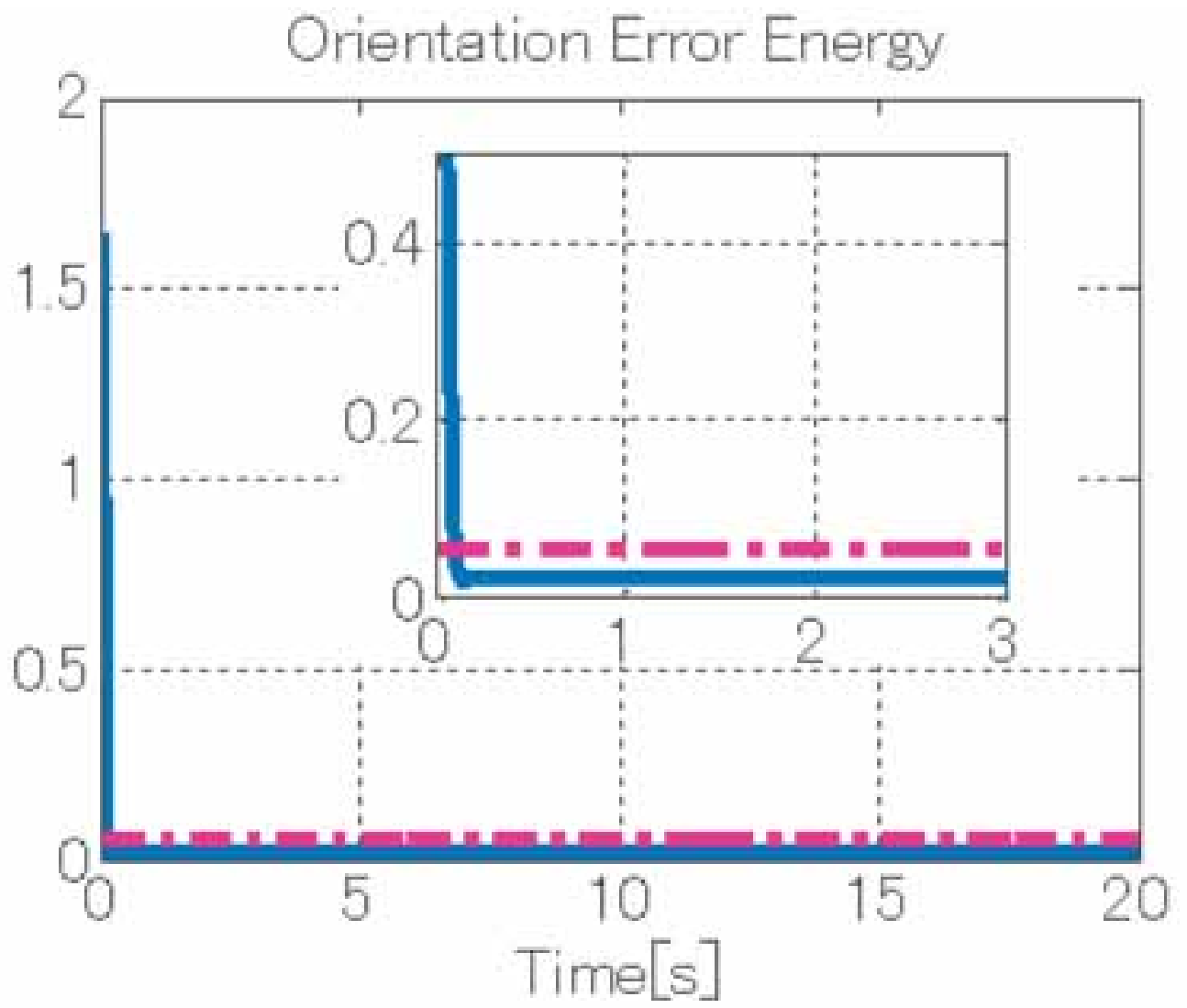}
\end{center}
\end{minipage}
\caption{Time Responses of $U_p$ (Left) and $U_R$ (Right) (Moving: $k_e = 30$)}
\label{fig:res3_11}
\end{center}
\end{figure}

We next let $k_e = 1$ and $k_s = 100$ ($k=0.01$).
Then, we have $\varepsilon_p = 0.19,\ \varepsilon_R = 0.31$ for sufficiently 
small $\epsilon$ and $c$.
Fig. \ref{fig:res2_8} illustrates the time responses of $U_p$ and $U_R$.
We see from the figures that
the estimates of all vision cameras become much closer to the average
than the case of a small mutual feedback gain $k_s = 0.1$.
Fig. \ref{fig:res2_8} also
indicates that the error functions $U_p$ and $U_R$ ultimately take
lower values than the right-hand side of (\ref{16a}) and (\ref{16b}) respectively.
Namely, it turns out as predicted that a small $k = k_e/k_s$ results in a good averaging performance.

In the second scenario, we consider moving target objects with
constant body velocities
$V^b_{wo_i} = \begin{bmatrix}
0.2&0&0&0&0&0.8
\end{bmatrix}^T\ \ {\forall i} = 1,\cdots,5$
and the same initial states as the above static case.
For the targets, we apply the networked visual motion observer
with $k_e = k_s = 3$,
where we let the initial estimates be the same as the above static object case.
Then the time responses of orientation estimates are depicted
in Fig. 
\ref{fig:res3_2},
where red dash dotted curves describe the average motion of the target orientations.
We see from the figures that the estimates track the moving average
within bounded errors and the networked observer
also works for a dynamic problem.

The responses of $U_p$ and $U_R$ are illustrated 
in Fig. \ref{fig:res3_8},
where the dash-dotted lines show $\varepsilon_p' \rho_p'$ and $\varepsilon_R' \rho_R'$.
As shown in Theorem \ref{thm_journal:3}, both of $U_p$ and $U_R$
ultimately take values smaller than $\varepsilon_p' \rho_p'$ and $\varepsilon_R' \rho_R'$
respectively.
Their counterparts for $k_e=30,\ k_s = 3$ are shown in Fig. \ref{fig:res3_11},
which also illustrate validity of Theorem \ref{thm_journal:3}.
We also see that a large $k_e$ achieves a better tracking performance
than a smaller $k_e$, which supports validity of the analysis at the 
end of Section \ref{sec:6}.

Experimental verifications on a testbed are omitted in this paper but shown in 
\cite{ACC11,CDC11}.

\section{Conclusions}
\label{sec:8}

This paper has presented a novel cooperative estimation mechanism
for visual sensor networks.
We have considered the situation where multiple smart vision cameras with
computation and communication capability
see a group of target objects.
We first have presented an estimation mechanism called networked
visual motion observer to meet 
two requirements, averaging and tracking.
Then, we have derived an upper bound of the ultimate error between
the actual average and the estimates produced by the present methodology.
Moreover, we have derived an upper bound of the ultimate error
from the estimates to the average when the target objects are moving.
Finally, the effectiveness of the present mechanism
has been demonstrated through simulation.

The authors would like to express sincere appreciation to 
Prof. Francesco Bullo and Prof. Kenji Hirata for their invaluable
suggestions and advices.



\appendices

\section{Proof of Lemma \ref{lem_journal:1}}
\label{app:2}

In the proof, 
we use the following lemma.
\begin{lemma}\cite{TCST09}
\label{lem:A_1}
For any matrices $R_1, R_2, R_3\in \SO(3)$, the inequality
\begin{eqnarray}
 \frac{1}{2}\tr(R_1^TR_2 - R_1^TR_3R_2^TR_3) \geq 
\phi(R_1^TR_3) - \phi(R_1^TR_2) 
+ 
\lambda_{min}(\sym(R_1^TR_3))\phi(R_3^TR_2)
\nonumber
\end{eqnarray}
holds, where $\sym(M) := \frac{1}{2}(M+M^T)$ and
$\lambda_{min}(M)$ is the minimal eigenvalue
of matrix $M$.
\end{lemma}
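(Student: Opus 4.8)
The plan is to collapse the three-matrix inequality into a single elementary spectral estimate. First I would rewrite every $\phi$ through its trace form $\phi(R)=\tr(I_3-R)=3-\tr(R)$, turning both sides into pure traces in $R_1,R_2,R_3$. To reduce the number of free matrices I would set $A:=R_1^TR_3\in\SO(3)$ and $C:=R_3^TR_2\in\SO(3)$; using $R^TR=I_3$ repeatedly one checks the identities $R_1^TR_2=AC$ and $R_1^TR_3R_2^TR_3=AC^T$, so that the left-hand side becomes $\frac{1}{2}\tr(AC-AC^T)$, the bracket $\phi(R_1^TR_3)-\phi(R_1^TR_2)$ becomes $\tr(AC)-\tr(A)$, and $\phi(R_3^TR_2)=\tr(I_3-C)$.

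Collecting these terms, the claimed inequality is equivalent to $\tr\!\big(\sym(A)(I_3-\sym(C))\big)\ge\lambda_{min}(\sym(A))\,\tr(I_3-\sym(C))$, where I have used that $\tr(AM)=\tr(\sym(A)M)$ for any symmetric $M$, that $\tr(A)=\tr(\sym(A))$, and that $\tr(A(C+C^T))=2\,\tr(A\,\sym(C))$. Writing $P:=\sym(A)$ and $M:=I_3-\sym(C)$, the goal is reduced to the clean statement $\tr(PM)\ge\lambda_{min}(P)\,\tr(M)$ for symmetric $P$ and $M\succeq0$.

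The single fact that makes this go through --- and the step I regard as the crux --- is that $M=I_3-\sym(C)$ is positive semidefinite for every $C\in\SO(3)$: for a unit vector $x$ one has $x^TMx=1-x^TCx\ge1-\|x\|\,\|Cx\|=0$ by Cauchy--Schwarz, since $C$ is orthogonal. Given this, I would diagonalize $M=\sum_k\mu_k u_ku_k^T$ with orthonormal $u_k$ and eigenvalues $\mu_k\ge0$, and bound termwise: $\tr(PM)=\sum_k\mu_k\,u_k^TPu_k\ge\sum_k\mu_k\,\lambda_{min}(P)=\lambda_{min}(P)\,\tr(M)$, where the Rayleigh bound $u_k^TPu_k\ge\lambda_{min}(P)$ may be multiplied by each $\mu_k\ge0$ precisely because $M$ is positive semidefinite (note $\lambda_{min}(P)$ itself may be negative, so this sign control is essential). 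Translating back through $\lambda_{min}(P)=\lambda_{min}(\sym(R_1^TR_3))$ and $\tr(M)=\phi(R_3^TR_2)$ recovers the lemma. The remaining work is only the bookkeeping of the trace identities in the first two paragraphs, where I expect no genuine difficulty beyond careful use of orthogonality; the substantive content lies in the semidefiniteness observation together with the termwise eigenvalue bound.
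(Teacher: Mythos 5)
Your proof is correct. The reduction to $\tr(PM)\ge\lambda_{min}(P)\,\tr(M)$ with $P=\sym(R_1^TR_3)$ and $M=I_3-\sym(R_3^TR_2)\succeq 0$ is exactly the right mechanism, and every trace identity and the Cauchy--Schwarz step checks out. Note that this paper does not prove the lemma at all --- it is imported from \cite{TCST09} --- so there is no in-paper argument to compare against; your self-contained derivation is a valid (and cleanly organized) proof of the cited result.
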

%
%
The time evolution of the orientation estimate $\ebioi$
in (\ref{eqn:EsRRBM}) with $V^b_{wi} = 0$ and
(\ref{eqn:ce_update}) is given by
\begin{eqnarray}
&&\hspace{-1cm}\debioi = \ebioi\hat{\omega}_{uei},\
{\omega}_{uei} =
k_e e_R(\mebioi\eioi) + k_s
\sum_{j \in {\N}_i}e_R(\mebioi\ebioj),
\label{eqn:ori_update_i}
\end{eqnarray}
which is independent of evolution of the position estimate
$\bar{p}_{io_i}$.
Multiplying $e^{\hat{\xi}\theta_{wi}}$ to (\ref{eqn:ori_update_i}) from left,
we have the following equation describing evolution of the estimate
$\ebwoi$ relative to 
$\Sigma_w$.
\begin{eqnarray}
&&\hspace{-1cm}\debwoi = \ebwoi\hat{\omega}_{uei},\
{\omega}_{uei} =
k_e e_R(\mebwoi\ewoi) + k_s
\sum_{j \in {\N}_i}e_R(\mebwoi\ebwoj)
\label{eqn:ori_update}
\end{eqnarray}

Let us now consider the energy function
\begin{equation*}
U := \phi(\mesiota \ebiotaoiota) = \phi(\mes \ebwoiota),\
\iota(t) := \arg\max_{i\in \V}\phi(\mesi\ebioi(t)).
\end{equation*}
Then, the time derivative of $U$ along with the trajectories of (\ref{eqn:ori_update})
is given by
\begin{eqnarray}
&&\hspace{-.6cm}\dot{U} = 2e^T_R(\mes \ebwoiota)\omega_{ue\iota}
= -\tr \left(\sk(\mes \ewoiota)\hat{\omega}_{ue\iota}\right),
\label{eqn:dev_U}
\end{eqnarray}
where we use the relation $a^Tb = -\frac{1}{2}\tr(\hat{a}\hat{b})$.
Substituting (\ref{eqn:ori_update}) into (\ref{eqn:dev_U}) yields
\begin{eqnarray}
&&\hspace{-.6cm}\dot{U} 
= -\tr \Big(k_e\Big(\mes \ewoiota - 
\mes\ebwoiota\mewoiota\ebwoiota\Big)\nonumber\\
&&\hspace{2cm}
+ k_s\Big(\sum_{j\in \N_{\iota}}\mes \ebwoiota - 
\mes\ebwoiota\mebwoj\ebwoiota
\Big)
\Big).
\label{eqn:dev_U_2}
\end{eqnarray}
From Lemma \ref{lem:A_1}, (\ref{eqn:dev_U_2})
is rewritten as $\dot{U} \leq -\left(k_eF_1 + k_s F_2\right)$,
where
\begin{eqnarray}
&&\hspace{-.6cm}F_1:= \phi(\mes\ebwoiota)
- \phi(\mes\ewoiota) 
+ \sigma\phi(\mebwoiota\ewoiota),\ \sigma := \lambda_{min}(\sym(\mes\ebwoiota)),
\nonumber\\
&&\hspace{-.6cm}F_2:= \sum_{j\in\N_{\iota}}\Big(
\phi(\mes\ebwoiota) - \phi(\mes\ebwoj)
+\sigma\phi(\ebwoiota\ebwoj)\Big).
\nonumber
\end{eqnarray}
From the definition of the index $\iota$, the inequality
$\phi(\mes\ebwoiota) \geq \phi(\mes\ebwoj)
\ {\forall j}\in \V$
holds and hence we obtain
$F_2\geq \sigma\sum_{j\in\N_i}\phi(\ebwoiota\ebwoj)$.
Thus, the inequality
\begin{eqnarray}
&&\hspace{-.5cm}\dot{U} \leq -\Big(k_e\phi(\mes\ebwoiota)
- k_e\phi(\mes\ewoiota)
+ \sigma\Big(k_e\phi(\mebwoiota\ewoiota)+k_s\sum_{j\in\N_i}\phi(\ebwoiota\ebwoj)\Big)\Big).
\nonumber
\end{eqnarray}
is true.
From the assumption of $\mes\ebwoi > 0\ {\forall i}\in \V$,
we have $\sigma > 0$ and the inequality
\begin{eqnarray}
\dot{U} \leq -k_e(\phi(\mes\ebwoiota)
- \phi(\mes\ewoiota))
\leq -k_e(\phi(\mes\ebwoiota)
- \phi(\mes\ewol)).
\nonumber
\end{eqnarray}
holds. Thus, if $\phi(\mes\ebwoiota)- \phi(\mes\ewol) > c$,
then $\dot{U} \leq -c k_e$ is true. 
Namely, there exists a finite
$\tau(c)$ such that $\ebwoiota$ satisfies
$\phi(\mes\ebwoiota) - \phi(\mes\ewol)< c\ {\forall t} \geq \tau(c)$
and, from the definition of $\iota$, we also have
$\phi(\mes\ebwoi) - \phi(\mes\ewol)< c\ {\forall t} \geq \tau(c)$
for all $i \in \V$.
This completes the proof.

\section{Proof of Lemma \ref{lem_journal:2}}
\label{app:3}

In the proof, we use the energy functions
\begin{eqnarray}
\ \ U_p := \frac{1}{2}\sum_{i \in \V}\|p_i^* - \bar{p}_{io_i}\|^2
= \frac{1}{2}\sum_{i \in \V}\|p^* - \bar{p}_{wo_i}\|^2,\
U_R := \sum_{i \in \V}\phi(\mesi\ebioi)
= \sum_{i \in \V}\phi(\mes\ebwoi).
\nonumber
\end{eqnarray}

We first consider evolution of the position estimates 
$(\bar{p}_{io_i})_{i \in \V}$
and then show its counterpart with respect to orientation estimates 
$(\ebioi)_{i\in \V}$ separately.
The time evolution of the position estimate $\bar{p}_{io_i}$
in (\ref{eqn:EsRRBM}) with $V^b_{wi} = 0$ and
(\ref{eqn:ce_update}) is described by
$\dot{\bar{p}}_{io_i} = 
k_e (p_{io_i} - \bar{p}_{io_i}) + k_s
\sum_{j \in {\N}_i}(\bar{p}_{io_j} - \bar{p}_{io_i})$.
Since the cameras are static, the evolution of $\bar{p}_{wo_i}$
relative to the world frame $\Sigma_w$ is given by
\begin{eqnarray}
\dot{\bar{p}}_{wo_i} &\!\!\!=\!\!\!& e^{\hat{\xi}\theta_{wi}}\dot{\bar{p}}_{io_i}
=k_e (p_{wo_i} - \bar{p}_{wo_i}) + k_s
\sum_{j \in {\N}_i}(\bar{p}_{wo_j} - \bar{p}_{wo_i}),
\label{eqn:p_update}
\end{eqnarray}
which is independent of evolution of
the orientation estimates 
(\ref{eqn:ori_update}).

If we define $\bar{q}_i := \bar{p}_{wo_i} - p^*$ and
${q}_i := {p}_{wo_i} - p^*$,
the time derivative of $U_p$ along with the trajectories of
(\ref{eqn:p_update}) is given by
\begin{eqnarray}
\dot{U}_p &\!\!\!=\!\!\!& \sum_{i\in \V}\left(
k_e \bar{q}_i^T (q_i - \bar{q}_i) 
+ k_s
\sum_{j \in {\N}_i}\bar{q}_i^T (\bar{q}_j - \bar{q}_i)
\right)
\nonumber\\
&\!\!\!=\!\!\!& \frac{1}{2}\sum_{i\in \V}
\Big(k_e(\|q_i\|^2 - \|\bar{q}_i\|^2 - \|q_i 
 - \bar{q}_i\|^2)
+k_s\sum_{j \in {\N}_i}(\|\bar{q}_j\|^2 - \|\bar{q}_i\|^2 - \|\bar{q}_j 
 - \bar{q}_i\|^2)
\Big).
\nonumber
\end{eqnarray}
Since $\sum_{i\in \V}\sum_{j \in \N_i}\|\bar{q}_j\|^2 - \|\bar{q}_i\|^2 = 
0$ holds under Assumption \ref{ass:1} \cite{TCST09}, we obtain
\begin{eqnarray}
\dot{U}_p &\!\!\!=\!\!\!& \frac{1}{2}\sum_{i\in \V}\Big(k_e(\|q_i\|^2 - \|\bar{q}_i\|^2 - \|q_i 
 - \bar{q}_i\|^2)
-k_s\sum_{j \in {\N}_i}\|\bar{q}_j 
 - \bar{q}_i\|^2\Big).
\label{eqn:lem2_1}
\end{eqnarray}
We see from (\ref{eqn:lem2_1}) that if $(\bar{p}_{io_i})_{i\in \V}\in \Omega_p(1)$
then 
\begin{eqnarray}
 \dot{U}_p \leq 
-\frac{1}{2}\sum_{i\in \V}\Big(k_e\|q_i 
 - \bar{q}_i\|^2
+k_s\sum_{j \in {\N}_i}\|\bar{q}_j 
 - \bar{q}_i\|^2\Big)
\label{eqn:lem2_2}
\end{eqnarray}
holds. From Assumption \ref{ass:2}, $\sum_{i\in \V}\|q_i 
 - \bar{q}_i\|^2$ and $\sum_{i\in \V}\sum_{j \in {\N}_i}\|\bar{q}_j 
 - \bar{q}_i\|^2$ are never equal to $0$ simultaneously and
hence the right-hand side of (\ref{eqn:lem2_2})
is strictly negative.
Thus, the trajectories of the position estimates 
$(\bar{p}_{io_i})_{i\in \V}$
along with (\ref{eqn:p_update}) settle into the set $\Omega_p(1)$
in finite time.

%

The time derivative of $U_R$ along the trajectories of
(\ref{eqn:ori_update}) is given by
\begin{eqnarray}
 \dot{U}_R = 2\sum_{i\in\V} e^T_R(\mes\ebwoi){\omega}_{uei},
= -\sum_{i\in \V}\tr\left(\sk(\mes\ebwoi)\hat{\omega}_{uei}\right).
\label{eqn:lem2_3}
\end{eqnarray}
Substituting (\ref{eqn:ori_update}) into (\ref{eqn:lem2_3}) yields
\begin{eqnarray}
&&\hspace{-.6cm}
\dot{U}_R= -\sum_{i\in \V}\tr(k_e \Phi_1 + k_s \Phi_2),
\label{eqn:lem2_6}\\
&&\hspace{-.6cm}
\Phi_1 := \frac{1}{2}(\mes\ewoi - \mes\ebwoi\mewoi \ebwoi),\nonumber\\
&&\hspace{-.6cm}
\Phi_2 := \frac{1}{2}\sum_{j \in {\N}_i}\left(\mes\ebwoj 
- \mes\ebwoi\mebwoj\ebwoi\right).
\nonumber
\end{eqnarray}

We first consider the term $\sum_i\tr(\Phi_2)$ in (\ref{eqn:lem2_6}).
From Lemma \ref{lem:A_1}, the following inequality holds.
\begin{eqnarray}
&&\hspace{-.6cm}\sum_{i\in \V}\tr(\Phi_2) \geq 
\sum_{i\in\V}\sum_{j \in {\N}_i} \left\{
\phi(\mes\ebwoi) - \phi(\mes\ebwoj)
+\sigma_i\phi(\mebwoi\ebwoj)\right\},
\label{eq28}
\end{eqnarray}
where $\sigma_i:=\lambda_{min}(\sym(\mes\ebwoi))$.
Assumption \ref{ass:1} implies that
$\sum_{i\in\V}\sum_{j \in {\N}_i} \phi(\mes\ebwoi) - \phi(\mes\ebwoj) = 0$
\cite{TCST09}
and hence (\ref{eq28}) is rewritten as
\begin{eqnarray}
\sum_{i\in \V}\tr(\Phi_2) \geq \sum_{i\in\V}\sum_{j \in {\N}_i}
\sigma_i\phi(\mebwoi\ebwoj).
\label{eqn:lem2_4}
\end{eqnarray}

We next consider the term $k_e \sum_{i\in\V}\tr(\Phi_1)$ in (\ref{eqn:lem2_6}).
Applying Lemma \ref{lem:A_1} again to the term yields
\begin{eqnarray}
&&\hspace{-.6cm}\sum_{i\in\V}\tr(\Phi_1)
\geq \sum_{i\in\V} \left\{\phi(\mes\ebwoi) - \phi(\mes\ewoi)
+ \sigma_i \phi(\mebwoi\ewoi)\right\}.
\label{eqn:lem2_5}
\end{eqnarray}
Substituting (\ref{eqn:lem2_4}) and (\ref{eqn:lem2_5}) into 
(\ref{eqn:lem2_6}) yields
\begin{eqnarray}
&&\dot{U}_R\leq -
\sum_{i\in\V}\Big(k_e\phi(\mes\ebwoi) - k_e\phi(\mes\ewoi)
\nonumber\\
&&\hspace{5cm}+ \sigma_i\Big(k_e\phi(\mebwoi\ewoi)+
k_s\sum_{j \in {\N}_i}\phi(\mebwoi\ebwoj)\Big)\Big).
\label{eqn:lem2_7}
\end{eqnarray}
If $(\ebioi)_{i\in \V} \notin \Omega_R(1)$
is true, (\ref{eqn:lem2_7}) is rewritten as
\begin{eqnarray}
&&\hspace{-.5cm} \dot{U}_R\leq - 
\sum_{i\in\V}\sigma_i \Big(k_e\phi(\mebwoi\ewoi)
+k_s\sum_{j \in {\N}_i}\phi(\mebwoi\ebwoj)\Big).
\label{eqn:lem2_8}
\end{eqnarray}
Note that, from the assumption of $\mes\ebwoi > 0$, we have $\sigma_i > 0$.
Since both of the terms $\sum_{i\in \V}\sigma_i\phi(\mebwoi\ewoi)$ and 
$\sum_{i\in \V}\sigma_i
\sum_{j \in {\N}_i}\phi(\mebwoi\ebwoj)$ are never equal to
$0$ under Assumption \ref{ass:2},
the right-hand side of (\ref{eqn:lem2_8}) is strictly negative.
This implies that the trajectories of the estimates $(\ebioi)_{i\in \V}$
converge to the set $\Omega_R(1)$ in finite time.

\section{Proof of Lemma \ref{lem_journal:3}}
\label{app:4}

%
%
Suppose that
$\phi(\mes\ebwoi) < \phi(\mes\ewol) + c$ holds true for some $c > 0$.
Then, from Hoff-man-Wielandt's perturbation theorem \cite{EF}, we have
%
%
\begin{eqnarray}
\!\!&&\left|\lambda_{min}(\sym(\mes\ebwoi)) - \lambda_{min}(\sym(\mes\es))\right|
\leq \left\|\sym(\mes\ebwoi) - \sym(\mes\es)\right\|_F
\nonumber\\
\!\!&&\hspace{3cm}\leq
\|\mes(\ebwoi - \es)\|_F = \|\ebwoi - \es\|_F
< 
\sqrt{2(\phi(\mes\ewol) + c)}.
\nonumber
\end{eqnarray}
This immediately means 
\begin{equation}
\lambda_{min}(\sym(\mes\ebwoi)) \geq \beta := 1 - \sqrt{2(\phi(\mes\ewol) + c)}.
\label{eqn:lem3_1}
\end{equation}

%
%
From Lemma \ref{lem_journal:1} and (\ref{eqn:lem3_1}),
Inequality (\ref{eqn:lem2_7})  is rewritten as
\begin{eqnarray}
&&\dot{U}_R\leq -
\sum_{i\in \V}\Big(
k_e\phi(\mes\ebwoi) - k_e\phi(\mes\ewoi)
\nonumber\\
&&\hspace{3cm}+ \beta\Big(k_e
\phi(\mebwoi\ewoi)+
k_s\sum_{j \in {\N}_i}\phi(\mebwoi\ebwoj)\Big)\Big)
\label{eqn:lem3_2}
\end{eqnarray}
at least after the time $\tau(c)$.
If  
$(\ebioi)_{i\in \V}\in {\mathcal S}^R_2(k)$ holds true, then we have
\begin{eqnarray}
\dot{U}_R\leq -k_e
\sum_{i\in\V}\Big(
\phi(\mes\ebwoi) 
+ \beta\phi(\mebwoi\ewoi)\Big) 
\label{eqn:lem3_3}
\end{eqnarray}
at least after the time $\tau(c)$.
Under Assumption 2, the right-hand side of (\ref{eqn:lem3_3})
is strictly negative.

In terms of $U_p$, from (\ref{eqn:lem2_1}),
if $(\bar{p}_{io_i})_{i\in \V}\in {\mathcal S}^p_2(k)$,
we have
$\dot{U}_p = -\frac{k_e}{2}\sum_{i\in \V}\Big(\|\bar{q}_i\|^2 + \|q_i 
 - \bar{q}_i\|^2\Big)$,
whose right-hand side is strictly negative under Assumption \ref{ass:2}.
These complete the proof.

\section{Proof of Theorem \ref{thm_journal:1}}
\label{app:6}

We first consider evolution of the position estimates 
$(\bar{p}_{io_i})_{i\in \V}$ described by (\ref{eqn:p_update}).
The case not satisfying $k \leq 1/W$
is already proved in Lemma \ref{lem_journal:2} and hence
we consider the case such that
$k \leq 1/W$ is satisfied.
Lemmas \ref{lem_journal:2} and \ref{lem_journal:3} indicate that
$\dot{U}_p < 0$ holds in $\Omega_p(1)\cup {\mathcal S}^p_2(k)$.
Namely, from the inclusion (\ref{eqn:subset2}),
we have $\dot{U}_p < 0$ except for the region $\Omega_p({\varepsilon_{p}})$
if $\dot{U}_p < 0$ holds in the region ${\mathcal S}_3^p(k,\varepsilon_p)$.
If it is true, the trajectories along with (\ref{eqn:p_update})
settle into the set $\Omega_p({\varepsilon_{p}})$ in finite time.
It is thus sufficient to prove that $\dot{U}_p$ is strictly negative
for all $(\bar{p}_{io_i})_{i\in \V}\in {\mathcal S}_3^p(k,\varepsilon_p)$.

Equation (\ref{eqn:lem2_1}) is rewritten as
\begin{eqnarray}
\dot{U}_p &\!\!\!=\!\!\!& -\frac{k_e}{2}\sum_{i\in \V}\Big(-\|{q}_i\|^2 + \|\bar{q}_i\|^2 + (1-\epsilon)\|{q}_i 
 - \bar{q}_i\|^2\Big) - a_p,
\label{eqn:thm1_9}
\end{eqnarray}
where $a_p := \frac{1}{2}\sum_{i\in \V}\Big(k_e\epsilon
\|q_i - \bar{q}_i\|^2
+k_s\sum_{j \in {\N}_i}\|\bar{q}_j - \bar{q}_{i}\|^2\Big)$ is strictly positive under Assumption \ref{ass:2}.
Now, for any $\alpha\in(0,1)$ and $j^*\in \V$, we have
\begin{eqnarray}
\|q_i - \bar{q}_i\|^2 &\!\!\!\geq\!\!\!& \alpha
\|q_i - \bar{q}_{j^*}\|^2
-\frac{\alpha}{1-\alpha} 
\|\bar{q}_{i} - \bar{q}_{j^*}\|^2.
\label{eq9}
\end{eqnarray}
Let $j^*$ be a node satisfying $j^* = \arg\min_{i_0}D(i_0)$
and $G_T^* = (\V, \E_T^*) \in {\mathcal T}(j^*)$ be a graph satisfying
$G_T^* = \arg\min_{G_T \in {\mathcal T}(j^*)}\tilde{D}(G_T)$,
where $D$ and $\tilde{D}$ are defined in (\ref{eqn:defD}).
Then, we obtain
\begin{eqnarray}
\ \ \|\bar{q}_{i} - \bar{q}_{j^*}\|^2
= \Big\|
\sum_{l \in \{0,\cdots, d_{G^*_T}(i)-1\}}(\bar{q}_{v_l(i)} - \bar{q}_{v_{l+1}(i)})
\Big\|^2
\leq d_{G^*_T}(i) 
\sum_{l \in \{0,\cdots, d_{G^*_T}(i)-1\}}
\|\bar{q}_{v_l(i)} - \bar{q}_{v_{l+1}(i)}\|^2,
\nonumber
\end{eqnarray}
where $(v_0(i),\cdots,v_{d_{G^*_T}(i)-1}(i))$
is the path from root $j^*$ to node $i$
along tree $G_T^*$.
Namely, 
\begin{eqnarray}
\sum_{i\in \V} \|\bar{q}_{i} - \bar{q}_{j^*}\|^2
\leq \sum_{i\in \V}d_{G^*_T}(i) 
\sum_{l \in \{0,\cdots, d_{G^*_T}(i)-1\}}
\|\bar{q}_{v_l(i)} - \bar{q}_{v_{l+1}(i)}\|^2.
\label{eq5}
\end{eqnarray}
holds. For any edge $E = (v^1, v^2)$ of $G_T^*$, the coefficient
of $\|\bar{q}_{v^1} - \bar{q}_{v^2}\|^2$
in the right hand side of (\ref{eq5}) is given by 
$\sum_{i\in \V}\delta_{G^*_T}(E;i)d_{G^*_T}(i)$,
which is upper-bounded by $\tilde{D}(G^*_T) = W$.
We thus have
\begin{eqnarray}
\sum_{i\in \V}\|\bar{q}_{i} - \bar{q}_{j^*}\|^2  \leq W
\sum_{E = (v^1,v^2) \in \E_T^*}
\|\bar{q}_{v^1} - \bar{q}_{v^2}\|^2 
\leq W \sum_{i\in \V}\sum_{j\in \N_i}\|\bar{q}_{i} - \bar{q}_{j}\|^2.
\label{eq6}
\end{eqnarray}
The latter inequality of (\ref{eq6}) holds because 
$G^*_T$ is a subgraph of $G_u$.
Since $(\bar{p}_{io_i})_{i\in \V}\in {\mathcal S}_3^p(k,\varepsilon_p)$,
the inclusion $(\bar{p}_{io_i})_{i\in \V} \notin {\mathcal S}^p_2(k)$ holds and 
hence
\begin{equation}
\sum_{i\in \V}\sum_{j\in \N_i}\|\bar{q}_{i} - \bar{q}_{j}\|^2 = 
\sum_{i\in \V}\sum_{j\in \N_i}\|\bar{p}_{wo_i} - \bar{p}_{wo_j}\|^2
\leq 2k\rho_p.
\label{eq7}
\end{equation}
Moreover, the following 
inequality holds from the definition of the average $p^*$.
\begin{eqnarray}
\sum_{i \in \V} \|q_i - \bar{q}_{j^*}\|^2 = 
\sum_{i \in \V} \|p_{wo_i} - \bar{p}_{wo_{j^*}}\|^2
\geq
\sum_{i \in \V} \|p_{wo_i} - p^*\|^2 = 2\rho_p
\label{eq11}
\end{eqnarray}
From (\ref{eq9}), (\ref{eq6}), (\ref{eq7}) and (\ref{eq11}),
equation (\ref{eqn:thm1_9}) is rewritten as
\begin{eqnarray}
\dot{U}_p &\!\!\!\leq\!\!\!& {k_e}\left\{-\frac{1}{2}\Big(\sum_{i\in \V}\|\bar{q}_i\|^2\Big) + 
\Big(1- (1-\epsilon)\Big(\alpha - 
\frac{kW\alpha}{1-\alpha}\Big)\Big)\rho_p\right\} - a_p.
\label{eqn:thm1_11}
\end{eqnarray}
If $(\bar{p}_{io_i})_{i\in \V}\in {\mathcal S}_3^p(k,\varepsilon_p)$,
then $(\bar{p}_{io_i})_{i\in \V}\notin \Omega_p(\varepsilon_p)$
and hence (\ref{eqn:thm1_11}) is rewritten as 
\begin{eqnarray}
\dot{U}_p &\!\!\!\leq\!\!\!& {k_e}\left\{
(1-\epsilon)\Big(\alpha - 
\frac{kW\alpha}{1-\alpha}-
\left(1 - \sqrt{kW}\right)^2
\Big)\right\}\rho_p - a_p.
\nonumber
\end{eqnarray}
Under the assumption that $k\leq 1/W$, 
the inequality $\alpha - \frac{kW\alpha}{1-\alpha}\leq
\left(1 - \sqrt{kW}\right)^2$ holds for any $\alpha \in (0, 1)$
and hence $\dot{U}_p \leq -a_p < 0$.
This completes the proof of the former half of the theorem.

We next consider the evolution of the orientation estimates
$(\ebioi)_{i\in \V}$ described by (\ref{eqn:ori_update}).
The case not satisfying $k \leq \beta/W$ or $\beta > 0$
is already proved in Lemma \ref{lem_journal:2}.
We thus consider the case such that
$k \leq \beta/W$ and $\beta > 0$ hold.
We first note that the set ${\mathcal S}= \{(\ebioi)_{i\in \V}|\ \mebioi\esi > 0\ {\forall 
i}\in \V\}$ is a positively invariant set
from Lemma \ref{lem_journal:1} and hence trajectories of 
$(\ebioi)_{i\in \V}$ starting from ${\mathcal S}$
never gets out of ${\mathcal S}$.
Lemmas \ref{lem_journal:2} and \ref{lem_journal:3} also prove that,
in the region ${\mathcal S}$,
$\dot{U}_R < 0$ holds if $(\ebioi)_{i\in \V} \in ({\mathcal S}\setminus \Omega_R(1))
\cup {\mathcal S}^R_2(k)$
at least after the time $\tau(c)$.
Namely, as long as $\dot{U}_R < 0$ is true in the region ${\mathcal S}_3^R(k,\varepsilon_R)$,
the inequality $\dot{U}_R < 0$ holds 
except for the region $\Omega_R({\varepsilon_{R}})$
from the inclusion (\ref{eqn:subset2}), which
means the trajectories along with (\ref{eqn:ori_update})
settle into the set $\Omega_R({\varepsilon_{R}})$ in finite time.
It is thus sufficient to prove that 
$\dot{U}_R$ is strictly negative for all
$(\ebioi)_{i\in \V}\in {\mathcal S}_3^R(k,\varepsilon_R)$.

We first notice that if we define
$a_R := \beta\sum_{i\in\V}\Big(k_e\epsilon\phi(\mebwoi\ewoi) 
+k_s\sum_{j \in {\N}_i}\phi(\mebwoi\ebwoj)\Big)$,
$a_R$ is strictly positive under Assumption \ref{ass:2}.
Using the parameter $a_R$, (\ref{eqn:lem3_2}) is rewritten as
\begin{eqnarray}
\dot{U}_R&\!\!\!\leq\!\!\!& -k_e
\sum_{i\in\V}\Big(
- \phi(\mes\ewoi) + 
\phi(\mes\ebwoi) 
+ \beta(1 - \epsilon)\phi(\mebwoi\ewoi)
\Big)-a_R.
\label{eqn:thm1_2}
\end{eqnarray}
We thus consider the former three terms of the right hand side
of Inequality (\ref{eqn:thm1_2}).
We first have 
\begin{eqnarray}
\phi(\mebwoi\ewoi) &\!\!\!\geq\!\!\!& \alpha\phi(\mebwojs\ewoi) 
-\frac{\alpha}{1-\alpha} \phi(\mebwojs\ebwoi)
\label{eq4}
\end{eqnarray}
for any $\alpha\in(0,1)$ and $j^* \in \V$.
Again, let $j^*$ be a node satisfying $j^* = \arg\min_{i_0}D(i_0)$
and $G^*_T = (\V, \E_T^*) \in {\mathcal T}(j^*)$ be a graph satisfying
$G^*_T = \arg\min_{G_T \in {\mathcal T}(j^*)}\tilde{D}(G_T)$.
Then, the inequality
\begin{eqnarray}
\phi(\mebwojs\ebwoi) \leq d_{G^*_T}(i) 
\sum_{l \in \{0,\cdots, d_{G^*_T}(i)-1\}}
\phi(e^{-\hat{\bar{\xi}}\bar{\theta}_{wo_{v_l(i)}}}e^{\hat{\bar{\xi}}\bar{\theta}_{wo_{v_{l+1}(i)}}})
\end{eqnarray}
holds from the definition of the energy function $\phi$ and hence
\begin{eqnarray}
\sum_{i\in \V}\phi(\mebwojs\ebwoi) \leq \sum_{i\in \V}d_{G_T^*}(i) 
\sum_{l \in \{0,\cdots, d_{G^*_T}(i)-1\}}
\phi(e^{-\hat{\bar{\xi}}\bar{\theta}_{wo_{v_l(i)}}}e^{\hat{\bar{\xi}}\bar{\theta}_{wo_{v_{l+1}(i)}}}).
\label{eq1}
\end{eqnarray}
Similarly to the case of position estimates,
(\ref{eq1}) is rewritten as
\begin{eqnarray}
\sum_{i\in \V}\phi(\mebwojs\ebwoi) \leq W
\sum_{E = (v^1,v^2) \in \E_T}
\phi(e^{-\hat{\bar{\xi}}\bar{\theta}_{wo_{v^1}}}e^{\hat{\bar{\xi}}\bar{\theta}_{wo_{v^2}}})
\leq W \sum_{i\in \V}\sum_{j\in \N_i}\phi(\mebwoi\ebwoj).
\label{eq2}
\end{eqnarray}
Since $(\ebioi)_{i\in \V}\in {\mathcal S}_3^R(k,\varepsilon_R)$,
the inclusion $(\ebioi)_{i\in \V} \notin {\mathcal S}^R_2(k)$ holds and hence
\begin{equation}
\sum_{i\in \V}\sum_{j\in \N_i}\phi(\mebwoi\ebwoj)
\leq \frac{k\rho_R}{\beta}
\label{eq3}
\end{equation}
is true.
We next focus on $\phi(\mebwojs\ewoi)$ in (\ref{eq4}).
From the definition of the average $\es$ (\ref{eqn:euclidean_mean}), 
\begin{eqnarray}
\sum_{i\in \V} \phi(\mebwojs\ewoi) \geq 
\sum_{i\in \V} \phi(\mes\ewoi) = \rho_R
\label{eq10}
\end{eqnarray}
holds for any $\ebwoj \in SO(3)$.
Substituting (\ref{eq4}), (\ref{eq2}), (\ref{eq3})
and (\ref{eq10}) into inequality (\ref{eqn:thm1_2}) yields
\begin{eqnarray}
\dot{U}_R\leq -
k_e\Big\{\Big(\sum_{i\in\V}
\phi(\mes\ebwoi)\Big) 
- 
\Big(1 - (1 - \epsilon)\Big(\alpha\beta - 
\frac{kW\alpha}{1-\alpha}\Big)
\Big)\rho_R\Big\}-a_R.
\label{eqn:thm1_5}
\end{eqnarray}
If $(\ebioi)_{i\in \V}\in {\mathcal S}^R_3(k,\varepsilon_R)$,
then $(\ebioi)_{i\in \V}\notin \Omega(\varepsilon_R)$ and hence
(\ref{eqn:thm1_5}) is rewritten by
\begin{eqnarray}
\dot{U}_R\leq 
k_e
(1 - \epsilon)\Big(\alpha\beta - 
\frac{kW\alpha}{1-\alpha} - \left(\sqrt{\beta}-\sqrt{kW}\right)^2\Big)
\rho_R
-a_R.
\label{eqn:thm1_6}
\end{eqnarray}
Let us now notice that, under $k \leq \beta/W$,
$\alpha\beta - 
\frac{kW\alpha}{1-\alpha}\leq
\left(\sqrt{\beta}-\sqrt{kW}\right)^2$
holds true for any $\alpha \in (0, 1)$ and hence
$\dot{U}_R \leq - a_R < 0$.
This completes the proof of the latter half of the theorem.


\section{Proof of Lemma \ref{lem:8}}
\label{app:8}
Suppose that $S(t)$ moves from $S(t) = S$ to $S(t+t_{\Delta}) 
= S + \Delta S$.
We also describe $\es(t+t_{\Delta})$ as
$\es(t+t_{\Delta}) = \es(t) + \Delta \es,\ \Delta \es := \proj(S + \Delta 
 S) - \es(t)$.
Then, if $\|\Delta S\|_F^2\leq \bar{s}$ is true for some $\bar{s}$, 
it is proved in \cite{S93} that under (\ref{ac2})
\begin{equation}
\sup_{\Delta S} \|\Delta \es\|^2_F \leq b := 4 (1-(1-\mu^2(\gamma)\bar{s}/2)^{1/2}) 
 < \mu^2(\gamma)\bar{s}.
\label{eqn:lem8_1}
\end{equation}
The hypothesis of $\|\Delta \es\|^2_F \geq \mu^2(\gamma)\|\Delta S\|_F^2$
contradicts (\ref{eqn:lem8_1}) and hence
$\|\Delta \es\|^2_F < \mu^2(\gamma) \|\Delta S\|^2_F$.
From continuous differentiability of the average $\es$,
we also get 
\begin{eqnarray}
\|\omega^{b,*}\|^2 &\!\!\!=\!\!\!& \|\des\|_F^2
= \|\lim_{t_{\Delta}\rightarrow 0}\Delta \es/t_{\Delta}\|_F^2
= \lim_{t_{\Delta}\rightarrow 0}
\|\Delta \es/t_{\Delta}\|^2_F 
\nonumber\\
&\!\!\!<\!\!\!& \lim_{t_{\Delta}\rightarrow 0} 
\mu^2(\gamma) \|\Delta S/t_{\Delta}\|^2_F
= \mu^2(\gamma)\|\lim_{t_{\Delta}\rightarrow 0} \Delta S/t_{\Delta}\|^2_F
= \mu^2(\gamma)\|\dot{S}\|^2_F .
\nonumber
\end{eqnarray}
It is clear that $n\|\dot{S}(t)\|_F^2 \leq \|w(t)\|^2$ holds and hence
(\ref{ine}) is true.

\section{Proof of Theorem \ref{thm_journal:3}}
\label{app:20}

We first consider the statement in terms of the position estimates.
The time derivative of $U_p$ along the trajectories of the system $\Sigma_{track}$
is given by
\begin{eqnarray}
\dot{U}_p = \sum_{i\in \V}\bar{q}_i^T(\ebwoi v_{ue} - \es v^{b,*}).
\label{eq20}
\end{eqnarray}
From Lemma \ref{lem_journal:2}, we obtain
\begin{eqnarray}
\sum_{i\in \V}\bar{q}_i^T \ebwoi v_{ue} <
\frac{k_e}{2}\sum_{i\in \V}(\|q_i\|^2 - \|\bar{q}_i\|^2) \leq
k_e \rho_p' - \frac{k_e}{2}\sum_{i\in \V}\|\bar{q}_i\|^2
\label{eq21}
\end{eqnarray}
under Assumptions \ref{ass:1} and \ref{ass:3}.
In addition, under Assumption \ref{ass:3}, the second term of (\ref{eq20}) satisfies
\begin{eqnarray}
\hspace{-.8cm}&&- \sum_{i\in \V}\bar{q}_i^T\es v^{b,*} = \frac{1}{2}\sum_{i\in \V}\Big(
-\|\bar{q}_i - \es v^{b,*}\|^2 + \|\bar{q}\i\|^2 + \|\es v^{b,*}\|^2 \Big)
\leq \frac{1}{2}\sum_{i\in \V}\Big(\|\bar{q}_i\|^2 + \|v^{b,*}\|^2 \Big)
\nonumber\\
\hspace{-.8cm}&&\hspace{1cm}\leq \frac{1}{2}\Big(\sum_{i\in \V}\|\bar{q}_i\|^2\Big) + \frac{n}{2}\|v^{b,*}\|^2 
\leq \frac{1}{2}\Big(\sum_{i\in \V}\|\bar{q}_i\|^2\Big) + \frac{1}{2}\|w_p\|^2 
\leq \frac{1}{2}\Big(\sum_{i\in \V}\|\bar{q}_i\|^2\Big) + \frac{1}{2}\bar{w}_p^2.
\label{eq22}
\end{eqnarray}
Substituting (\ref{eq21}) and (\ref{eq22}) into (\ref{eq20}) yields
\begin{eqnarray}
\dot{U}_p < \sum_{i\in \V}
k_e \rho_p' - (k_e-1)\Big(\frac{1}{2}\sum_{i\in \V}\|\bar{q}_i\|^2\Big)
+ \frac{1}{2}\bar{w}_p^2.
\label{eq23}
\end{eqnarray}
Now, we see from (\ref{eq23}) and the definition of $\varepsilon_p'$
that $\dot{U}_p < 0$ as long as $(\bar{p}_{io_i})_{i\in \V}\notin \Omega_p'(\varepsilon_p')$.
Hence, the function $U_p(t)$ 
is monotonically strictly decreasing in the region
and there exists a finite time $T$ such that
$(\bar{p}_{io_i})_{i\in \V} \in \Omega_p'(\varepsilon_p')\ {\forall t}\geq T$.

We next consider the evolution of orientation estimates.
The time derivative of $U_R$ along the trajectories of the system $\Sigma_{track}$
is given by
\begin{eqnarray}
\dot{U}_R =
2\sum_{i\in \V}e^T_R(\mes \ebwoi)(\omega_{uei}-\omega^{b,*}).
\label{eqn:thm3_1}
\end{eqnarray}
From Lemma \ref{lem_journal:2}, we obtain
\begin{eqnarray}
2\sum_{i\in \V}e^T_R(\mes \ebwoi)\omega_{uei} &\!\!\!<\!\!\!& -k_e \sum_{i\in \V}\Big(\phi(\mes \ebwoi) - \phi(\mes\ewoi)\Big)
\nonumber\\
&\!\!\!\leq\!\!\!& k_e\rho'_R - k_e \sum_{i\in \V}\phi(\mes \ebwoi)
\label{eqn:thm3_2}
\end{eqnarray}
under the assumption of $(\ebioi(t))_{i\in \V}\in {\mathcal S}\ 
 {\forall t}\geq 0$ and Assumptions \ref{ass:1} and \ref{ass:3}.
We also have
\begin{eqnarray}
\hspace{-1cm}&&  
-2\sum_{i\in \V}e^T_R(\mes \ebwoi)\omega^{b,*}\nonumber\\
\hspace{-1cm}&& 
= -\sum_{i\in \V}\Big(\frac{1}{\mu^2(\gamma)}\|\mu^2(\gamma)e_R(\mes \ebwoi) + \omega^{b,*}\|^2
- \mu^2(\gamma)\|e_R(\mes \ebwoi)\|^2\Big) + \frac{n}{\mu^2(\gamma)} \|\omega^{b,*}\|^2
\nonumber\\
\hspace{-1cm}&&
\leq \sum_{i\in \V}\Big(\mu^2(\gamma)\|e_R(\mes \ebwoi)\|^2\Big) 
+ \frac{n}{\mu^2(\gamma)}\|\omega^{b,*}\|^2
\leq \sum_{i\in \V}\Big(\mu^2(\gamma)\|e_R(\mes \ebwoi)\|^2\Big) + \bar{w}_R^2,
\label{eqn:thm3_3}
\end{eqnarray}
where the last inequality holds from Lemma \ref{lem:8}.
Since $\|e_R(\mes \ebwoi)\|^2 \leq \phi(\mes \ebwoi)$ is true,
substituting (\ref{eqn:thm3_2}) and (\ref{eqn:thm3_3}) 
into (\ref{eqn:thm3_1}) yields
\begin{eqnarray}
 \dot{U}_R < k_e\rho'_R + \bar{w}_R^2-(k_e-\mu^2(\gamma))\sum_{i\in \V}\phi(\mes \ebwoi).
\nonumber
\end{eqnarray}
Now, if $(\ebioi)_{i\in \V} \notin \Omega_R'(\varepsilon_R')$ holds, then $\dot{U}_R < 0$.
Hence, the function $U_R(t)$ 
is monotonically strictly decreasing in the region
and 
this completes the proof.

\nocite{*}

\end{document}